\definecolor{DarkGray}{rgb}{0.1,0.1,0.5}
\def\place #1#2#3{\mspace{#2}\makebox[0pt]{\raisebox{#3}{#1}}\mspace{-#2}}	
\newcommand{\ket}[1]{{|#1\rangle}}
\newcommand{\braket}[2]{{\langle#1|#2\rangle}}
\newcommand{\abs}[1]{{\lvert #1\rvert}}	
\def\cP {{\mathcal P}}
\newcommand{\identity}{\ensuremath{\boldsymbol{1}}} 
\newcounter{sprows}
\newlength{\spheight}
\newlength{\spraise}
\newlength{\commentslength}
\newcommand{\rem}[1]{}
\newtheorem{theorem}{Theorem}
\newtheorem{claim}[theorem]{Claim}
\newtheorem{proposition}[theorem]{Proposition}
\newtheorem{definition}[theorem]{Definition}
\newfont{\subsubsecfnt}{ptmri8t at 11pt}
\renewcommand{\subparagraph}[1]{\smallskip{\subsubsecfnt #1.}}
\newcommand{\eqnref}[1]{\hyperref[#1]{{(\ref*{#1})}}}
\newcommand{\thmref}[1]{\hyperref[#1]{{Theorem~\ref*{#1}}}}
\newcommand{\lemref}[1]{\hyperref[#1]{{Lemma~\ref*{#1}}}}
\newcommand{\corref}[1]{\hyperref[#1]{{Corollary~\ref*{#1}}}}
\newcommand{\defref}[1]{\hyperref[#1]{{Definition~\ref*{#1}}}}
\newcommand{\secref}[1]{\hyperref[#1]{{Sec.~\ref*{#1}}}}
\newcommand{\figref}[1]{\hyperref[#1]{{Fig.~\ref*{#1}}}}  
\newcommand{\tabref}[1]{\hyperref[#1]{{Table~\ref*{#1}}}}
\newcommand{\remref}[1]{\hyperref[#1]{{Remark~\ref*{#1}}}}
\newcommand{\appref}[1]{\hyperref[#1]{{Appendix~\ref*{#1}}}}
\newcommand{\claimref}[1]{\hyperref[#1]{{Claim~\ref*{#1}}}}
\newcommand{\factref}[1]{\hyperref[#1]{{Fact~\ref*{#1}}}}
\newcommand{\propref}[1]{\hyperref[#1]{{Proposition~\ref*{#1}}}}
\newcommand{\exampleref}[1]{\hyperref[#1]{{Example~\ref*{#1}}}}
\newcommand{\conjref}[1]{\hyperref[#1]{{Conjecture~\ref*{#1}}}}
\def\COLOR{}
\definecolor{Cayenne}{rgb}{0.5,0,0}
\definecolor{Midnight}{rgb}{0,0,0.5}
\definecolor{Plum}{rgb}{0.5,0,0.5}
\definecolor{Teal}{rgb}{0,0.5,0.5}
\definecolor{Clover}{rgb}{0,0.5,0}
\definecolor{Maroon}{rgb}{0.5,0,0.25}
\definecolor{Ocean}{rgb}{0,0.25,0.5}
\definecolor{Tangerine}{rgb}{1,0.5,0}
\definecolor{Strawberry}{rgb}{1,0,0.5}
\definecolor{Fern}{rgb}{0.25,0.5,0}
\definecolor{Aqua}{rgb}{0,0.5,1}
\definecolor{Moss}{rgb}{0,0.5,0.25}
\definecolor{Mocha}{rgb}{0.5,0.25,0}
\definecolor{Lemon}{rgb}{1,1,0}
\definecolor{Asparagus}{rgb}{0.5,0.5,0}
\definecolor{Grape}{rgb}{0.5,0,1}
\definecolor{Iron}{rgb}{.3,.3,.3}
\definecolor{Steel}{rgb}{.4,.4,.4}
\def\llbracket{{[\![}}
\def\rrbracket{{]\!]}}
\let\save@mathaccent\mathaccent
\newcommand*\if@single[3]{%
  \setbox0\hbox{${\mathaccent"0362{#1}}^H$}%
  \setbox2\hbox{${\mathaccent"0362{\kern0pt#1}}^H$}%
  \ifdim\ht0=\ht2 #3\else #2\fi
  }
\newcommand*\rel@kern[1]{\kern#1\dimexpr\macc@kerna}
\newcommand*\widebar[1]{\@ifnextchar^{{\wide@bar{#1}{0}}}{\wide@bar{#1}{1}}}
\newcommand*\wide@bar[2]{\if@single{#1}{\wide@bar@{#1}{#2}{1}}{\wide@bar@{#1}{#2}{2}}}
\newcommand*\wide@bar@[3]{%
  \begingroup
  \def\mathaccent##1##2{%
    \let\mathaccent\save@mathaccent
    \if#32 \let\macc@nucleus\first@char \fi
    \setbox\z@\hbox{$\macc@style{\macc@nucleus}_{}$}%
    \setbox\tw@\hbox{$\macc@style{\macc@nucleus}{}_{}$}%
    \dimen@\wd\tw@
    \advance\dimen@-\wd\z@
    \divide\dimen@ 3
    \@tempdima\wd\tw@
    \advance\@tempdima-\scriptspace
    \divide\@tempdima 10
    \advance\dimen@-\@tempdima
    \ifdim\dimen@>\z@ \dimen@0pt\fi
    \rel@kern{0.6}\kern-\dimen@
    \if#31
      \overline{\rel@kern{-0.6}\kern\dimen@\macc@nucleus\rel@kern{0.4}\kern\dimen@}%
      \advance\dimen@0.4\dimexpr\macc@kerna
      \let\final@kern#2%
      \ifdim\dimen@<\z@ \let\final@kern1\fi
      \if\final@kern1 \kern-\dimen@\fi
    \else
      \overline{\rel@kern{-0.6}\kern\dimen@#1}%
    \fi
  }%
  \macc@depth\@ne
  \let\math@bgroup\@empty \let\math@egroup\macc@set@skewchar
  \mathsurround\z@ \frozen@everymath{\mathgroup\macc@group\relax}%
  \macc@set@skewchar\relax
  \let\mathaccentV\macc@nested@a
  \if#31
    \macc@nested@a\relax111{#1}%
  \else
    \def\gobble@till@marker##1\endmarker{}%
    \futurelet\first@char\gobble@till@marker#1\endmarker
    \ifcat\noexpand\first@char A\else
      \def\first@char{}%
    \fi
    \macc@nested@a\relax111{\first@char}%
  \fi
  \endgroup
}
\DeclareMathOperator{\Aut}{Aut}
\DeclareMathOperator{\even}{even}
\DeclareMathOperator{\odd}{odd}
\DeclareMathOperator{\AGL}{AGL}
\DeclareMathOperator{\GL}{GL}
\newcommand{\Z}{\mathbb{Z}}
\def\numgenerators {{\color{black}R}}
\def\numXgenerators {{\color{black}r_X}}
\def\numZgenerators {{\color{black}r_Z}}
\def\hammingparameter {{\color{black}r}}	
\def\codeparameter {{\color{black}p}}  
\newtheorem{open question}{Open Problem}
\begin{document}
\title{Short Shor-style syndrome sequences}

\author{Nicolas Delfosse}
\author{Ben W. Reichardt}
\affiliation{Microsoft Quantum and Microsoft Research, Redmond, WA 98052, USA}


\begin{abstract}
We optimize fault-tolerant quantum error correction to reduce the number of syndrome bit measurements.  
Speeding up error correction will also speed up an encoded quantum computation, and should reduce its effective error rate.  
We give both code-specific and general methods, using a variety of techniques and in a variety of settings.  
We design new quantum error-correcting codes specifically for efficient error correction, e.g., allowing single-shot error correction.  
For codes with multiple logical qubits, we give methods for combining error correction with partial logical measurements.  
There are tradeoffs in choosing a code and error-correction technique.  
While to date most work has concentrated on optimizing the syndrome-extraction procedure, we show that there are also substantial benefits to optimizing how the measured syndromes are chosen and used.

As an example, we design single-shot measurement sequences for fault-tolerant quantum error correction with the $16$-qubit extended Hamming code.  Our scheme uses $10$ syndrome bit measurements, compared to $40$ measurements with the Shor scheme. We design single-shot logical measurements as well: any logical $Z$ measurement can be made together with fault-tolerant error correction using only $11$ measurements. For comparison, using the Shor scheme a basic implementation of such a non-destructive logical measurement uses $63$ measurements.

We also offer ten open problems, the solutions of which could lead to substantial improvements of fault-tolerant error correction. 
\end{abstract}

\maketitle
\fi

\vfuzz2pt 

\section{Introduction}

In quantum error correction, substantial work has gone into devising ways for measuring code stabilizers efficiently and fault tolerantly.  Less work has gone into how to use those syndrome bits efficiently.  That is, how can error correction be performed using as few as possible stabilizer measurements, in the case that faults may occur during stabilizer measurement and when the syndrome bits themselves may be faulty?  Which stabilizers should be measured, how many times, and in what order? 
Since fault-tolerant quantum error correction is so challenging to implement, it is important to optimize it.  

Figure~\ref{f:3bitrepetitionnonft} gives a toy example to illustrate the problem.  
The issue of choosing which stabilizers to measure does not typically arise for topological codes~\cite{DennisKitaevLandahlPreskill01topological, BombinMartindelgado06colorcode, FowlerMariantoniMartinisCleland12surfacecodes}, because then the measured stabilizers are chosen based on geometry, and all are measured, either in parallel or close to it.  But for block codes, there are many options.  
Shor's foundational work~\cite{Shor96}, for example, suggests repeating full syndrome extraction $\Omega(d^2)$ times in a row, for distance~$d$.  Bomb{\' i}n~\cite{Bombin15singleshot} has shown that for some specific, highly structured codes, ``single-shot" error correction is possible, meaning each stabilizer generator is measured only once.  
Delfosse et~al.~\cite{DelfosseReichardtSvore20singleshot} have studied fault tolerant error correction for high-distance codes, and show that $O(d \log d)$ stabilizer measurements suffice for any code with distance $d \geq n^\alpha$ for a constant $\alpha > 0$.  In fact, in some cases, the number of stabilizer measurements can be substantially ``sub-single-shot": exponentially fewer measurements are needed than the number of parity checks. 

\begin{figure}[b]
\centering
\vspace{-.1cm}  
\subfigure[Not fault tolerant$\!\!$]{\includegraphics[scale=.769]{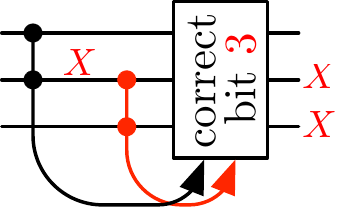}}
\hspace{.2cm}
\subfigure[Fault tolerant]{\includegraphics[scale=.769]{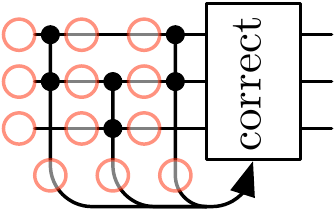} \hspace{.1cm} \raisebox{.3cm}{\includegraphics[scale=.615]{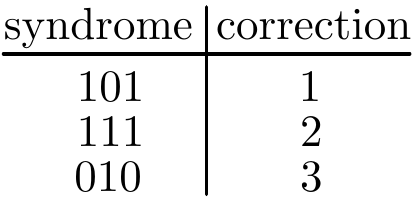}}}
\caption{For the three-bit repetition code $\{000, 111\}$, (a) it is not fault tolerant to correct errors based on the parities $1 \oplus 2$ and $2 \oplus 3$, because an internal fault on bit $2$ can be confused with an input error on bit~$3$.  (b) With the $1 \oplus 2$ parity measurement repeated, there are fault-tolerant correction rules: for at most one fault in the highlighted locations, any input error is corrected, and any internal fault results in a final error of weight zero or one.  
The third measurement can also be made adaptively, conditioned on at least one of the first two syndrome bits being nontrivial.  
} \label{f:3bitrepetitionnonft}
\end{figure}

Other research has focused on low-distance codes, as we will here.  
In an under-appreciated paper, Zalka~\cite{Zalka97} studied adaptive Shor-style error correction for the $\llbracket 7,1,3 \rrbracket$ Steane code. 
For $X$ error correction, Zalka extracts between four and eight $Z$ syndrome bits.  
(Zalka also considers applying multiple logical gates between error-correction steps, or even partial error-correction steps.)  
Using a technique very different from Shor-style error correction, 
Steane measures all $Z$ stabilizers simultaneously, and if the result is nontrivial measures an additional $\rho - 1$ full syndromes, where $\rho$ is optimized for each code, for example ranging from $\rho = 3$ for the $\llbracket 7,1,3 \rrbracket$ Steane code to $\rho = 4$ for the $\llbracket 23,1,7 \rrbracket$ Golay code~\cite[Table~I]{Steane03}.  

We, too, focus on small codes with distance $d \leq 7$.  Such codes could be practical for near-term quantum devices.  
We show, for example, that for CSS codes, mixing $X$ and $Z$ error correction can be more efficient than running them separately.  
For the $\llbracket 7,1,3 \rrbracket$ code, e.g., seven stabilizer measurements suffice for $X$ and $Z$ error correction together, versus ten measurements running them separately (\figref{f:steane713comparison} and \propref{t:hammingcodesnonadaptiveerrorcorrection}).  
Shor's method requires up to $24$ measurements, in comparison.  
Figure~\ref{f:codecomparison} shows more examples.  

\begin{figure}
\centering
\includegraphics[scale=.8]{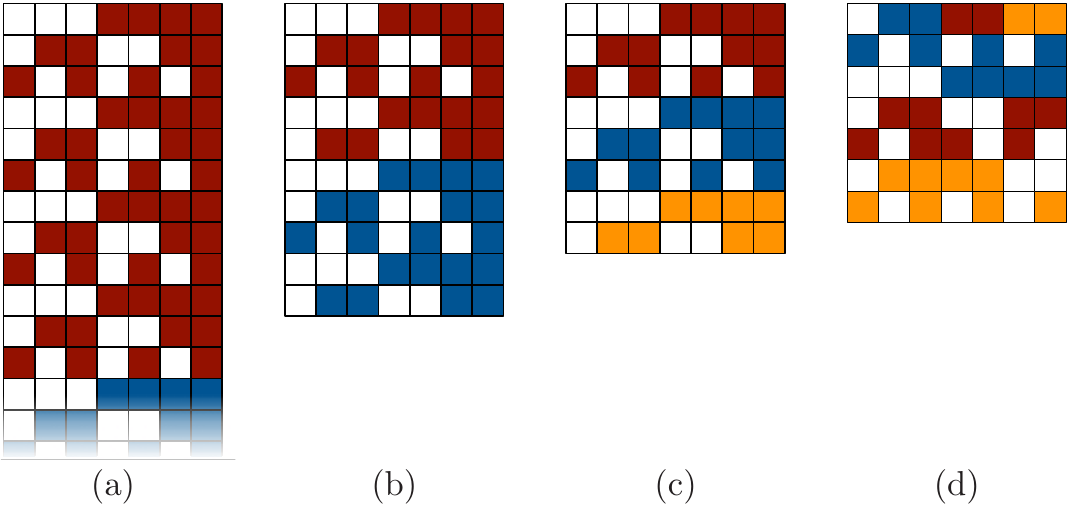}
\caption{Stabilizer measurement sequences for fault-tolerant error correction for the seven-qubit Steane code.  Here red represents Pauli~$Z$, blue is $X$ and orange is~$Y$; for example the first measurement in each sequence is of $IIIZZZ$.  (a)~As a baseline, Shor's method requires up to $24$ stabilizer measurements.  (b) For a distance-three CSS code with $\numgenerators$ independent stabilizers, $2\numgenerators - 2$ measurements suffice (\thmref{t:distancethreenonadaptive}).  (c) If the code is additionally self-dual and $Y$ measurements are allowed, then $3\numgenerators/2 - 1$ measurements suffice (\propref{t:hammingcodesnonadaptiveerrorcorrection}).  (d) Finally, for the seven-qubit code, if we allow measurements mixing $X$, $Y$ and $Z$ operators, then seven stabilizer measurements are enough.} \label{f:steane713comparison}
\end{figure}

\begin{figure}[t]
\centering
\includegraphics[scale=.9]{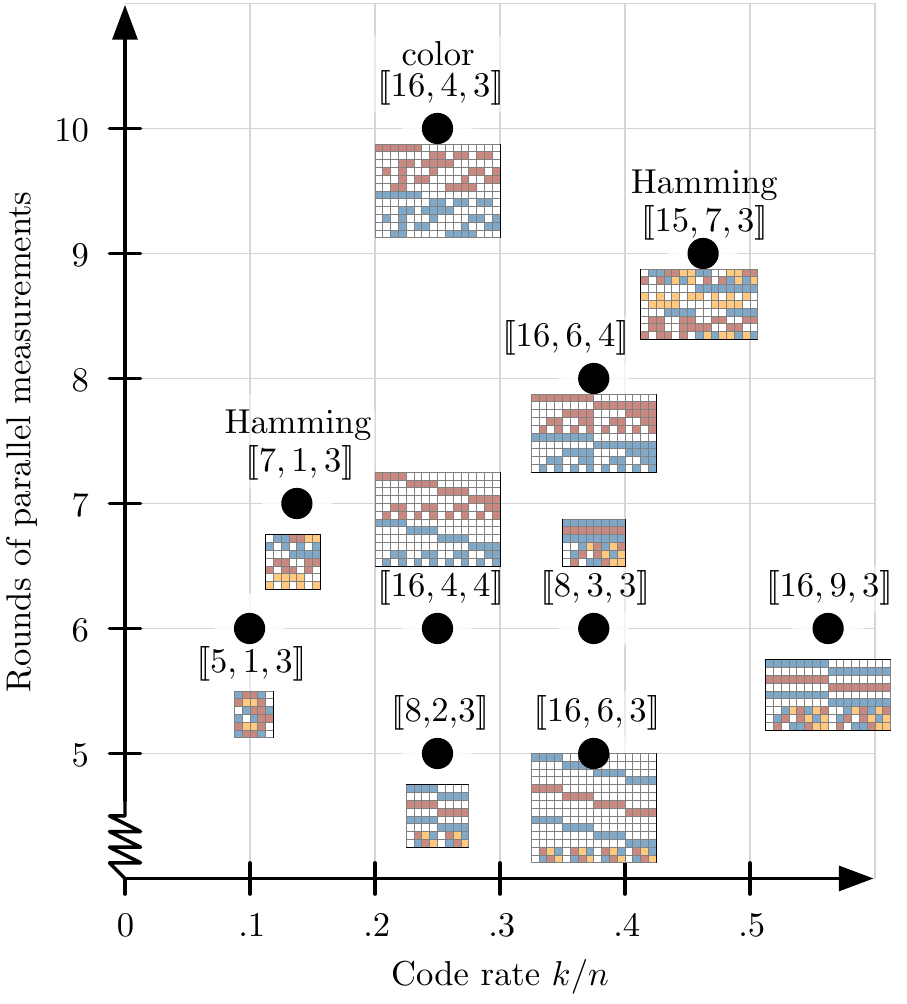}
\caption{A comparison of some of the small codes for which we find distance-\emph{three} fault-tolerant stabilizer measurement sequences.  For each code we have given the shortest known stabilizer measurement sequence, following the $X$, $Y$, $Z$ color convention of \figref{f:steane713comparison}.
(Distance-four fault-tolerant error correction for the $\llbracket 16,4,4 \rrbracket$ and $\llbracket 16,6,4 \rrbracket$ codes is considered in \secref{s:generalizedcodes}.)  
} \label{f:codecomparison}
\end{figure}

We study both nonadaptive error correction and adaptive error correction, in which the stabilizers you choose to measure can depend on previous measurement results.  
Adaptive measurements allow for significant improvements; see \figref{f:nonadaptiveadaptive}.  
We devise new quantum error-correcting codes specifically for efficient, single-shot error correction.  

\begin{figure*}
\begin{tabular}{c@{$\qquad$}c@{$\qquad$}c@{\hskip10pt}c}
\hline \hline
 & Available Pauli & \multicolumn{2}{c}{Length of our measurement sequence} \\
Code type ($d = 3$) & measurements & Nonadaptive case & Adaptive case \\ 
\hline
CSS code & All-$X$, all-$Z$ & $2 \numgenerators - 2$ & $\numgenerators$ to $2 \numgenerators - 2$ \hspace{0cm} (\secref{s:distance3csscodesadaptive}) \\
Self-dual CSS code & All-$X$, all-$Z$ & & $\numgenerators$ to $\tfrac32 \numgenerators$ \hspace{.5cm} (\claimref{t:selfdualcssadaptive}) \\
Stabilizer code & Arbitrary & $2 \numgenerators$ & $\numgenerators$ to $2 \numgenerators$ \hspace{.3cm} (\thmref{t:distancethreenoncss}) \\
\hline \hline
\end{tabular}
\caption{Summary of some of our general results for distance-three stabilizer codes, $\llbracket n, n - \numgenerators, 3 \rrbracket$.   
$\numgenerators$ is the number of stabilizer generators.  
By all-$X$, all-$Z$ measurements, we mean that the algorithm measures only operators that are tensor products of $X$ and $I$, or of $Z$ and $I$.  
In comparison, Shor's adaptive fault-tolerant measurement sequence has length $2 \numgenerators$ to $4 \numgenerators$ in all cases.} \label{f:nonadaptiveadaptive}
\end{figure*}

For example, for the standard $\llbracket 15,7,3 \rrbracket$ Hamming code, distance-three fault-tolerant error correction can be done with nine measurements of mixed $X$, $Y$ and $Z$ stabilizers (of weights eight or $12$), or $X$ and $Z$ error correction can be run separately with $14$ weight-eight measurements (\propref{t:hammingcodesnonadaptiveerrorcorrection}).  
(Shor's method uses $32$ measurements.)

We also consider fault-tolerant logical measurements.  
A standard way to measure logical $Z$s with a CSS code is to measure all the qubits in the computational basis.  
The logical measurement outcomes can then be obtained by correcting the measured noisy bit string.
This method destructively measures all the logical qubits in a code block.  
Alternatively, one can measure a subset of logical qubits by moving them to an ancilla block which is then measured destructively~\cite{Gottesman2013constantoverhead, NautrupFriisBriegel2017ftinterface, Breuckmann2017hyperbolic}, or using a Steane-type ancillary block~\cite{ZhengLaiBrunKwek20constant}.  
Here, we design measurement sequences that perform logical measurements without any extra ancillary block, 
eliminating the time and space required for ancilla preparation.  
Strikingly, combining error correction with logical qubit measurements can be substantially more efficient than running these operations separately.  
For example, for the $\llbracket 15,7,3 \rrbracket$ Hamming code, any weight-five logical $Z$ operator can be measured fault tolerantly with five measurements, while combining the logical $Z$ measurement with $X$ error correction needs only six measurements---one \emph{fewer} than $X$ error correction alone.  
(See \figref{f:1573measurementsforlogicalmeasurement}.)  

To further illustrate the savings obtained in this work, consider the $\llbracket 16, 6, 4 \rrbracket$ extended Hamming code. We prove that distance-three fault-tolerant error correction is possible with this code using 10 stabilizer measurements: 5 $X$ and 5 $Z$ measurements. A popular alternative is the Shor scheme which requires up to 40 stabilizer measurements (four rounds of five measurements for each error type) \cite{Shor96}.
Additionally, we design measurement sequences that allow for fault-tolerant logical measurement of any $X$ or $Z$ logical operator combined with fault-tolerant error correction with only 11 measurements.
For comparison, one could perform a logical $Z$ measurement by measuring a representative of the logical operator three times. To obtain the right outcome after a majority vote, $X$ error correction must be performed between 
logical measurements; then finally $Z$ error correction. 
Using Shor's scheme, this approach uses a total of $63 = 3(1+20)$ measurements. Our method is over five times faster.
The fact that 11 measurements suffice to perform simultaneously a logical measurement and fault-tolerant error correction for a code defined by 10 independent stabilizer generators is surprising. We introduce the concept of single-shot logical measurement in~\secref{s:furtherlogicalmeasurementproblems}.

Our aim here is not to propose one, most efficient error-correction procedure.  
Instead, we show a variety of new techniques for different codes.  
No doubt there is room for further improvement.  
In the end, choosing an error-correction method requires balancing tradeoffs, such as a space-time tradeoff between code size and error correction time.  
There is a rich scope for exploration.  

\medskip

In \secref{s:faulttolerance} we define fault-tolerant error correction.  
\secref{s:model} uses the $\llbracket 7,1,3 \rrbracket$ Steane code to introduce the problem of Shor-style fault-tolerant error correction.  
\secref{s:distance3csscodes} generalizes that example to arbitrary distance-three CSS codes, with both nonadaptive and adaptive stabilizer measurement orders.  
\secref{s:513code} introduces the non-CSS setting, using the $\llbracket 5,1,3 \rrbracket$ code as an example.  
\secref{s:hammingcodes} shows that even for CSS codes---for Hamming codes, including the $\llbracket 7,1,3 \rrbracket$ code---it can be more efficient to mix $X$ and $Z$ error correction than to run them separately.  
\secref{s:1643colorcode} shows that single-shot fault-tolerant error correction is possible, for a certain $\llbracket 16, 4, 3 \rrbracket$ color code.  
\secref{s:codesdesignedforfastec} presents five new families of codes, all of distance three or four, that give different tradeoffs for encoding rate versus the number of stabilizer measurement rounds needed for fault-tolerant error correction.  
\secref{s:logicalmeasurement} studies the problem of combining logical measurement with error correction.  
The appendices include several extensions.  For example, in \appref{s:othermeasurementmodels} we consider alternative models for stabilizer measurement, including flagged measurements (along the lines of the ``flag paradigm"~\cite{ChaoReichardt18flags}) and stabilizer measurements in parallel.

\section{Fault tolerance} \label{s:faulttolerance}

\begin{definition}[Fault tolerant error correction] \label{t:faulttolerance}
An error-correction procedure is fault tolerant to distance~$d = 2 t + 1$, if provided the number of input and internal faults is at most~$t$, the output error's weight, up to stabilizers, is at most the number of internal faults.  For CSS fault tolerance, the weight of a Pauli error is taken to be the maximum weight of its $X$ and $Z$ parts, up to stabilizers.  
\end{definition}

\noindent
For example, $X \otimes Y \otimes Z$ has weight three, but its $X$ and $Z$ parts, $X \otimes X \otimes I$ and $I \otimes Z \otimes Z$, have weight~two.  

Another fault-tolerance condition can also be required~\cite{AliferisGottesmanPreskill05, Gottesman09faulttolerance}: on an arbitrary input, provided that the number of internal faults is at most~$t$, the output should lie at most distance~$t$ from the codespace.  
This condition is important for concatenated fault-tolerance schemes, in which a corrupted codeword must be returned to the codespace so that the next higher level of error correction can correct an encoded error.  
With apologies to field theorists, we call this stricter definition ``concatenation fault tolerant" (CFT) error correction.  
For the fault-tolerant computation at the highest level of code concatenation, only \defref{t:faulttolerance} is needed.  

For perfect distance-three codes, CSS or not, fault tolerant and CFT error correction are equivalent, but this equivalence does not hold in general.  
Figure~\ref{f:6bitrepetition} shows an error-correction procedure for the six-bit repetition code that is fault tolerant to distance three but not CFT to distance three. 
Concatenation is a useful tool for proving the threshold theorem~\cite{AliferisGottesmanPreskill05, Gottesman09faulttolerance}.  
However, it is difficult to imagine multiple concatenation levels being used in practice because of the high qubit and time overhead.  
Therefore in the sequel we mostly consider only the weaker \defref{t:faulttolerance} (except in \secref{s:1643colorcode}). 

\begin{figure}
\centering
\subfigure[]{\includegraphics[scale=.769]{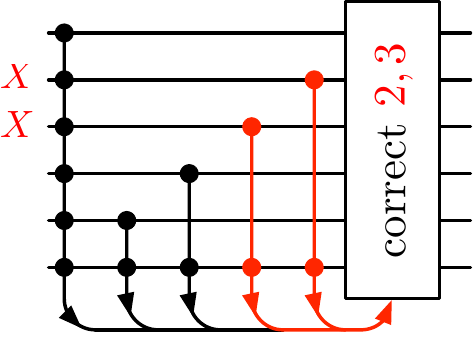}}
\hspace{.8cm}
\subfigure[]{\includegraphics[scale=.769]{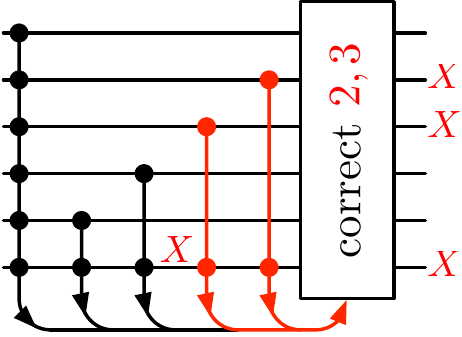}}
\caption{
Fault tolerance does not imply concatenation fault tolerance.  
For the six-bit repetition code $\{000000, 111111\}$, five syndrome bit measurements (single shot) suffice for error correction 
fault tolerant to distance three.  However, there are no correction rules to make this procedure concatenation fault tolerant to distance three.  In (a) are shown two input errors that must be corrected back to the codespace (by correcting either bits $2, 3$ or bits $1, 4, 5, 6$) for CFT to hold.  However, then a single internal fault (b) can result in the output being distance three from the codespace, contradicting CFT.  
} \label{f:6bitrepetition}
\end{figure}

(Concatenation fault-tolerant error correction can also be used for state preparation.  For example, for an $\llbracket n, k, d \rrbracket$ CSS code, $\alpha = \braket{0^n}{\overline 0{}^k} > 0$, so $\ket{0^n} = \tfrac{1}{\alpha} \ket{0^n} \! \braket{0^n}{\overline 0{}^k} = \tfrac{1}{\alpha} \big( \tfrac{I + Z}{2} \big)^{\otimes n} \ket{\overline 0{}^k} = \tfrac{1}{2^n \alpha} \sum_{S \subseteq [n]} Z_S \ket{\overline 0{}^k}$; and so starting from $\ket{0^n}$ one can fault-tolerantly prepare the encoded state $\ket{\overline 0{}^k}$ using a concatenation fault-tolerant $Z$ error-correction procedure.  However, there are usually more efficient methods for fault-tolerant state preparation~\cite{PaetznickReichardt11Golay, ZhengLaiBrun18stateprep}.)  

Note that \defref{t:faulttolerance} is for error correction.  Different definitions apply for error detection and for combinations of detection and correction.  (For example, a distance-three code can detect up to two errors, and a distance-four code can detect three errors, or detect two and correct one error.)  We focus on error correction.  

\section{Model: Error correction based on single syndrome bit measurements} \label{s:model}

In this section we introduce the model of Shor-style fault-tolerant error correction~\cite{Shor96}, based on fault-tolerantly measuring one syndrome bit at a time.  
We illustrate the model using the $\llbracket 7,1,3 \rrbracket$ Steane code as an example.  
In \secref{s:distance3csscodes}, we will generalize the arguments to distance-three CSS codes.

\subsection{Shor-style syndrome measurement}

\begin{figure}
\centering
\begin{tabular}{c}
\subfigure[Non-fault-tolerant measurement]{\raisebox{.2cm}{\includegraphics[scale=.769]{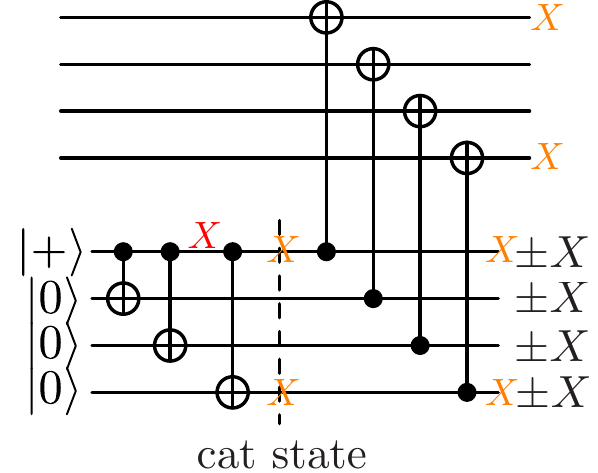}}} \\
\subfigure[Fault-tolerant measurement\label{f:shorstylesyndromemeasurementft}]{\raisebox{.2cm}{\includegraphics[scale=.769]{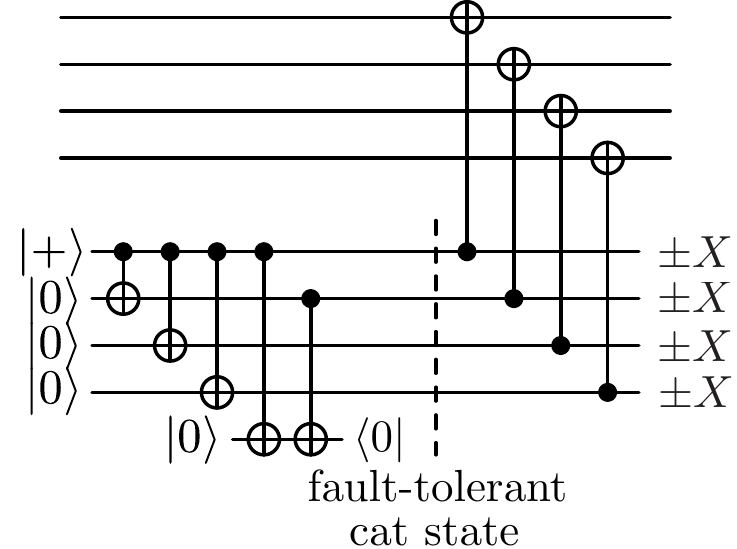}}}
\end{tabular}
\caption{
(a) A $w$-qubit cat state, $\tfrac{1}{\sqrt 2}(\ket{0^w} + \ket{1^w})$, can be used to measure $X^{\otimes w}$, as shown here with $w = 4$.  
However, the cat state should be prepared fault tolerantly, which here it is not: a single $X$ fault, at the location in red, spreads to a weight-two $X$ error on the data.  
(b) The $w = 4$ qubit cat state can be prepared fault tolerantly using one extra ancilla qubit; conditioning on measuring an even parity for the first two qubits avoids the problem in (a), so that this circuit is fault tolerant.
} \label{f:shorstylesyndromemeasurement}
\end{figure}

In Shor-style syndrome measurement schemes, stabilizers are measured one bit at a time using fault-tolerantly prepared cat states.  
For example, to measure $X^{\otimes w}$, one can first prepare a cat state $\tfrac{1}{\sqrt 2}(\ket{0^w} + \ket{1^w})$ using a fault-tolerant Clifford circuit.  
(For fault tolerance to distance $d = 2 t + 1$, the preparation circuit should satisfy that for any $k \leq t$ Pauli gate faults, the weight of the error on the output state, modulo stabilizers, is at most~$k$.)  
Then this cat state is coupled to the data with transversal CNOT gates and each of its qubits measured in the Hadamard, or $\ket + / \ket -$, basis.  
The parity of the $w$ measurements is the desired syndrome bit.  
See \figref{f:shorstylesyndromemeasurement}.  

One can also measure $X^{\otimes w}$ with a cat state on, potentially, fewer than $w$ qubits, or even without using a cat state at all~\cite{Stephens14colorcodeft, YoderKim16trianglecodes, ChaoReichardt18flags}.  For example, \figref{f:condensedshorstylesyndromemeasurement} shows two circuits for measuring $X^{\otimes 6}$ using only three ancilla qubits, both CSS fault tolerant to distance three.  

\begin{figure}
\centering
\begin{tabular}{c}
\includegraphics[scale=.769]{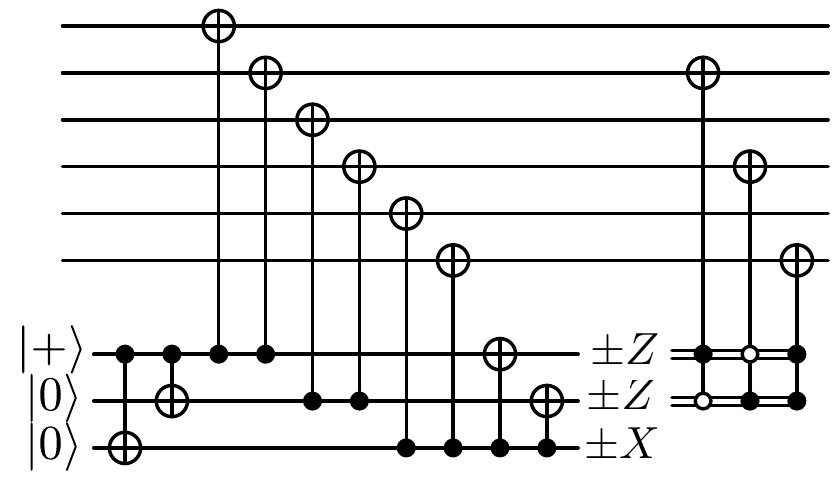} \\[.4cm]
\includegraphics[scale=.769]{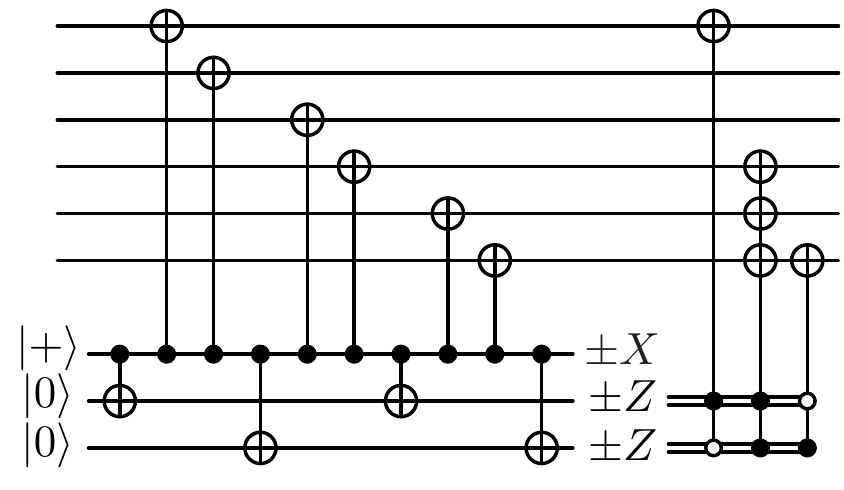}
\end{tabular}
\caption{Two circuits for measuring $X^{\otimes 6}$ using three ancilla qubits.  Both circuits are CSS fault tolerant to distance three, i.e., a single $X$ fault will result in a data error of weight at most one.  The first circuit is from~\cite{ChaoReichardt18flags}, generalizing a construction due to~\cite{Stephens14colorcodeft, YoderKim16trianglecodes}.  Note that the second circuit does not use a cat state.  It is a special case of a ``flag fault-tolerant" procedure from~\cite{PrabhuReichardt20catstates}.} \label{f:condensedshorstylesyndromemeasurement}
\end{figure}

Here we take (CSS) fault-tolerant syndrome bit measurement as a primitive, and use it as a building block for fault-tolerant error correction, fault-tolerant logical measurement, and other operations.  The details of how stabilizers are measured will not be important.  What is important is that they are measured one at a time, in sequence, and not all at once as in Steane-style~\cite{Steane97} or Knill-style~\cite{Knill03erasure} error correction.  
In \appref{s:othermeasurementmodels} we consider other syndrome measurement models, including flag fault-tolerant measurement~\cite{ChaoReichardt18flags} and models intermediate between Shor- and Steane-style syndrome measurement.

\subsection{
Error correction for the \texorpdfstring{$\llbracket 7,1,3 \rrbracket$}{[[7,1,3]]}~code}

Consider Steane's $\llbracket 7,1,3 \rrbracket$ code~\cite{Steane96css}, a self-dual CSS code with $Z$ stabilizers given by 
\begin{equation*}
\begin{tabular}{c@{$\otimes$}c@{$\otimes$}c@{$\otimes$}c@{$\otimes$}c@{$\otimes$}c@{$\otimes$}c}
$I$&$I$&$I$&$Z$&$Z$&$Z$&$Z$ \\
$I$&$Z$&$Z$&$I$&$I$&$Z$&$Z$ \\
$Z$&$I$&$Z$&$I$&$Z$&$I$&$Z$ 
\end{tabular}
\end{equation*}
The code can correct one input $X$ error---it has distance three---because every weight-one $X$ error has a distinct syndrome, e.g., $X_1$ gives syndrome $001$ because it commutes with the first two stabilizers and anticommutes with the third.  

However, it is not fault tolerant to simply measure these three stabilizers and apply the corresponding correction.  For example, it might be that the input is perfect but an $X_7$ fault occurs right after measuring the second stabilizer.  Then the observed syndrome will be $001$, and applying an $X_1$ correction will leave the data with a weight-two error, $X_1 X_7$.  Similarly, an $X_7$ fault after measuring the first stabilizer will give syndrome $011$ and therefore leave the corrected data with error $X_3 X_7$.  

To handle faults that occur during error correction, for this code we need more stabilizer measurements.  For example, say we measure the first stabilizer again, so the measurement sequence is 
\begin{equation*}
\begin{array}{c c c c c c c}
0&0&0&1&1&1&1\\
0&1&1&0&0&1&1\\
1&0&1&0&1&0&1\\
0&0&0&1&1&1&1
\end{array}
\end{equation*}
where we have adopted a less cumbersome notation, with $0$ meaning $I$ and $1$ meaning~$Z$.  Now an internal $X_7$ fault can result in the syndromes $0111$, $0011$, $0001$ or $0000$ (coming from suffixes of the last column above).  As none of these syndromes can be confused with that from an input error on a different qubit, an error-correction procedure can safely apply no correction at all in these cases.  (Alternatively, one could correct $X_7$ for the syndrome $0111$ and give no correction for $0011$ or $0001$.)  

However, the above four-measurement sequence still does not suffice for fault-tolerant $X$ error correction, because an internal fault on qubit~$3$ can also cause the syndrome $0010$.  A fifth measurement is needed to distinguish an input $X_1$ error from an internal $X_3$ fault.  For example, this measurement sequence works: 
\begin{equation} \label{e:713codefivemeasurementerrorcorrectionsequence}
\begin{array}{c c c c c c c}
0&0&0&1&1&1&1\\
0&1&1&0&0&1&1\\
1&0&1&0&1&0&1\\
0&0&0&1&1&1&1\\
0&1&1&0&0&1&1
\end{array}
\end{equation}

Note that after the first four stabilizer measurements, the only bad case remaining is the suffix $0010$ of column~$3$, $0110$.  For the fifth measurement, we can therefore use any stabilizer that distinguishes qubits~$1$ and~$3$.  This need not be one of the stabilizer generators, e.g., $0111100$ also works.  

In this paper, we will develop fault-tolerant stabilizer measurement sequences for other codes, including codes with distance $> 3$, for error correction and other operations.  In addition to fixed measurement sequences like Eq.~\eqnref{e:713codefivemeasurementerrorcorrectionsequence}, we will also consider adaptive measurement sequences, in which the choice of the next stabilizer to measure depends on the syndrome bits already observed.

\section{Distance-three CSS codes} \label{s:distance3csscodes}

Having established the setting of sequential fault-tolerant stabilizer measurements, let us next consider stabilizer measurement sequences for fault-tolerant error correction for general distance-three CSS codes.

\subsection{Algorithm for distance-three CSS codes}

The argument leading to Eq.~\eqnref{e:713codefivemeasurementerrorcorrectionsequence} suggests a general procedure for constructing measurement sequences for distance-three CSS fault-tolerant error correction: 

\begin{itemize}[leftmargin=*]
\item 
Call a pair $(i, j)$ of qubits ``bad" if an internal fault on qubit~$j$ can result in the same syndrome as an $X$ input error on a different qubit~$i$.  If the columns of the length-$m$ measurement sequence are $c_1, \ldots, c_n \in \{0,1\}^m$, then qubit~$j$ is bad if for some $k \in \{0, 1, \ldots, m\}$, the suffix $0^k (c_j)_{k+1} \ldots (c_j)_m = c_i$ with $i \neq j$.  
\item 
Then repeat, while there exists a bad pair $(i, j)$: Append to the measurement sequence a stabilizer that is $0$ on qubit~$i$ and $1$ on qubit~$j$, or vice versa.  
\end{itemize}
The algorithm eventually terminates because for a distance-three CSS code, for any pair $(i, j)$ there must exist a $Z$ stabilizer that distinguishes $X_i$ from~$X_j$.  (That is, unless the code is degenerate, i.e., $X_i X_j$ is a stabilizer.  For a degenerate code with weight-two stabilizers, the definition of ``bad" should require that $X_i$ and $X_j$ be inequivalent.)  
When there are no bad pairs left, the procedure is CSS fault tolerant to distance three.  

A natural greedy version of this algorithm might, for example, choose to add the stabilizer that eliminates the most bad qubit pairs.

\subsection{Nonadaptive measurement sequence for any distance-three CSS code} \label{s:distance3csscodesnonadaptive}

We next construct a fault-tolerant error-correction procedure for any 
distance-three CSS code: 

\begin{theorem} \label{t:distancethreenonadaptive}
Consider an $\llbracket n, n - \numZgenerators - \numXgenerators, 3 \rrbracket$ CSS code with $\numZgenerators$ independent $Z$ stabilizer generators $g_1, \ldots, g_{\numZgenerators}$.  
Then fault-tolerant $X$ error correction can be realized with $2 \numZgenerators - 1$ syndrome bit measurements, by measuring in order all the generators $g_1, \ldots, g_{\numZgenerators}$, followed by just $g_1, \ldots, g_{\numZgenerators - 1}$.  
\end{theorem}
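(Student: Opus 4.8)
The plan is to appeal to the criterion established above --- once a distance-three CSS error-correction procedure has no ``bad'' pair of qubits it is CSS fault tolerant to distance three --- and simply to check that the measurement order $g_1, \ldots, g_{\numZgenerators}, g_1, \ldots, g_{\numZgenerators-1}$ leaves no bad pair. First I would fix notation: write $a^{(j)} = \bigl((g_1)_j, \ldots, (g_{\numZgenerators})_j\bigr) \in \{0,1\}^{\numZgenerators}$ for the $Z$-syndrome of the weight-one error $X_j$. Two facts, recorded at the outset, come from distance three (and, where needed, non-degeneracy; the degenerate case is absorbed into the modified notion of ``bad'' noted above): no $a^{(j)}$ vanishes, since a weight-one $X$ operator is neither a stabilizer nor a logical operator of a distance-three code; and $j \mapsto a^{(j)}$ is injective, since $a^{(i)} = a^{(j)}$ would make $X_i X_j$ a weight-$\le 2$ operator with trivial syndrome, hence a stabilizer, which for $i \ne j$ forces degeneracy. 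With this notation, the $j$th column of the length-$m$ measurement sequence, $m = 2\numZgenerators - 1$, is $c_j = \bigl(a^{(j)}_1, \ldots, a^{(j)}_{\numZgenerators}, a^{(j)}_1, \ldots, a^{(j)}_{\numZgenerators-1}\bigr)$, i.e., $a^{(j)}$ with its own first $\numZgenerators-1$ coordinates appended a second time.

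The main step is then a short case split on the index $k$. Suppose $(i,j)$ with $i \ne j$ were a bad pair, witnessed by $k \in \{0,1,\ldots,m\}$ with $(c_i)_\ell = 0$ for $\ell \le k$ and $(c_i)_\ell = (c_j)_\ell$ for $\ell > k$. If $k \ge \numZgenerators$, then the first $\numZgenerators$ entries of $c_i$, which are precisely $a^{(i)}$, all vanish --- impossible. If $k = 0$, then $c_i = c_j$, so $a^{(i)} = a^{(j)}$ and $i = j$. In the remaining range $1 \le k \le \numZgenerators - 1$, the agreement region $\{\ell : \ell > k\}$ contains the entire appended block (positions $\numZgenerators+1$ through $2\numZgenerators-1$), forcing $a^{(i)}_t = a^{(j)}_t$ for every $t \le \numZgenerators - 1$; and since $k < \numZgenerators$ it also contains position $\numZgenerators$ of the first block, forcing $a^{(i)}_{\numZgenerators} = a^{(j)}_{\numZgenerators}$. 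Hence $a^{(i)} = a^{(j)}$ and $i = j$. Every case contradicts $i \ne j$, so there is no bad pair, and fault-tolerant $X$ error correction is realized with the obvious decoder: apply $X_i$ when the observed syndrome equals the column $c_i$, and apply no correction otherwise. (For the $\llbracket 7,1,3 \rrbracket$ code this recovers the five-measurement sequence of Eq.~\eqnref{e:713codefivemeasurementerrorcorrectionsequence}.)

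I do not expect a genuine obstacle: the argument is entirely driven by the structure of the sequence, and the only point that needs care is the observation that makes the middle case go through. Because the appended block is an exact copy of the first $\numZgenerators-1$ bits of the generator pattern, a suffix of a column $c_j$ either zeroes the whole first block --- impossible, as $a^{(i)} \ne 0$ --- or leaves the appended block entirely intact, and that surviving block already pins down $\numZgenerators-1$ of the $\numZgenerators$ syndrome bits of $a^{(i)}$, with the retained coordinate $\numZgenerators$ of the first block supplying the last one. The lone bookkeeping subtlety is degeneracy: there one reads ``$a^{(i)} = a^{(j)}$'' as ``$X_i$ and $X_j$ equivalent modulo stabilizers,'' which is consistent with the weight-up-to-stabilizers convention of \defref{t:faulttolerance} and makes the pair non-bad by definition.
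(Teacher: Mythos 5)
Your proof is correct and follows essentially the same route as the paper: the paper's own argument likewise splits into faults after the first $\numZgenerators$ measurements (all-zero first block, inconsistent with any input error) versus faults within the first block, where the last $\numZgenerators$ measurements — being a complete independent generating set — read out the fault's full syndrome while the first block disagrees, exactly your $1 \le k \le \numZgenerators-1$ case. Your packaging via the ``bad pair''/suffix criterion of \secref{s:distance3csscodes}, including the treatment of degeneracy as equivalence modulo stabilizers, matches the paper's reasoning (and its footnote) in substance.
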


For example, for Steane's $\llbracket 7,1,3 \rrbracket$ code, $X$ and $Z$ error correction can each be done with five measurements, as in Eq.~\eqnref{e:713codefivemeasurementerrorcorrectionsequence}.   
This is optimal, in the sense of using the fewest possible $Z$ measurements for $X$ error correction.  
More generally, for the $\llbracket 2^\hammingparameter - 1, 2^\hammingparameter - 1 - 2\hammingparameter, 3 \rrbracket$ Hamming code (see \secref{s:hammingcodes} below), $X$ and $Z$ error correction can each be done with $2 \hammingparameter - 1$ syndrome bit measurements.  
For some other distance-three codes, fewer measurements are possible, as we will see in Secs.~\ref{s:1643colorcode} and~\ref{s:codesdesignedforfastec}.  

\begin{proof}[Proof of \thmref{t:distancethreenonadaptive}]
The concern is that an internal $X$ fault might be confused with an input $X$ error.  (We need not worry about an incorrectly flipped measurement, since it at worst it could cause a weight-one correction to be wrongly applied.)  

For an internal fault occurring after the first $\numZgenerators$ measurements, the first $\numZgenerators$ outcomes will be trivial and therefore different from the case of any input error.  

This leaves as possibly problematic only internal faults occurring among the first $\numZgenerators$ measurements (after $g_1$ and before $g_{\numZgenerators}$).  Consider an internal fault that results in the measured syndrome $(s, t, s')$, where $s, s' \in \{0,1\}^{\numZgenerators - 1}$ are the first and second syndrome vectors for $g_1, \ldots, g_{\numZgenerators - 1}$, and $t \in \{0,1\}$ is the syndrome bit of $g_{\numZgenerators}$.  Assume that the fault occurs on qubit~$q$, after the measurement of a stabilizer $g_i$ that involves that qubit~$q$; if it occurs before every stabilizer involving that qubit~$q$, then it is equivalent to an input error.  Since $s_i = 0$, which is incorrect for an input error on qubit~$q$, this means that the syndromes $(s, t)$ and $(s', t)$ will be inconsistent in the sense that they correspond to different input errors; $(s', t)$ corresponds to input error $X_q$, while $(s, t)$ corresponds to some other, inequivalent input error or no input error.\footnote{One-qubit errors $X_i$ and $X_j$ are inequivalent if they correspond to different syndromes.  They can be equivalent, even if $i \neq j$, if the code is degenerate and $X_i X_j$ is a stabilizer.}  Therefore the syndrome $(s, t, s')$ is not consistent with any input error.  
\end{proof}

Observe that the important property for this proof to work is that both the first $\numZgenerators$ stabilizers measured and the last $\numZgenerators$ stabilizers measured form independent sets of generators.  They need not both be $\{g_1, \ldots, g_{\numZgenerators}\}$.

\subsection{Adaptive measurement sequence for any distance-three CSS code} \label{s:distance3csscodesadaptive}

\thmref{t:distancethreenonadaptive} constructs a nonadaptive error-correction procedure, in which the same stabilizers are measured no matter the outcomes.  An adaptive syndrome-measurement procedure can certainly be more efficient.  For example, if the first measured syndrome bits of $g_1, \ldots, g_{\numZgenerators}$ are all trivial, then one can end $X$ error correction without making the remaining $\numZgenerators - 1$ measurements.  The following adaptive $X$ error correction procedure works for any distance-three CSS code: 

{ \noindent \hrulefill \\
\centering \textbf{Adaptive $X$ error-correction procedure \\ for a distance-three CSS code} \\ } \smallskip

\begin{enumerate}[leftmargin=*]
\item Measure the $\numZgenerators$ stabilizer generators, stopping after the first nontrivial measurement outcome.  If all syndrome bits are trivial, then end error correction, having made $\numZgenerators$ measurements total.  
\item If the measurement of~$g_j$ is nontrivial, then measure $g_1, g_2, \ldots, g_{j-1}, g_{j+1}, \ldots, g_{\numZgenerators}$.  Apply the appropriate correction based on these and the nontrivial outcome for syndrome bit~$g_j$, having made $j + \numZgenerators - 1$ measurements total.  
\end{enumerate}
\vspace{-1\baselineskip}
\hrulefill
\medskip

The procedure uses between $\numZgenerators$ and $2 \numZgenerators - 1$ measurements, the worst case being if the first nontrivial measurement is for $g_{(j = \numZgenerators)}$.  It is advantageous to detect errors early.  

An alternative way to prove \thmref{t:distancethreenonadaptive} is to notice that the theorem's nonadaptive measurement sequence includes as subsequences this adaptive procedure's possible measurement sequences.  

The above procedures treat $X$ and $Z$ faults completely independently, and can tolerate, e.g., one internal $X$ fault and one internal $Z$ fault even on different qubits.  If instead we allow one internal fault total, $X$, $Z$ or $Y$, within the entire error-correction procedure, then the adaptive error-correction procedure can be shortened; see \appref{s:selfdualcssadaptive}.

\section{Distance-three stabilizer codes}
\label{s:513code}

The arguments from \secref{s:distance3csscodes} generalize to all distance-three stabilizer codes, with one important difference.  
We begin with an example to show how fault-tolerant error correction compares for CSS versus non-CSS codes.  
Then we will generalize the measurement sequences of Secs.~\ref{s:distance3csscodesnonadaptive} and~\ref{s:distance3csscodesadaptive} to arbitrary distance-three stabilizer codes.

\subsection{\texorpdfstring{$\llbracket 5,1,3 \rrbracket$}{[[5,1,3]]} code}

Consider the perfect $\llbracket 5,1,3 \rrbracket$ code~\cite{LaflammeMiquelPazZurek96}, encoding one logical qubit into five physical qubits to distance three.  The stabilizer group is generated by $XZZXI$ and its cyclic permutations $IXZZX, XIXZZ, ZXIXZ$.  For a deterministic (nonadaptive) distance-three fault-tolerant error-correction procedure, it suffices to measure fault-tolerantly the six stabilizers in \figref{f:513sequence}.  

\begin{figure}
\centering
\subfigure[\label{f:513sequence}]{
\raisebox{.2cm}{\includegraphics[scale=.769]{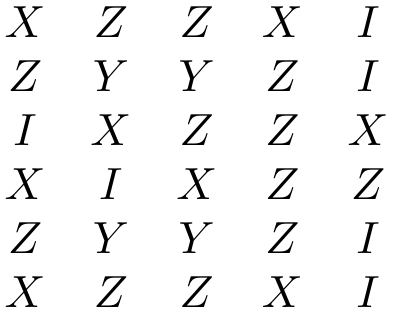}}
}
\subfigure[\label{f:513inputerrorsyndromes}]{\raisebox{.2cm}{\includegraphics[scale=.769]{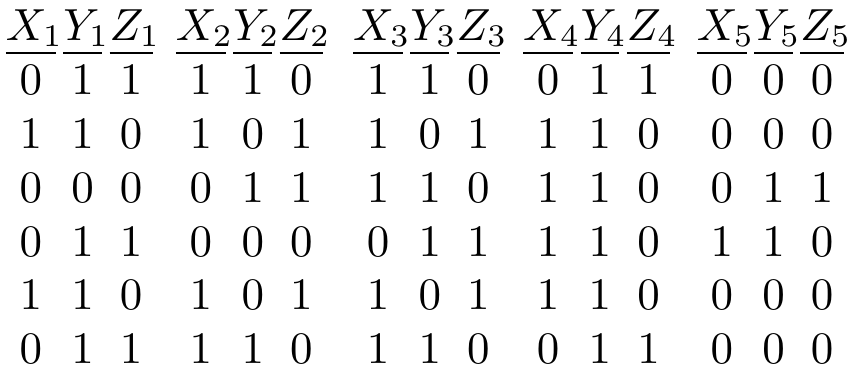}}}
\caption{
(a) Measurement sequence for fault-tolerant error correction for the $\llbracket 5,1,3 \rrbracket$ code.  
(b) Error syndromes for the $15$ weight-one input errors.  
} \label{f:513code}
\end{figure}

It matters which stabilizers are measured and in what order.  
For example, consider if one only measured the first five of the above six stabilizers.  
The syndromes for $X_1$ and $X_2$ input errors would be $01001$ and $11001$, respectively.  However, if the input were perfect and an $X_2$ fault occurred just after measuring the first syndrome bit, this would \emph{also} generate the syndrome $01001$.  Applying an $X_1$ correction would leave the weight-two error $X_1 X_2$ on the data ($X_1 X_2 Z_4$ is a logical operator).  
The problem here is that the suffix $01001$ of the $X_2$ input syndrome $11001$ matches the syndrome for an input error on a different qubit.  
If no syndrome suffixes collide in this way, then the error-correction procedure tolerates faults happening \emph{between} syndrome measurements.  

For CSS codes for which $X$ and $Z$ error correction are conducted separately, then it is sufficient to ensure that the procedures tolerate faults between syndrome measurements, i.e., that no suffix of an input error syndrome collides with the syndrome for an input error on a different qubit.  
For non-CSS codes, however, it is not sufficient to consider faults between syndrome measurements.  
For example, assume again that we only measure the first five of the stabilizers in \figref{f:513sequence}.  
While measuring the first stabilizer, it is possible that a single fault introduces an $X_1$ error while simultaneously flipping the syndrome bit.  This is not equivalent to an $X_1$ fault just before or just after the measurement, because $X_1$ commutes with the stabilizer.  
(For example, in \figref{f:shorstylesyndromemeasurementft} to measure $X^{\otimes 4}$, an $XZ$ fault on first CNOT coupling the data and cat state would cause an $X_1$ data error and flip the syndrome bit.)  
This fault leads to the syndrome $11001$, which is \emph{not} a suffix of $01001$.  It matches the syndrome for an input $X_2$ error, and applying an $X_2$ correction would leave the weight-two error $X_1 X_2$.  

The measurement sequence in \figref{f:513sequence} 
tolerates single $X, Y$ or $Z$ faults, anywhere within the support of a measured stabilizer, that also flip the syndrome bit.

\subsection{Nonadaptive and adaptive measurement sequences for any distance-three stabilizer code}

The nonadaptive and adaptive measurement sequences of Secs.~\ref{s:distance3csscodesnonadaptive} and~\ref{s:distance3csscodesadaptive} generalize to arbitrary distance-three stabilizer codes, using 
two more measurements: 

\begin{theorem} \label{t:distancethreenoncss}
Consider an $\llbracket n, n - \numgenerators, 3 \rrbracket$ stabilizer code with $\numgenerators$ independent stabilizer generators $g_1, \ldots, g_\numgenerators$.  
Then fault-tolerant error correction can be realized with $2 \numgenerators$ stabilizer measurements: $g_1, \ldots, g_\numgenerators, g_1, \ldots, g_\numgenerators$.  

With adaptive measurements, between $\numgenerators$ and $2 \numgenerators$ measurements suffice: measure $g_1, g_2, \ldots$ and if a syndrome bit is nontrivial, stop and measure $g_1, \ldots, g_\numgenerators$ to determine the correction.  
\end{theorem}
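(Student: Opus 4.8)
The plan is to give an explicit decoder for each of the two measurement schedules and to verify \defref{t:faulttolerance} by a short case analysis on the location of the (at most one) fault. Throughout I use the Shor-style measurement primitive of \secref{s:model}: a single internal fault during the measurement of a generator $g_k$ leaves a data error $E$ of weight at most one (modulo stabilizers) and corrupts only that one syndrome bit, while every other measurement computes the syndrome of the then-current data exactly. Since $d=3$, the hypothesis is ``at most one fault in total,'' so the only scenarios are: no fault; one weight-one input error $P$; or one internal fault, after which the data carries a weight-$\le1$ error $E$.

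\emph{Nonadaptive case.} For the schedule $g_1,\dots,g_\numgenerators,g_1,\dots,g_\numgenerators$, write the observed $2\numgenerators$-bit syndrome as two rounds $\tau^{(1)},\tau^{(2)}\in\{0,1\}^\numgenerators$. The decoder I propose: if $\tau^{(1)}=\tau^{(2)}=:s$, apply the weight-$\le1$ correction with syndrome $s$ (a coset representative, unique modulo stabilizers); otherwise apply nothing. First, if $\tau^{(1)}\ne\tau^{(2)}$, then because the no-fault and input-error scenarios produce equal halves ($0^{2\numgenerators}$ and $(\sigma(P),\sigma(P))$), there is exactly one internal fault and no input error; the decoder applies nothing and the residual is the weight-$\le1$ data error $E$, as required. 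If $\tau^{(1)}=\tau^{(2)}=:s$: with no fault $s=0$, with an input error $s=\sigma(P)$, and in both cases the correction for $s$ returns the data to the code (residual weight $0$); with one internal fault I would split on its round. A round-one fault at position $p$ leaves round two fault-free, so $\tau^{(2)}=\sigma(E)$ while $\tau^{(1)}=(0^{p-1},b,\sigma(E)_{p+1},\dots,\sigma(E)_\numgenerators)$ for some bit $b$, and $\tau^{(1)}=\tau^{(2)}$ then forces $s=\sigma(E)$, so the correction cancels $E$ (residual weight $0$). A round-two fault leaves round one a fault-free measurement of a codeword, so $\tau^{(1)}=0^\numgenerators$, hence $s=0^\numgenerators$, the decoder applies nothing, and the residual is $E$ (weight $\le1$). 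In every case the residual weight is at most the number of internal faults.

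\emph{Adaptive case.} For the procedure ``measure $g_1,g_2,\dots$; at the first nontrivial outcome, say at $g_j$, stop and measure a fresh full round $g_1,\dots,g_\numgenerators$ with outcomes $u$; if all $\numgenerators$ outcomes of the first round are trivial, stop,'' I would use the decoder: if an early stop occurred, apply the weight-$\le1$ correction for $u$; otherwise apply nothing. The length is $\numgenerators$ when no early stop occurs and $j+\numgenerators\le2\numgenerators$ otherwise, giving the stated range. The structural fact to prove first is: if an early stop occurs and there is no input error, then the one internal fault lies in the executed prefix $g_1,\dots,g_j$, for otherwise round one would be a fault-free measurement of a codeword and would not have stopped early; consequently round two is fault-free, $u=\sigma(E)$, and the correction cancels $E$. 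If instead an input error $P$ caused the early stop there is no internal fault, round two is exact, $u=\sigma(P)$, and the correction is again right. In the no-early-stop case the only nonobvious scenario is an internal fault during round one for which nonetheless every round-one outcome came back trivial; the data error $E$ is then simply left in place, with weight $\le1$, as required. (One could alternatively observe that every executed adaptive schedule is no harder as a decoding problem than the already-handled nonadaptive schedule, but the direct argument is cleaner.)

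The step I expect to be the main obstacle is isolating and then neutralizing the non-CSS phenomenon highlighted in the $\llbracket 5,1,3 \rrbracket$ discussion: a single fault can implant a data Pauli $E$ \emph{and} flip the current syndrome bit even when $E$ commutes with the measured stabilizer, so the first round of syndromes can be entirely consistent with an input error although an internal fault occurred. This is precisely why a full second round is used---rather than $\numgenerators-1$ extra measurements, as in the CSS \thmref{t:distancethreenonadaptive}---and why the decoder must compare the two rounds bit by bit instead of decoding a prefix; making this robust is the heart of the argument. Degenerate codes need only the usual caveat (read ``the weight-$\le1$ error with syndrome $s$'' as ``a weight-$\le1$ coset representative,'' and all weight bounds modulo stabilizers) and introduce no new difficulty.
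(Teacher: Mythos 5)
Your proposal is correct and takes essentially the approach the paper intends: the paper states this theorem without a separate proof, as the direct generalization of the suffix/consistency argument of \thmref{t:distancethreenonadaptive}, and your two-round comparison decoder (rounds agree $\Rightarrow$ correct by the common syndrome; disagree $\Rightarrow$ do nothing), together with the observation in the adaptive case that an early stop with no input error localizes the fault to the executed prefix, is exactly that argument written out. You also correctly identify and handle the non-CSS subtlety (a single fault can flip the bit of a stabilizer that commutes with the implanted error), which is precisely the paper's stated reason for re-measuring all of $g_1, \ldots, g_\numgenerators$, including $g_j$, rather than saving a measurement as in the CSS case.
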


Compared to the adaptive procedure for CSS codes in \secref{s:distance3csscodesadaptive}, if the measurement of $g_j$ is nontrivial, then here we re-measure all of $g_1, \ldots, g_\numgenerators$, including $g_j$ again.  This is necessary, in general, because an internal fault that triggers $g_j$ can give an error that either commutes or anticommutes with $g_j$.

\section{Mixing $X$ and $Z$ error correction for Hamming codes} \label{s:hammingcodes}

In this section, we show that, even for CSS codes, mixing $X$ and $Z$ error correction can be more efficient than running them separately.  

The Hamming codes are a family of $\llbracket 2^\hammingparameter - 1, 2^\hammingparameter - 1 - 2\hammingparameter, 3 \rrbracket$ quantum error-correcting codes, for $\hammingparameter \geq 3$.  They are self-dual, perfect CSS codes.  
For example, the $\llbracket 7,1,3 \rrbracket$ and $\llbracket 15,7,3 \rrbracket$ Hamming codes have stabilizer generators given respectively by, in both Pauli $Z$ and $X$ bases, 
\begin{equation} \label{e:hammingcodestabilizers}
\begin{array}{c c c c c c c}
0&0&0&1&1&1&1\\
0&1&1&0&0&1&1\\
1&0&1&0&1&0&1
\end{array}
\;\;\text{ and }\;\;
\begin{smallmatrix}
0&0&0&0&0&0&0&1&1&1&1&1&1&1&1\\
0&0&0&1&1&1&1&0&0&0&0&1&1&1&1\\
0&1&1&0&0&1&1&0&0&1&1&0&0&1&1\\
1&0&1&0&1&0&1&0&1&0&1&0&1&0&1 
\end{smallmatrix}
\end{equation}

We will show: 

\begin{proposition} \label{t:hammingcodesnonadaptiveerrorcorrection} \ 
\begin{itemize}[leftmargin=*]
\item 
For the $\llbracket 7,1,3 \rrbracket$ Steane code, measuring in order 
the seven stabilizers from Eq.~\eqnref{e:713sevenstabilizers} below suffices for distance-three fault-tolerant error correction.  
For the $\llbracket 15,7,3 \rrbracket$ Hamming code, measuring the nine stabilizers of Eq.~\eqnref{e:1573ninestabilizers} suffices.  
\item 
In general, for the $2^\hammingparameter - 1$ qubit 
Hamming code, there is a sequence of $3 \hammingparameter - 1$ stabilizer measurements (beginning with the $2 \hammingparameter$ standard $Z$ and $X$ stabilizer generators, and ending with certain $\hammingparameter - 1$ $Y$ basis stabilizers) that suffice for distance-three fault-tolerant error correction.  
(See, e.g., Eqs.~\eqnref{e:713eightstabilizers} and~\eqnref{e:1573ZXYstabilizers} for the $\hammingparameter = 3$ and $\hammingparameter = 4$ cases, respectively.)  
\item 
For the $2^\hammingparameter - 1$ qubit  
Hamming code, one can separately correct $X$ and $Z$ errors fault tolerantly by measuring $2 \hammingparameter - 1$ $Z$ and $2 \hammingparameter - 1$ $X$ stabilizers, $4 \hammingparameter - 2$ stabilizer measurements total.  (This is a special case of \thmref{t:distancethreenonadaptive}.) 
\end{itemize}
\end{proposition}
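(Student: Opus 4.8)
\emph{Overview.} The three items get easier in turn, so the plan is to prove the middle bullet (the $3\hammingparameter-1$ sequence) first and read off the other two. Label the $2^\hammingparameter-1$ qubits by the nonzero length-$\hammingparameter$ bit strings, so that the $i$-th generator is supported on $S_i=\{v:v_i=1\}$; since the code is self-dual CSS, $Z_{S_i}$ and $X_{S_i}$ are stabilizers, and because $|S_i|=2^{\hammingparameter-1}$ is a multiple of $4$ for $\hammingparameter\ge3$, so is $Y_{S_i}=Z_{S_i}X_{S_i}$. Perfectness means a weight-one $X_v$ error has $Z$-syndrome exactly $v$ and a weight-one $Z_v$ error has $X$-syndrome exactly $v$. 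The saving rests on one remark: measuring $Y_{S_i}$ returns the mod-$2$ sum of bit $i$ of the $X$-error syndrome and bit $i$ of the $Z$-error syndrome, so after an all-$Z$ and an all-$X$ syndrome round have proposed candidate $X$- and $Z$-error locations, each $Y$-measurement re-checks one bit of \emph{both} syndromes, and $\hammingparameter-1$ of them can replace the two separate $(\hammingparameter-1)$-measurement re-check rounds that running $X$- and $Z$-correction independently (as in \thmref{t:distancethreenonadaptive}) would cost---hence $2\hammingparameter+(\hammingparameter-1)=3\hammingparameter-1$.

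\emph{Construction and verification.} Concretely, the plan is to measure $Z_{S_1},\dots,Z_{S_\hammingparameter}$, then $X_{S_1},\dots,X_{S_\hammingparameter}$, then $\hammingparameter-1$ $Y$-type stabilizers $Y_{T_1},\dots,Y_{T_{\hammingparameter-1}}$ (the supports $T_\ell$, and the orderings inside the first two rounds, require care; see below), recording the three rounds as vectors $a,b,c$. The decoder computes the $Y$-round outcome that the candidate errors $X_{v(a)},Z_{v(b)}$ would produce (here $v(\cdot)$ inverts ``syndrome of a weight-one error''); if the observed $c$ matches, it applies $X_{v(a)}Z_{v(b)}$ (dropping a factor when the vector is zero), and otherwise it falls back to a conservative correction, e.g.\ none. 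To check \defref{t:faulttolerance}: with no fault, or with one weight-one input error, the prediction always matches $c$ and the correct correction is applied. For one internal fault---a weight-$\le1$ data error $E$, possibly with its own syndrome bit flipped---I would case-split on which of the three rounds the fault sits in and on the Pauli type of $E$; the recurring mechanism is that $E$'s syndrome shows up truncated in the round where it occurs and in full afterward, so whenever the naive correction would merge with $E$ into a weight-two $X$- or $Z$-error the $Y$-prediction disagrees with $c$, the conservative branch fires, and the leftover error is just $E$, of weight $\le1$. Re-checking $\hammingparameter-1$ rather than all $\hammingparameter$ coordinates suffices for the same reason as in \thmref{t:distancethreenonadaptive}: the lone un-rechecked coordinate only matters for the single qubit labelled $e_\hammingparameter$, and there the conservative branch already yields weight $\le1$.

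\emph{The main obstacle.} The subtle part is the coupling of $X$- and $Z$-error correction, and its sharpest instance is a $Y$-type internal fault, since a single-qubit $Y_v$ commutes with every $Y$-type stabilizer and so is invisible to the entire $Y$-round---the very round doing the re-checking. A $Y_v$ fault during the $X$-round is the dangerous one: the (finished) $Z$-round never saw its $X_v$ part, the $X$-round recorded only a suffix of its $Z_v$ part, and the $Y$-round sees nothing, so the decoder may trust a corrupted $b$ and correct the wrong qubit. The bare three-round scheme with the obvious decoder does not close this by itself---one can exhibit $Y$-faults whose recorded syndrome coincides with that of a genuine input $Z$-error on a different, low-weight qubit---so the construction has to choose the supports $T_\ell$ and the intra-round orders so that this never happens (equivalently: the one bit any fault can corrupt is always still re-checked by some $Y$-measurement, the remaining coordinate being harmless as noted), and perhaps refine the conservative branch. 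Finding those $T_\ell$ and orderings and certifying that no case slips through is the real work; once it is in place the case analysis is mechanical.

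\emph{Parts one and three.} Part three is immediate from \thmref{t:distancethreenonadaptive}: apply it to the $\hammingparameter$ $Z$-generators for a $(2\hammingparameter-1)$-measurement $X$-correction sequence, and to the $\hammingparameter$ $X$-generators for $Z$-correction, and concatenate, for $4\hammingparameter-2$ measurements. For part one---the $7$- and $9$-measurement sequences for $\llbracket 7,1,3\rrbracket$ and $\llbracket 15,7,3\rrbracket$, which are shorter than $3\hammingparameter-1$---I would allow genuinely mixed $X/Y/Z$ stabilizers (which, unlike pure $Y$-type ones, can detect single $Y$ errors and thereby dodge the obstacle above), generate candidates with the greedy ``bad-pair'' procedure of \secref{s:distance3csscodes} extended to track mid-measurement faults as in \secref{s:513code}, and then verify the finitely many fault-tolerance constraints for the fixed code directly, by hand or by computer; I expect the outcome to be the explicit sequences quoted in the statement.
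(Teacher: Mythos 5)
Your third bullet and your handling of the first bullet match the paper: bullet three is exactly two applications of \thmref{t:distancethreenonadaptive}, and for the fixed sequences \eqnref{e:713sevenstabilizers} and \eqnref{e:1573ninestabilizers} the paper, like you, simply relies on an exhaustive check of all weight-one input errors and all internal faults. The genuine gap is in the second bullet, which is the heart of the proposition. That bullet is existential: a proof must exhibit the $\hammingparameter-1$ $Y$-basis stabilizers and show the resulting $3\hammingparameter-1$ sequence is fault tolerant, and you explicitly defer precisely this (``finding those $T_\ell$ and orderings and certifying that no case slips through is the real work''). Your instinct that the choice of supports matters is correct and not a technicality: if one naively takes the $Y$-type rows to be the standard generators $2,\dots,\hammingparameter$ in the $Y$ basis, the scheme already fails for $\hammingparameter=3$ --- an $X$ fault on qubit $5$ (label $101$) occurring just after the first $Z$-generator measurement produces the identical record ($Z$ round $001$, $X$ round $000$, $Y$ round $01$) as an input $X$ error on qubit $1$, which any valid decoder must correct by $X_1$, leaving an error equivalent to the weight-two $X_1X_5$. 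So until the supports are fixed there is nothing for your ``mechanical'' case analysis to run on, and the proposal as written does not establish the bullet.

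What closes the gap (and is the paper's choice, visible in Eqs.~\eqnref{e:713eightstabilizers} and~\eqnref{e:1573ZXYstabilizers}) is to take the $Y$-basis rows to be the product of the first standard generator with each of the remaining $\hammingparameter-1$ generators. With qubits labelled by nonzero $w\in\{0,1\}^{\hammingparameter}$ as in your setup, a weight-one $X$ or $Z$ error on qubit $w$ then gives $Y$-round outcome $(w_1\oplus w_2,\dots,w_1\oplus w_\hammingparameter)$, which vanishes only for the all-ones label. Your predict-and-compare decoder then works, and the case analysis collapses to one observation: a single fault on qubit $v$ is recorded in the round where it occurs as a truncation $u$ of its true syndrome (bits of already-measured generators zeroed) and is seen in full by later rounds, and the prediction built from $u$ can match what the $Y$ round actually observes only when the discrepancy $u\oplus v$ is a constant vector --- either $u=v$, so the fault is equivalent to an input error and is corrected exactly, or the all-ones case, in which the only qubit that can be (mis)identified is the all-ones qubit itself, so the applied correction differs from the true error by weight at most one. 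In every other situation the mismatch fires your conservative branch and the residual error is just the single fault. In particular the $Y$-fault-during-the-$X$-round scenario you single out is benign for this choice, since an all-zero $Y$ round is consistent only with a $Z$ error on the all-ones qubit; and the intra-round ordering you worry about is immaterial, since the argument uses only which generators precede the fault. Supplying this explicit choice of $Y$ stabilizers together with this mismatch argument (or, as the paper does for its displayed instances, a direct finite verification) is the missing content; note the paper itself only states the general construction and verifies the $\hammingparameter=3,4$ instances, so adding the argument above would make your route more complete than the printed one.
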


Note that it is not fault tolerant just to measure the $\hammingparameter$ $Z$ and $\hammingparameter$ $X$ stabilizer generators fault tolerantly.  For example, with either code, an $X$ fault on qubit $3$ just before the last stabilizer measurement creates the same syndrome as an input $X$ error on qubit $1$.  But applying an $X_1$ correction would result in the error $X_1 X_3$, which is one away from the logical error $X_1 X_2 X_3$.  In fact, because of the perfect CSS property, the $2^\hammingparameter - 1$ possible weight-one input errors use all $2^\hammingparameter - 1$ possible nontrivial $\hammingparameter$-stabilizer syndromes.  Necessarily, therefore, some faults during syndrome extraction will lead to syndromes that are the same as syndromes from input errors.  Thus sequential measurement of any fixed set of $\hammingparameter$ stabilizer generators can never be fault tolerant.  More measurements are needed.  

Consider first the $\llbracket 7,1,3 \rrbracket$ code. 
Measuring in order the following seven stabilizers suffices for fault-tolerant error correction: 
\begin{equation} \label{e:713sevenstabilizers}
\begin{array}{r c c c c c c c}
&I&X&X&Z&Z&Y&Y\\
&X&I&X&I&X&I&X\\
&I&I&I&X&X&X&X\\
&I&Z&Z&I&I&Z&Z\\
&Z&I&Z&Z&I&Z&I\\
&I&Y&Y&Y&Y&I&I\\
&Y&I&Y&I&Y&I&Y
\end{array}
\end{equation}
As with the $\llbracket 5,1,3 \rrbracket$ code, the particular set of stabilizers and the order in which they are measured matters considerably.  It is not immediately obvious that this order works, but it can be verified by computing the syndromes for all $7 \times 3 = 21$ nontrivial one-qubit input errors as well as the results of internal faults.  

The first stabilizer in Eq.~\eqnref{e:713sevenstabilizers} mixes $X$, $Y$ and $Z$ operators.  Should this be undesirable in an experiment, the following sequence of eight measurements also allows for fault-tolerant error correction: 
\begin{equation} \label{e:713eightstabilizers}
\begin{array}{r c c c c c c c}
&I&I&I&Z&Z&Z&Z\\
&I&Z&Z&I&I&Z&Z\\
&Z&I&Z&I&Z&I&Z\\[.1cm]
&I&I&I&X&X&X&X\\
&I&X&X&I&I&X&X\\
&X&I&X&I&X&I&X\\[.1cm]
&I&Y&Y&Y&Y&I&I\\
&Y&I&Y&Y&I&Y&I
\end{array}
\end{equation}
Observe that the first six measurements are simply the standard $X$ and $Z$ stabilizer generators from Eq.~\eqnref{e:hammingcodestabilizers}.  
The last two $Y$ stabilizers are measured to prevent bad syndrome suffixes from internal faults.  

The construction of Eq.~\eqnref{e:713eightstabilizers} generalizes to the entire family of Hamming codes.  For the $2^\hammingparameter - 1$ qubit 
Hamming code, first measure the $\hammingparameter$ $Z$ and $\hammingparameter$ $X$ standard stabilizer generators.  Then make $\hammingparameter - 1$ further $Y$ measurements: measure in the $Y$ basis the first standard stabilizer generator times each of the generators $2$ through $\hammingparameter - 1$.    
This makes for $\hammingparameter + \hammingparameter + (\hammingparameter-1) = 3 \hammingparameter - 1$ stabilizer measurements total.  For example, for the $\hammingparameter = 4$, $\llbracket 15, 7, 3 \rrbracket$ Hamming code, the $11$-stabilizer sequence generalizing Eq.~\eqnref{e:713eightstabilizers} is 
\begin{equation} \label{e:1573ZXYstabilizers}
\!\!\!\!\!\!\begin{array}{r c c c c c c c c c c c c c c c}
&I&I&I&I&I&I&I&Z&Z&Z&Z&Z&Z&Z&Z\\
&I&I&I&Z&Z&Z&Z&I&I&I&I&Z&Z&Z&Z\\
&I&Z&Z&I&I&Z&Z&I&I&Z&Z&I&I&Z&Z\\
&Z&I&Z&I&Z&I&Z&I&Z&I&Z&I&Z&I&Z\\[.1cm]
&I&I&I&I&I&I&I&X&X&X&X&X&X&X&X\\
&I&I&I&X&X&X&X&I&I&I&I&X&X&X&X\\
&I&X&X&I&I&X&X&I&I&X&X&I&I&X&X\\
&X&I&X&I&X&I&X&I&X&I&X&I&X&I&X\\[.1cm]
&I&I&I&Y&Y&Y&Y&Y&Y&Y&Y&I&I&I&I\\
&I&Y&Y&I&I&Y&Y&Y&Y&I&I&Y&Y&I&I\\
&Y&I&Y&I&Y&I&Y&Y&I&Y&I&Y&I&Y&I
\end{array}
\end{equation}

If $Y$ measurements are impossible in an experiment, then by \thmref{t:distancethreenonadaptive} $2 \hammingparameter - 1$ $Z$ and $2 \hammingparameter - 1$ $X$ stabilizer measurements suffice for fault-tolerant error correction.  

Finally, the following nine measurements suffice for fault-tolerant error correction for the $\llbracket 15,7,3 \rrbracket$ code: 
\begin{equation} \label{e:1573ninestabilizers}
\!\!\!\begin{array}{r c c c c c c c c c c c c c c c}
I&X&X&Z&Z&Y&Y&X&X&I&I&Y&Y&Z&Z\\
Z&I&Z&X&Y&X&Y&I&Z&I&Z&X&Y&X&Y\\
I&I&I&I&I&I&I&X&X&X&X&X&X&X&X\\
Y&I&Y&I&Y&I&Y&Y&I&Y&I&Y&I&Y&I\\
I&Y&Y&Y&Y&I&I&I&I&Y&Y&Y&Y&I&I\\
I&I&I&X&X&X&X&I&I&I&I&X&X&X&X\\
I&Z&Z&I&I&Z&Z&I&I&Z&Z&I&I&Z&Z\\
I&Z&Z&I&I&Z&Z&Z&Z&I&I&Z&Z&I&I\\
Z&I&Z&Z&I&Z&I&X&Y&X&Y&Y&X&Y&X
\end{array}
\end{equation}
This is analogous to Eq.~\eqnref{e:713sevenstabilizers} for the $\llbracket 7,1,3 \rrbracket$ code, in the sense that it uses some weight-$12$ operators that mix $X$, $Y$ and $Z$.  
We do not know 
the least number of measurements for the $2^\hammingparameter - 1$ qubit Hamming code, in general.  

\begin{open question}
Find a minimum-length sequence of Pauli measurements for distance-three fault-tolerant error correction with the $\llbracket 2^\hammingparameter-1, 2^\hammingparameter- 1 - 2 \hammingparameter, 3 \rrbracket$ Hamming code.
\end{open question}

\section{Single-shot error correction with a \texorpdfstring{$\llbracket 16, 4, 3 \rrbracket$}{[[16,4,3]]} color code} \label{s:1643colorcode}

In this section, we show a $16$-qubit CSS code for which single-shot fault-tolerant error correction is possible, i.e., in which the stabilizer generators are each measured exactly once.  However, the code has a lower rate than the $15$-qubit Hamming code, so the more efficient error correction trades off against space efficiency.  

Consider the $16$-qubit color code in \figref{f:1643colorcode}~\cite{Reichardt18steane}.  
There is a qubit for each vertex, indexed as in the diagram, and for each shaded plaquette there is both a $Z$ stabilizer and an $X$ stabilizer on the incident qubits.  For example $Z^{\otimes 6} \otimes \identity$ and $X^{\otimes 6} \otimes \identity$ are stabilizers on the first six qubits, corresponding to a green hexagon.  This gives a $\llbracket 16, 4, 3 \rrbracket$ self-dual CSS code.  

\definecolor{lightgreen}{rgb}{.46,1,.484}
\definecolor{lightred}{rgb}{1,.494,.475}
\definecolor{lightblue}{rgb}{.478,.506,1}

\begin{figure}
\centering
\subfigure[\label{f:1643colorcode}]{
\raisebox{0cm}{\includegraphics[scale=.3]{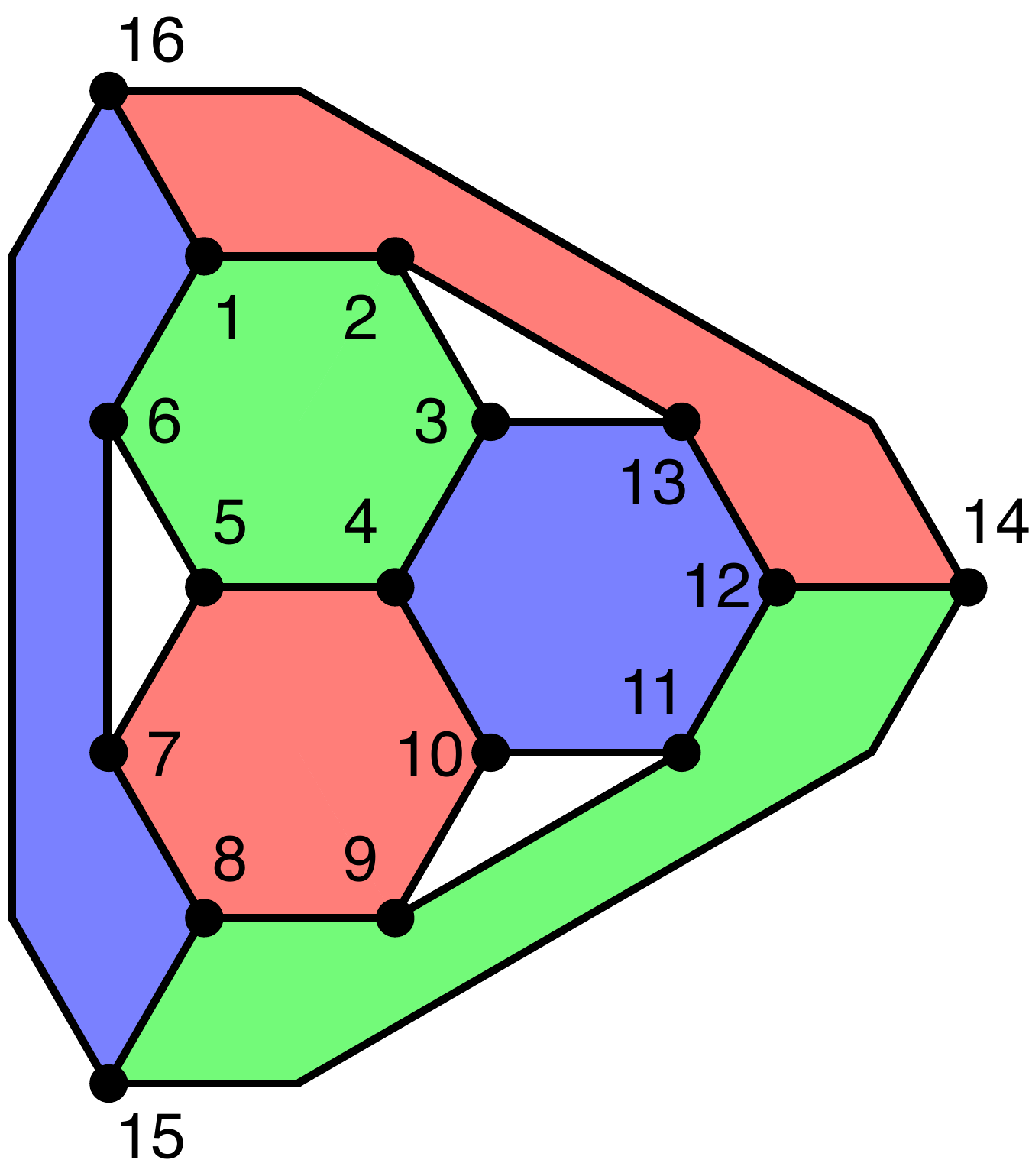}}
}
\subfigure[\label{f:1643syndromemeasurements}]{$
\begin{array}{cccccccccccccccc}
\multicolumn{6}{c}{\!\text{\sethlcolor{lightgreen}\hl{$1 \;\,\, 1 \;\, 1 \;\,\, 1 \;\, 1 \;\,\, 1$}}}&\cdot&\cdot&\cdot&\cdot&\cdot&\cdot&\cdot&\cdot&\cdot&\cdot \\
\cdot&\cdot&\cdot&\cdot&\cdot&\cdot&\cdot&\multicolumn{8}{c}{\!\text{\sethlcolor{lightgreen}\hl{$1 \;\,\, 1 \;\, \cdot \;\,\, 1 \;\, 1 \;\,\, \cdot \;\,\, 1 \;\,\, 1$}}}&\cdot \\
\cdot&\cdot&\cdot&\multicolumn{7}{c}{\text{\sethlcolor{lightred}\hl{$1 \;\,\, 1 \;\, \cdot \;\,\, 1 \;\, 1 \;\,\, 1 \;\,\, 1$}}}&\cdot&\cdot&\cdot&\cdot&\cdot&\cdot \\
\cdot&1&\cdot&1&\cdot&\cdot&\cdot&1&\cdot&\cdot&\cdot&\cdot&1&1&\cdot&1 \\
\cdot&\cdot&\cdot&1&\cdot&1&1&\cdot&\cdot&\cdot&\cdot&1&\cdot&\cdot&1&1 \\
\cdot&\cdot&\multicolumn{11}{c}{\!\text{\sethlcolor{lightblue}\hl{$1 \;\, 1 \;\, \cdot \;\,\, \cdot \;\, \cdot \;\,\, \cdot \;\, \cdot \;\,\, 1 \;\, 1 \;\,\, 1 \;\,\, 1$}}}&\cdot&\cdot&\cdot
\end{array}
$}
\subfigure[\label{f:1643colorcodeconcatenation}]{$
\begin{array}{cccccccccccccccc}
\multicolumn{6}{c}{\!\text{\sethlcolor{lightgreen}\hl{$1 \;\,\, 1 \;\, 1 \;\,\, 1 \;\, 1 \;\,\, 1$}}}&\cdot&\cdot&\cdot&\cdot&\cdot&\cdot&\cdot&\cdot&\cdot&\cdot \\
1&\cdot&\cdot&1&\cdot&\cdot&\cdot&\cdot&1&\cdot&1&\cdot&\cdot&1&1&\cdot\\
\cdot&1&\cdot&1&\cdot&\cdot&\cdot&1&\cdot&\cdot&\cdot&\cdot&1&1&\cdot&1\\
\cdot&\cdot&\cdot&\multicolumn{7}{c}{\text{\sethlcolor{lightred}\hl{$1 \;\,\, 1 \;\, \cdot \;\,\, 1 \;\, 1 \;\,\, 1 \;\,\, 1$}}}&\cdot&\cdot&\cdot&\cdot&\cdot&\cdot \\
\cdot&\cdot&\multicolumn{11}{c}{\!\text{\sethlcolor{lightblue}\hl{$1 \;\, 1 \;\, \cdot \;\,\, \cdot \;\, \cdot \;\,\, \cdot \;\, \cdot \;\,\, 1 \;\, 1 \;\,\, 1 \;\,\, 1$}}}&\cdot&\cdot&\cdot\\
\multicolumn{6}{c}{\!\text{\sethlcolor{lightgreen}\hl{$1 \;\,\, 1 \;\, 1 \;\,\, 1 \;\, 1 \;\,\, 1$}}}&\cdot&\cdot&\cdot&\cdot&\cdot&\cdot&\cdot&\cdot&\cdot&\cdot \\
\cdot&\cdot&\cdot&1&\cdot&1&1&\cdot&\cdot&\cdot&\cdot&1&\cdot&\cdot&1&1\\
\cdot&\cdot&\cdot&\multicolumn{7}{c}{\text{\sethlcolor{lightred}\hl{$1 \;\,\, 1 \;\, \cdot \;\,\, 1 \;\, 1 \;\,\, 1 \;\,\, 1$}}}&\cdot&\cdot&\cdot&\cdot&\cdot&\cdot
\end{array}
$}
\caption{(a) A $\llbracket 16,4,3 \rrbracket$ color code.  
(b) Sequence of stabilizer measurements, in both $Z$ and $X$ bases, allowing for single-shot fault-tolerant error correction (\propref{t:1643colorcode}).  Here we have written $\cdot$ in place of $0$ to draw attention to the structure.  The highlighted stabilizers correspond to plaquettes in~(a).  
(c) Sequence of stabilizer measurements, in both $Z$ and $X$ bases, allowing for \emph{concatenation} fault-tolerant error correction (\propref{t:1643colorcodeconcatenation}).} \label{}
\end{figure}

\begin{proposition} \label{t:1643colorcode}
For this $\llbracket 16, 4, 3 \rrbracket$ code, the sequence of stabilizer measurements in \figref{f:1643syndromemeasurements}, in both $Z$ and~$X$ bases, allows for fault-tolerant error correction.  
The $12$ weight-six stabilizers measured are independent and so the sequence gives single-shot error correction with no redundant stabilizer measurements.  
\end{proposition}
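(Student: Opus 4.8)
The plan is to reduce everything to the finite combinatorial check from \secref{s:distance3csscodes}. Since the code is self-dual and the sequence of \figref{f:1643syndromemeasurements} uses the same six supports in both bases, it is enough, by the CSS decomposition in \defref{t:faulttolerance} (only the weaker notion is needed here, as the concatenation version is treated separately), to analyze $X$-error correction against a single $X$-type fault using the six $Z$-basis measurements $g_1,\dots,g_6$ in the displayed order; the $Z$-error analysis is word-for-word the same with $X\leftrightarrow Z$, and a $Y$ fault splits into one $X$-type and one $Z$-type fault for the two analyses. The interleaving of the $Z$- and $X$-basis measurements is irrelevant to the $X$-analysis: an $X$-basis measurement imposes no constraint on $X$ errors, and an $X$- or $Y$-type fault occurring during an $X$-basis measurement injects an $X$ error whose effect on the $Z$-syndrome is that of an input $X$ error arising between two of the $Z$-basis measurements---exactly the case already covered below.

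Next I would set up the bookkeeping. Let $c_1,\dots,c_{16}\in\{0,1\}^6$ be the columns of the $6\times 16$ $Z$-parity-check matrix determined by $g_1,\dots,g_6$, so $c_q$ is the syndrome of the single-qubit error $X_q$. Because the code has distance three and all its stabilizers have weight six (hence there is no weight-two stabilizer), the $c_q$ are nonzero and pairwise distinct, so a noiseless syndrome pins down the input error. By the discussion of \secref{s:513code}, for a CSS code with separately run $X$ and $Z$ correction it suffices to tolerate faults \emph{between} syndrome measurements: a single $X$-type fault produces the $Z$-syndrome $0^k(c_q)_{k+1}\cdots(c_q)_6$ for some qubit $q$ and some $k\in\{0,1,\dots,6\}$, and a fault during a $Z$-basis measurement (which in the standard cat-state circuit can flip that bit only if the faulty qubit lies in the support, otherwise it merely causes a wrongly applied weight-one correction) reduces to the same family of truncated syndromes. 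Fault tolerance to distance three then amounts to the absence of a ``bad pair'': qubits $q\neq q'$ and an index $k$ with $0^k(c_q)_{k+1}\cdots(c_q)_6=c_{q'}$. If there is no bad pair, the procedure applies the input-error correction whenever the observed syndrome equals some $c_q$ and no correction otherwise, and this is CSS fault tolerant to distance three.

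The core of the proof is then to exhibit the sixteen columns for the specific generating set and order of \figref{f:1643syndromemeasurements} and verify, over all $16\cdot 15\cdot 7$ (qubit, qubit, truncation) triples, that no truncation of one column equals another column. I expect this verification to be mechanical once the columns are written out; the substance is that the generating set was chosen---including the two rows that are not plaquette operators---so that the early truncations of the columns with large syndrome weight avoid every genuine single-qubit syndrome, and pinning down such a set and order is the step I expect to be the real obstacle to reconstructing by hand, as opposed to checking. Finally, for the ``single-shot, no redundant measurements'' assertion, I would record that the $6\times 16$ $Z$- and $X$-parity-check matrices each have rank six and together have rank $12=n-k$, so the twelve measured weight-six operators form an independent full generating set, each measured exactly once.
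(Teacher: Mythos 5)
Your proposal is correct and follows essentially the same route as the paper: reduce, via the CSS fault-tolerance convention and self-duality, to the single-basis analysis, and then verify that the sixteen syndrome columns of the six measured supports are pairwise distinct and that no suffix $0^k(c_q)_{k+1}\cdots(c_q)_6$ of one column coincides with another column, plus the rank count $6+6=12=n-k$ for the single-shot claim. The paper's own proof is exactly this column/suffix check (stated with one illustrative example), so the only difference is that you spell out the reduction and the bookkeeping in more detail.
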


The proposition can be rapidly verified by noting that the columns in \figref{f:1643syndromemeasurements} are all distinct (so the code has distance three), and furthermore their suffixes are all distinct from these columns (so faults during syndrome extraction cannot be confused with input errors).  For example, the syndrome along the second column is $100100$, and its suffix $000100$ does not appear as any column.  

Although this $\llbracket 16, 4, 3 \rrbracket$ code allows for single-shot fault-tolerant error correction, it might still be preferable in practice to use a code like the $\llbracket 15, 7, 3 \rrbracket$ Hamming code.  In addition to having higher rate, the $15$-qubit Hamming code allows for error correction with only $11$ stabilizer measurements, shown in Eq.~\eqnref{e:1573ZXYstabilizers}, or $14$ stabilizers if $Y$ operators cannot be measured.  

\subsection*{Concatenation fault-tolerant error correction}

Unlike for the perfect CSS and perfect codes considered through \secref{s:hammingcodes}, for this color code, the syndrome measurement sequence in \propref{t:1643colorcode} is not enough for concatenation fault tolerance (see \secref{s:faulttolerance}).  In particular, the input error $X_1 X_2$ leads to the syndrome $000100$ if there are no internal faults.  For concatenation fault tolerance, some correction needs to be applied to restore the state to the codespace.  However, the same syndrome can arise from a perfect input if the fourth syndrome bit is incorrectly flipped, so fault tolerance requires that the correction have weight at most one.  No correction works.  

Concatenation fault tolerance is possible with two more stabilizer measurements: 

\begin{proposition} \label{t:1643colorcodeconcatenation}
For the $\llbracket 16,4,3 \rrbracket$ color code of \figref{f:1643colorcode}, the sequence of stabilizer measurements in \figref{f:1643colorcodeconcatenation}, in both $Z$ and $X$ bases, allows for concatenation fault-tolerant error correction.  
\end{proposition}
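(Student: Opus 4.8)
The plan is to check the two requirements that make up concatenation fault tolerance at distance three (see \secref{s:faulttolerance}): requirement (i), the condition of \defref{t:faulttolerance}, namely that a weight-$\leq 1$ input error is corrected back to a stabilizer and that a single internal fault leaves a residual error of weight $\leq 1$; and requirement (ii), the strictly stronger condition that on an \emph{arbitrary} input a single internal fault leaves the state within distance one of the codespace. Since this $\llbracket 16,4,3\rrbracket$ code is not perfect, (ii) does not follow from (i), which is exactly why the sequence of \figref{f:1643colorcodeconcatenation} has two more stabilizer measurements than the single-shot sequence of \propref{t:1643colorcode}. Because the code is a self-dual CSS code and the same pattern is measured in the $Z$ and the $X$ bases, it is enough to analyze $X$-error correction from the eight measured $Z$-stabilizers --- the rows of \figref{f:1643colorcodeconcatenation} --- with the $Z$ case identical under $X\leftrightarrow Z$, and with $X$ and $Z$ faults treated independently as in \secref{s:distance3csscodes}.

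First I would record the structure of the sequence. Six of the eight measured $Z$-stabilizers are independent and generate the $Z$-stabilizer group; the other two repeat earlier measurements --- the green hexagon on the first six qubits and the red plaquette are each measured once earlier and once later. Consequently every \emph{honest} eight-bit syndrome (one produced with no faults) satisfies two parity relations, one equating the two green-hexagon bits and one equating the two red-plaquette bits. I then fix the decoder: on any eight-bit string satisfying both parity relations, apply the unique minimum-weight $X$-type correction with that syndrome (well defined because the code has distance three, so the seventeen syndromes of weight-$\leq 1$ $X$-errors are distinct); on any string violating a relation --- which can only happen if a fault occurred during the measurements --- discard the stale copy of the repeated measurement and correct according to the late, reliable bits. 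With the decoder fixed, requirement (i) reduces to the same finite check as in \propref{t:1643colorcode}: verify that the sixteen columns of \figref{f:1643colorcodeconcatenation} are pairwise distinct and that no suffix of a column coincides with another column, so that no fault during syndrome extraction is mistaken for an input error on a different qubit.

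For requirement (ii) --- the content genuinely beyond \propref{t:1643colorcode} --- suppose the input carries an arbitrary $X$-error $e$ and a single internal fault inserts $X_q$ right after the $i$-th measurement. The observed syndrome $S$ then agrees with the honest syndrome $\sigma(eX_q)$ of the would-be residual error, \emph{except} in those coordinates $j\leq i$ for which $X_q$ anticommutes with the $j$-th measured stabilizer. I would split on whether all of $X_q$'s anticommuting stabilizers were measured after the fault: if so, $S=\sigma(eX_q)$, the decoder applies a matching-syndrome correction, and the output is in the codespace. Otherwise $S\neq\sigma(eX_q)$, and the role of re-measuring the green hexagon and the red plaquette late in the sequence is to ensure that whenever this mismatch could cause trouble it is flagged by one of the two parity relations, after which the decoder's late-bits rule can be shown to leave a residual error whose syndrome is trivial or that of a single-qubit error --- i.e., the output is within distance one of the codespace. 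Checking that this holds for every qubit $q$, every fault position $i$, and every resulting corrupted-syndrome pattern is a bounded enumeration, best carried out by computer, just as the column-and-suffix check above.

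The main obstacle is precisely requirement (ii): one must confirm that these two \emph{particular} extra measurements, placed where they are, suffice --- that every way a single fault can corrupt the syndrome on an arbitrary input is either invisible to all earlier measurements (hence corrected exactly) or flagged in a way the decoder can compensate for within distance one. The single-shot sequence of \propref{t:1643colorcode} fails this: the input error $X_1X_2$ and a single faulty measurement produce the same observed syndrome, and no weight-$\leq 1$ correction reconciles ``return $X_1X_2$ to the codespace'' with ``a measurement flipped,'' so concatenation fault tolerance provably needs the extra redundancy. Pinning down the decoder on the parity-violating syndromes and checking that no two incompatible demands land on the same string is the crux, and is discharged by exhaustive enumeration over the finitely many fault scenarios.
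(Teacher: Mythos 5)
Your plan lands in essentially the same place as the paper: the paper's entire proof is a computer search that, for every possible syndrome, finds a correction consistent with all fault scenarios producing that syndrome (the decoder-agnostic ``no two incompatible demands land on the same string'' check you identify as the crux), and your proposal likewise discharges everything by exhaustive enumeration over the finitely many $\leq 1$-fault scenarios. The differences are in the scaffolding you add, and two of those intermediate claims should not be leaned on. First, the minimum-weight correction is only guaranteed unique (up to stabilizers) for syndromes of weight-$\leq 1$ errors; for an arbitrary-input syndrome there may be several coset leaders, though any syndrome-matching correction suffices for the zero-internal-fault case. Second, and more substantively, once you \emph{fix} the decoder to be minimum-weight on parity-satisfying syndromes, requirement (i) does \emph{not} reduce to the column-and-suffix check of \propref{t:1643colorcode}: that check only rules out an internal fault's syndrome coinciding with a different single-qubit input syndrome, which certifies that \emph{some} consistent correction rules exist (e.g., apply no correction in those cases); it does not certify that your particular minimum-weight rule leaves a weight-$\leq 1$ residual, since a suffix syndrome could min-weight-decode to a weight-two correction on top of the actual weight-one fault. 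Relatedly, committing a priori to the min-weight/``late-bits'' decoder risks the enumeration failing even though the proposition is true; the robust formulation, and the one the paper actually uses, is to search for \emph{any} consistent assignment of corrections to syndromes, which your closing consistency check amounts to. With that reading, your proposal is a more elaborate but equivalent route to the same finite verification.
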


The proposition is verified by using a computer to find a consistent correction for every possible syndrome.

\section{Codes designed for fast error correction} \label{s:codesdesignedforfastec}

We have so far designed efficient and fault-tolerant syndrome-measurement schemes for existing codes.  One can also design codes to facilitate efficient fault-tolerant syndrome measurement.  To do so, let us begin by considering three simple codes; then we will generalize them.

\subsection{Base codes}

Here are the parity checks for a $[4, 1, 4]$ classical linear code (a repetition code), and the stabilizer generators for $\llbracket 8, 2, 3 \rrbracket$ and $\llbracket 8, 3, 3 \rrbracket$ quantum stabilizer codes: 
\begin{align} \label{e:basecodes}
\raisebox{.5cm}{$
\begin{gathered}
\underline{\text{$[4, 1, 4]$ code}} \\
\begin{array}{cccc}
1&1&1&1\\
0&0&1&1\\
0&1&0&1
\end{array}
\end{gathered}$}
\end{align}\begin{align*}
\begin{gathered}
\underline{\text{$\llbracket 8, 2, 3 \rrbracket$ code}} \\
\begin{array}{cccccccc}
Z&Z&Z&Z&I&I&I&I\\
X&X&X&X&I&I&I&I\\
I&I&I&I&Z&Z&Z&Z\\
I&I&I&I&X&X&X&X\\
I&X&Y&Z&I&X&Y&Z\\
I&Z&X&Y&I&Z&X&Y
\end{array}
\end{gathered}
&&
\begin{gathered}
\underline{\text{$\llbracket 8, 3, 3 \rrbracket$ code}} \\
\begin{array}{cccccccc}
Z&Z&Z&Z&Z&Z&Z&Z\\
X&X&X&X&X&X&X&X\\
I&I&Z&Y&X&Z&Y&X\\
I&Z&X&I&X&Y&Z&Y\\
I&X&I&Z&Z&X&Y&Y\\
\ 
\end{array}
\end{gathered}
\end{align*} 
This $\llbracket 8,3,3 \rrbracket$ code is equivalent to the first in a family of $\llbracket 2^\codeparameter, 2^\codeparameter - \codeparameter - 2, 3 \rrbracket$ codes 
by Gottesman~\cite{Gottesman96codes, Gottesman97thesis}.  

For the $[4,1,4]$ repetition code, one can measure 
the 
parity checks in order $1111$, $0011$, $0101$.  This already suffices for fault-tolerant error correction, because an internal fault cannot 
trigger the first parity check and therefore cannot 
be confused with an input error.  With adaptive control, the last two checks need only be measured should the first parity be odd.  This observation is not immediately interesting because the code is classical.  We will generalize it to a family of quantum codes below.  

For the $\llbracket 8,3,3 \rrbracket$ code, one can measure in order $Z^{\otimes 8}, X^{\otimes 8}, Z^{\otimes 8}$, and then the remaining three stabilizer generators---six stabilizer measurements total---and this will suffice for fault-tolerant error correction.  Indeed, with perfect stabilizer measurement the $Z^{\otimes 8}, X^{\otimes 8}$ measurements suffice to identify the type $X$, $Y$ or $Z$ of any one-qubit error, and the last three stabilizer generators localize the error.  For fault-tolerance, we measure $Z^{\otimes 8}$ a second time in order to handle the case of an $X$ or $Y$ fault occurring after the first $Z^{\otimes 8}$ measurement.  No other syndrome suffixes can be problematic; a fault occurring after the three transversal stabilizer measurements will not trigger any of them and therefore cannot be confused with an input error.  

Furthermore, with this $\llbracket 8,3,3 \rrbracket$ code, should the experimental hardware support adaptive measurements, one can first measure just $Z^{\otimes 8}$ and $X^{\otimes 8}$, and then only if one or both are nontrivial continue on to measure $Z^{\otimes 8}$ and the last three stabilizer generators.  The first two measurements suffice to detect any one-qubit input error.  

The $\llbracket 8,2,3 \rrbracket$ code above is similar to the $\llbracket 8,3,3 \rrbracket$ code.  For fault-tolerant error correction one can measure the stabilizers $Z^{\otimes 4} \otimes \identity, X^{\otimes 4} \otimes \identity, Z^{\otimes 4} \otimes \identity$, then $\identity \otimes Z^{\otimes 4}, \identity \otimes X^{\otimes 4}, \identity \otimes Z^{\otimes 4}$, 
and then the last three stabilizer generators.  Stabilizers supported on the first four qubits can potentially be measured in parallel to the stabilizers on the last four qubits.  With adaptive control, if the results of measuring 
$Z^{\otimes 4} \otimes \identity, X^{\otimes 4} \otimes \identity$ and $\identity \otimes Z^{\otimes 4}, \identity \otimes X^{\otimes 4}$ 
are trivial, then further stabilizers need not be measured.

\subsection{Generalized codes} \label{s:generalizedcodes}

Next we generalize the above base codes in order to develop families of distance-three quantum error-correcting codes with fault-tolerant error-correction procedures that are efficient, in the sense of requiring few stabilizer measurements.

\subsubsection{Generalizing the $[4, 1, 4]$ classical repetition code}

Let us start by extending the $[4, 1, 4]$ classical repetition code; the procedures for generalizing the other codes will be quite similar.  

Consider the following two parity-check matrices on $16$ and $24$ bits, respectively: 
\begin{gather*}
\begin{smallmatrix}
1&1&1&1&\cdot&\cdot&\cdot&\cdot&\cdot&\cdot&\cdot&\cdot&\cdot&\cdot&\cdot&\cdot\\
\cdot&\cdot&\cdot&\cdot&1&1&1&1&\cdot&\cdot&\cdot&\cdot&\cdot&\cdot&\cdot&\cdot\\
\cdot&\cdot&\cdot&\cdot&\cdot&\cdot&\cdot&\cdot&1&1&1&1&\cdot&\cdot&\cdot&\cdot\\
\cdot&\cdot&\cdot&\cdot&\cdot&\cdot&\cdot&\cdot&\cdot&\cdot&\cdot&\cdot&1&1&1&1\\
\cdot&\cdot&1&1&\cdot&\cdot&1&1&\cdot&\cdot&1&1&\cdot&\cdot&1&1&\\
\cdot&1&\cdot&1&\cdot&1&\cdot&1&\cdot&1&\cdot&1&\cdot&1&\cdot&1
\end{smallmatrix}
\\[.2cm]
\begin{smallmatrix}
1&1&1&1&\cdot&\cdot&\cdot&\cdot&\cdot&\cdot&\cdot&\cdot&\cdot&\cdot&\cdot&\cdot&\cdot&\cdot&\cdot&\cdot&\cdot&\cdot&\cdot&\cdot\\
\cdot&\cdot&\cdot&\cdot&1&1&1&1&\cdot&\cdot&\cdot&\cdot&\cdot&\cdot&\cdot&\cdot&\cdot&\cdot&\cdot&\cdot&\cdot&\cdot&\cdot&\cdot\\
\cdot&\cdot&\cdot&\cdot&\cdot&\cdot&\cdot&\cdot&1&1&1&1&\cdot&\cdot&\cdot&\cdot&\cdot&\cdot&\cdot&\cdot&\cdot&\cdot&\cdot&\cdot\\
\cdot&\cdot&\cdot&\cdot&\cdot&\cdot&\cdot&\cdot&\cdot&\cdot&\cdot&\cdot&1&1&1&1&\cdot&\cdot&\cdot&\cdot&\cdot&\cdot&\cdot&\cdot\\
\cdot&\cdot&\cdot&\cdot&\cdot&\cdot&\cdot&\cdot&\cdot&\cdot&\cdot&\cdot&\cdot&\cdot&\cdot&\cdot&1&1&1&1&\cdot&\cdot&\cdot&\cdot\\
\cdot&\cdot&\cdot&\cdot&\cdot&\cdot&\cdot&\cdot&\cdot&\cdot&\cdot&\cdot&\cdot&\cdot&\cdot&\cdot&\cdot&\cdot&\cdot&\cdot&1&1&1&1\\
\cdot&\cdot&1&1&\cdot&\cdot&1&1&\cdot&\cdot&1&1&\cdot&\cdot&1&1&\cdot&\cdot&1&1&\cdot&\cdot&1&1&\\
\cdot&1&\cdot&1&\cdot&1&\cdot&1&\cdot&1&\cdot&1&\cdot&1&\cdot&1&\cdot&1&\cdot&1&\cdot&1&\cdot&1
\end{smallmatrix}
\end{gather*}
We have written $\cdot$ in place of $0$ to draw attention to the structure.  The bits are divided into blocks of four, with $1111$ parity checks, and the last two parity checks have the same form, $0011$ or $0101$, on each block.  

The above parity checks define self-orthogonal $[16, 10, 4]$ and $[24, 16, 4]$ classical linear codes.  
By the CSS construction, using the same parity checks in both the $Z$ and $X$ bases, they induce $\llbracket 16, 4, 4 \rrbracket$ and $\llbracket 24, 8, 4 \rrbracket$ self-dual quantum stabilizer codes.  
These quantum codes can also be constructed by concatenating two copies of the $\llbracket 2\codeparameter, 2\codeparameter - 2, 2 \rrbracket$ code with the $\llbracket 4,2,2 \rrbracket$ code.  
The codes can be extended by adding eight qubits at a time, two blocks of four. 
For $\codeparameter \geq 2$, this defines $\llbracket 8 \codeparameter, 4 (\codeparameter - 1), 4 \rrbracket$ self-dual CSS codes.  

These codes are potentially of interest for a variety of reasons, e.g., the $\llbracket 16, 4, 4 \rrbracket$ code isn't too far off in terms of rate from the perfect CSS $\llbracket 15, 7, 3 \rrbracket$ Hamming code, yet it has higher distance. 

For us they are of interest because they allow for distance-three fault-tolerant error correction with \emph{single-shot} nonadaptive 
stabilizer measurements. 
As for the $[4, 1, 4]$ classical code described above, it suffices to measure the $Z^{\otimes 4}$ and $X^{\otimes 4}$ stabilizers on each block, and then (if some block's stabilizer measurements are nontrivial) measure the last two parity checks in the $Z$ and/or $X$ bases: $(IIZZ)^{\otimes \codeparameter}, (IZIZ)^{\otimes \codeparameter}, (IIXX)^{\otimes \codeparameter}, (IXIX)^{\otimes \codeparameter}$.  No redundant syndrome information needs to be measured for distance-three fault-tolerant error correction.  An $X$ fault, for example, occurring in a block after the $ZZZZ$ measurement necessarily leads to a syndrome different from that caused by any weight-one $X$ input error (which always triggers some $ZZZZ$ stabilizer).  

\begin{theorem} \label{t:generalizing414}
Each code in this family of $\llbracket 8 \codeparameter, 4 (\codeparameter - 1), 4 \rrbracket$ self-dual CSS codes, for $\codeparameter \geq 2$, allows for single-shot \emph{distance-three} fault-tolerant error correction, and sub-single-shot with adaptive measurements. 
\begin{itemize}[leftmargin=*]
\item Nonadaptive case: $4 (\codeparameter + 1)$ measurements, depth $\codeparameter + 4$.  
\item Adaptive case: $4 \codeparameter$ or $4 \codeparameter + 2$ measurements, depth $\codeparameter + 2$.  
\end{itemize}
\end{theorem}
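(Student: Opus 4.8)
The plan is to exploit the two-level structure of the parity-check matrix. The $2p$ weight-four ``block'' checks $Z^{\otimes 4}_{(b)}$ (and their $X$-analogues), one for each block $b\in\{1,\dots,2p\}$ of four qubits, form the coarse part: they pin down which block a weight-one error lies in. The two ``global'' checks --- the $0011$- and $0101$-patterns repeated on every block --- form the fine part: for a weight-one error in block $b$ at position $j\in\{1,2,3,4\}$, the pair of global syndrome bits is exactly the two-bit binary address of $j$. First I would record the basic code facts: every codeword of the classical $[8p,6p-2,4]$ code has even weight in each block (ruling out weight one), and a one-line check of the six position pairs within a single block against $0011,0101$ rules out weight two, so the minimum distance is exactly four; hence weight-one $X$ (and $Z$) errors have pairwise-distinct syndromes, and the $2p+2$ block-plus-global checks in each basis are independent, so the $4p+4$ generators of the self-dual $\llbracket 8p,4(p-1),4\rrbracket$ CSS code are exactly these checks. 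Since $X$ and $Z$ correction run separately and a single $Y$ fault contributes weight at most one to each of the $X$- and $Z$-parts, it suffices (for distance-three, i.e.\ $t=1$, fault tolerance) to analyze $X$-error correction with the ordered $Z$-sequence $Z^{\otimes 4}_{(1)},\dots,Z^{\otimes 4}_{(2p)}$ followed by the two global $Z$-checks.

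The key step is the distance-three CSS suffix condition of \thmref{t:distancethreenonadaptive}, specialized to this order. An internal $X$ fault on a qubit in block $b$ either occurs before the measurement of $Z^{\otimes 4}_{(b)}$, in which case it is equivalent to an input $X$ error and is corrected exactly (residual weight $0$); or it occurs after $Z^{\otimes 4}_{(b)}$, in which case the only block check that could detect it was already read trivially, so the observed syndrome has all $2p$ block bits zero. Since every weight-one $X$ input error has exactly one nonzero block bit, an all-block-bits-zero syndrome is never the syndrome of a weight-one input error; the decoder, which applies a weight-one correction only when the syndrome matches some weight-one input error and otherwise applies none, therefore leaves only the weight-one fault itself, which is allowed. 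A single syndrome-bit fault on the (zero) true syndrome either flips a block bit, causing a spurious weight-one correction in that block (residual weight one), or flips a global bit, giving a non-weight-one-error syndrome and hence no correction. These exhaust the single-internal-fault cases; with exact correction of one input error by distinctness of weight-one syndromes, this proves distance-three CSS fault tolerance. The total is $2p+2$ $Z$- plus $2p+2$ $X$-measurements, i.e.\ $4(p+1)$, every generator measured once (single-shot).

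For the adaptive variant, measure the $2p$ block $Z$-checks and $2p$ block $X$-checks first. If all are trivial, then no input error is present (one would trigger a block check) and any single internal fault has already been confined to a weight-one residual, so the procedure stops after $4p$ measurements --- sub-single-shot. If some block check in basis $P\in\{X,Z\}$ fires, say for block $b$, measure the two global $P$-checks, read off the within-block address $j$, and correct; the point is that with a single fault an $X$- and a $Z$-block check can fire only for the same block $b$ (as from one $Y$ error on one qubit), and the same address $j$ then serves both corrections, so two extra measurements always suffice --- at most $4p+2$ total. Rerunning the fault analysis against these branches gives fault tolerance. The depth bounds follow from a pipelined schedule: the block checks act on disjoint eight-qubit block-pairs and can be run ($Z$ then $X$) in $p$ rounds with a bounded per-pair ancilla register, after which the full-support global checks are measured serially, contributing the last four rounds nonadaptively and two rounds adaptively.

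I expect the main obstacle to be the bookkeeping for a fault that strikes during the two global checks --- between them or within one of them --- since those checks touch every block at once: one must confirm that the resulting syndrome (all block bits zero, global bits only partially updated) is diagnosed by the decoder as ``not a weight-one input error,'' which is exactly where the structural fact that every weight-one error triggers a block check does the work. A secondary subtlety is making the adaptive shared-address argument rigorous under the circuit-level fault model, in particular ruling out that a single fault makes an $X$- and a $Z$-block check fire for two \emph{different} blocks.
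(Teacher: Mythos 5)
Your proposal is correct and takes essentially the same route as the paper: the same block-checks-then-global-checks ordering, the same key observation that every weight-one input error triggers some block check while an internal fault occurring after its own block's check cannot (so such faults are never confused with input errors), the same adaptive early stop with only two extra global measurements justified by the shared within-block address for a $Y$ error, and parallelism over disjoint blocks for the depth counts. The additional bookkeeping you supply (explicit decoder rule, syndrome-bit-flip cases, independence and distance checks) simply fleshes out the paper's briefer verification.
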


Observe that measurements on different blocks of four qubits can be implemented in parallel.  Thus while $4 (\codeparameter + 1)$ stabilizer measurements are needed ($12$ measurements for the $\llbracket 16, 4, 4 \rrbracket$ code, most comparable to the $\llbracket 15, 7, 3 \rrbracket$ Hamming code), these measurements can be implemented in only six rounds.  

\medskip

Distance-three fault-tolerant error correction can handle up to one input error or internal fault.  Although the codes have distance four, the above single-shot error-correction procedure is \emph{not} fault tolerant to distance four.  
Distance-four fault tolerance requires that two faults causing an output error of weight at least two should be distinguishable from one or zero faults.  
For example, with the $\llbracket 16, 4, 4 \rrbracket$ code, an input $X_2$ error and an $X_6$ fault after the first round of $Z$ stabilizer measurements gives the same syndrome ($100000$) as an input $X_1$ error.  With this correction applied, $X_1 X_2 X_6$ is equivalent to a logical error times $X_5$.  For distance-four fault tolerance, it suffices to append the parity checks 
\begin{equation*}
\begin{array}{cccccccccccccccc}
\cdot&1&1&\cdot&\cdot&1&1&\cdot&\cdot&1&1&\cdot&\cdot&1&1&\cdot \\
1&1&1&1&1&1&1&1&1&1&1&1&1&1&1&1
\end{array}
\end{equation*}
This then takes ten rounds of stabilizer measurements.

\subsubsection{Generalizing the $[8, 4, 4]$ and $[16, 11, 4]$ classical codes} \label{s:generalizing844and16114codes}

The above procedure defined a family of quantum error-correcting codes based on the classical $[4, 1, 4]$ code.  We can similarly define families of quantum codes with blocks of size $8$, $16$, or larger powers of two.  

Start, for example, with $8$- and $16$-bit Reed-Muller codes defined by the following parity checks: 
\begin{align*}
\begin{gathered}
\underline{\text{$[8, 4, 4]$ code}} \\
\begin{array}{cccccccc}
1&1&1&1&1&1&1&1\\
0&0&0&0&1&1&1&1\\
0&0&1&1&0&0&1&1\\
0&1&0&1&0&1&0&1
\end{array}
\end{gathered}
&&
\raisebox{.3cm}{
$\begin{gathered}
\underline{\text{$[16, 11, 4]$ code}} \\
\begin{smallmatrix}
1&1&1&1&1&1&1&1&1&1&1&1&1&1&1&1\\
\cdot&\cdot&\cdot&\cdot&\cdot&\cdot&\cdot&\cdot&1&1&1&1&1&1&1&1\\
\cdot&\cdot&\cdot&\cdot&1&1&1&1&\cdot&\cdot&\cdot&\cdot&1&1&1&1\\
\cdot&\cdot&1&1&\cdot&\cdot&1&1&\cdot&\cdot&1&1&\cdot&\cdot&1&1\\
\cdot&1&\cdot&1&\cdot&1&\cdot&1&\cdot&1&\cdot&1&\cdot&1&\cdot&1
\end{smallmatrix}
\end{gathered}$
}
\end{align*}

The first, $[8, 4, 4]$, code can be used to define a family of $\llbracket 8\codeparameter, 6(\codeparameter - 1), 4 \rrbracket$ self-dual CSS quantum codes, for $\codeparameter \geq 2$, by putting a separate copy of the first parity check $11111111$ on each block of eight qubits (in both $Z$ and $X$ bases), while copying the other three parity checks across to be the same on each block.  For example, the $\llbracket 16, 6, 4 \rrbracket$ code's $Z$ and $X$ basis parity checks are each 
\begin{equation} \label{e:1664codeparitychecks}
\begin{array}{cccccccccccccccc}
1&1&1&1&1&1&1&1&0&0&0&0&0&0&0&0\\
0&0&0&0&0&0&0&0&1&1&1&1&1&1&1&1\\
0&0&0&0&1&1&1&1&0&0&0&0&1&1&1&1\\
0&0&1&1&0&0&1&1&0&0&1&1&0&0&1&1\\
0&1&0&1&0&1&0&1&0&1&0&1&0&1&0&1
\end{array}
\end{equation}
(Note that these five parity checks are equivalent to those of the above $[16,11,4]$ code.)  

The $[16, 11, 4]$ code above can similarly be used to define a family of $\llbracket 16\codeparameter, 14 \codeparameter - 8, 4 \rrbracket$ self-dual CSS codes, for $\codeparameter \geq 1$, by putting a $1^{16}$ parity check on each block of $16$ qubits, while copying the other four parity checks across each block.  

\begin{theorem} \label{t:generalizing844}
Each of these $\llbracket 8\codeparameter, 6(\codeparameter - 1), 4 \rrbracket$ and $\llbracket 16\codeparameter, 14 \codeparameter - 8, 4 \rrbracket$ codes allows for single-shot \emph{distance-three} fault-tolerant error correction.  By measuring disjoint qubit blocks in parallel, eight measurement rounds suffice for the $\llbracket 8\codeparameter, 6(\codeparameter - 1), 4 \rrbracket$ codes, and ten measurement rounds for the $\llbracket 16\codeparameter, 14 \codeparameter - 8, 4 \rrbracket$ codes.  
\end{theorem}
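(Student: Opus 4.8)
The plan is to lift the argument behind \thmref{t:generalizing414} from blocks of size four to blocks of size $w = 8$ and $w = 16$. Each code in these two families is, by construction, built from $\codeparameter$ disjoint blocks of $w$ qubits: every block carries the weight-$w$ checks $Z^{\otimes w}$ and $X^{\otimes w}$ (the ``block checks''), and the remaining $\ell = \log_2 w$ parity checks of the base Reed-Muller code — which one can take to be the $\ell$ coordinate ``address'' functions, so that their joint value at position $p \in \{1,\dots,w\}$ of a block is the $\ell$-bit binary expansion of $p-1$ — are repeated identically across all blocks, in both the $Z$ and $X$ bases (the ``shared checks''). I would first record the two easy structural facts: (i) these $2\codeparameter + 2\ell$ generators are independent and exhaust the stabilizer group ($n - k$ equals $2\codeparameter + 2\ell$ in both families), so measuring each exactly once is single-shot with no redundant syndrome information; and (ii) the code has distance at least three, since any weight-one error is odd on its block and hence triggers that block's check (which gives weight-one errors distinct nonzero syndromes), while a weight-two error with trivial syndrome would have to lie inside one block at two positions sharing the same address, which is impossible — and symmetrically for $Z$.

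Next I would fix the measurement order and bound the number of rounds. Measure, in order: all $\codeparameter$ block-$Z$ checks (one round, since their supports are disjoint); all $\codeparameter$ block-$X$ checks (one round); then the $\ell$ shared-$Z$ checks, one per round; then the $\ell$ shared-$X$ checks, one per round. Because every shared check has full support, none of these last $2\ell$ rounds can be merged with anything else. This gives $2 + 2\ell$ rounds, namely $8$ for the $\llbracket 8\codeparameter, 6(\codeparameter-1), 4 \rrbracket$ family ($\ell = 3$) and $10$ for the $\llbracket 16\codeparameter, 14\codeparameter - 8, 4 \rrbracket$ family ($\ell = 4$).

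The real work is verifying distance-three fault tolerance in the sense of \defref{t:faulttolerance}. Since the codes are self-dual CSS, I would treat $X$ and $Z$ errors separately, and by symmetry it is enough to handle $X$ error correction. The decoding rule is: if the block-$Z$ syndrome is entirely trivial, apply no $X$ correction; otherwise the single nonzero block-$Z$ bit names the block and the shared-$Z$ syndrome reads off the address of a position in it, and one corrects $X$ there. The one essential observation — exactly as in \thmref{t:generalizing414} — is that every weight-one input $X$ error triggers precisely one block-$Z$ check, and the block-$Z$ checks are measured before all other checks; hence any internal $X$ fault that appears after its block's $Z$ check has been measured leaves the block-$Z$ syndrome all-zero, cannot be confused with an input error, and is simply left in place as a weight-one residual (permissible, since one internal fault allows output weight one). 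The remaining cases are routine and I would dispatch them in turn: a fault appearing before its block's check behaves like an input error and is corrected to residual zero; a fault during a block-$Z$ measurement either reproduces the corresponding input-error syndrome (and is corrected) or, if it also flips that bit, falls back into the all-zero-block-syndrome case; a lone erroneously flipped syndrome bit either yields a syndrome matching no input error (no correction, residual zero) or at worst triggers a spurious weight-one correction; and a $Y$ fault is the product of one $X$ and one $Z$ fault, still a single internal fault, so the CSS weight of the residual stays at most one.

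I do not anticipate a deep obstacle — the argument is a finite case check — but the point needing the most care is the generalized ``suffix'' condition: that a single fault can never make the observed syndrome coincide with the syndrome of a weight-one input error on a \emph{different} qubit. This rests on two facts that I would make explicit. First, a single fault introducing a data error within the support of the currently measured stabilizer can flip at most that one syndrome bit, so in particular it cannot light up a second block check; consequently whenever the block-$Z$ syndrome is nontrivial it has a unique nonzero bit and the decoder's address lookup is unambiguous, and whenever it is trivial the decoder is defined to do nothing, so there is never an opportunity to decode a wrong position. Second, the asymmetry that block-$Z$ checks are measured in round one but block-$X$ checks only in round two is harmless, because a $Z$ fault commutes with every $Z$ stabilizer and so is invisible until round two in any case, at which point the $Z$-error analysis is identical to the $X$-error one.
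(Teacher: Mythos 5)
Your proposal is correct and takes essentially the same route as the paper: measure the disjoint block checks first (in parallel across blocks), then the shared Reed--Muller ``address'' checks, so that any internal fault arising after the block checks leaves the block syndrome trivial and cannot be confused with a weight-one input error (which always triggers a block check), with round count $2+2\ell$ giving $8$ and $10$ as claimed. Your explicit case analysis (measurement-bit flips, $Y$ faults, the $Z$-side asymmetry) merely spells out what the paper leaves implicit by analogy with \thmref{t:generalizing414}.
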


Compared to \thmref{t:generalizing414}, these codes achieve a higher encoding rate, trading off slower error correction.  

Some particularly interesting codes in these families are $\llbracket 16, 6, 4 \rrbracket$, $\llbracket 24, 12, 8 \rrbracket$ and $\llbracket 32, 20, 4 \rrbracket$.  They can be compared to the $\llbracket 15, 7, 3 \rrbracket$ and $\llbracket 31, 21, 3 \rrbracket$ Hamming codes.  The rates are similar, but the new codes have a higher distance and allow for faster fault-tolerant error correction (cf.~\propref{t:hammingcodesnonadaptiveerrorcorrection}).  The stabilizers measured also have the same weights as those of the closest Hamming codes: weight-$8$ stabilizers for the $\llbracket 16, 6, 4 \rrbracket$ code, weights 8 or $12$ stabilizers for the $\llbracket 24, 12, 4 \rrbracket$ code, and weight-$16$ stabilizers for the $\llbracket 32, 20, 4 \rrbracket$ code.  

The $\llbracket 15, 7, 3 \rrbracket$ code can be obtained by puncturing the $\llbracket 16, 6, 4 \rrbracket$ code.  
This raises the following question: 

\begin{open question}
Let $C$ be a linear code, and $C'$ be obtained by puncturing one coordinate of $C$.  
Given a sequence of $m$ parity-check measurements for fault-tolerant error correction with $C$, is there a sequence of $m + \ell$ parity-check measurements for fault-tolerant error correction with $C'$, where $\ell$ is a constant independent of~$C$?
\end{open question}

Like \thmref{t:generalizing414}, \thmref{t:generalizing844} is for distance-three fault tolerance; the error-correction procedures tolerate up to one fault either on the input or during error correction.  For these codes, single-shot stabilizer measurements do not suffice for distance-four fault-tolerant error correction.  But once again, only two more measurement rounds, in both $Z$ and $X$ bases, are needed to upgrade the protection; appending the parity checks 
\begin{equation*}
\begin{array}{cccccccccccccccc}
\cdot&1&1&\cdot&1&\cdot&\cdot&1&\cdot&1&1&\cdot&1&\cdot&\cdot&1 \\
1&1&1&1&1&1&1&1&1&1&1&1&1&1&1&1
\end{array}
\end{equation*}
to those of Eq.~\eqnref{e:1664codeparitychecks} gives distance-four fault-tolerant error correction.  

\medskip

The single-shot sequences for distance-three fault tolerance obtained in Theorems~\ref{t:generalizing414} and~\ref{t:generalizing844} rely on two subfamilies of Reed-Muller codes: the repetition codes and the extended Hamming codes.  
The structure of Reed-Muller codes might help to design short fault-tolerant sequences for higher distance.  

\begin{open question}
Find a minimum-length sequence of parity check measurements for distance-seven fault-tolerant error correction with the family of distance-eight Reed-Muller codes.  
\end{open question}

This question naturally generalizes to the whole family of Reed-Muller codes and to punctured Reed-Muller codes, with potential applications to magic state distillation~\cite{BravyiKitaev04magic, BravyiHaah2012magic}.

\subsubsection{Generalizing the \texorpdfstring{$\llbracket 8,2,3 \rrbracket$}{[[8,2,3]]} and \texorpdfstring{$\llbracket 8,3,3 \rrbracket$}{[[8,3,3]]} codes}

Similar to how we extended the $[4, 1, 4]$, $[8, 4, 4]$ and $[16, 11, 4]$ codes by adding on more blocks, the $\llbracket 8,2,3 \rrbracket$ and $\llbracket 8,3,3 \rrbracket$ codes of Eq.~\eqnref{e:basecodes} can be extended.  By adding either two blocks of four qubits to the former code, or one block of eight qubits to the latter code, we obtain $16$-qubit codes, with respective stabilizers: 
\begin{align*}
\begin{gathered}
\underline{\text{$\llbracket 16, 6, 3 \rrbracket$ code}} \\
\begin{array}{cccccccccccccccc}
Z&Z&Z&Z&I&I&I&I&I&I&I&I&I&I&I&I\\
X&X&X&X&I&I&I&I&I&I&I&I&I&I&I&I\\
I&I&I&I&Z&Z&Z&Z&I&I&I&I&I&I&I&I\\
I&I&I&I&X&X&X&X&I&I&I&I&I&I&I&I\\
I&I&I&I&I&I&I&I&Z&Z&Z&Z&I&I&I&I\\
I&I&I&I&I&I&I&I&X&X&X&X&I&I&I&I\\
I&I&I&I&I&I&I&I&I&I&I&I&Z&Z&Z&Z\\
I&I&I&I&I&I&I&I&I&I&I&I&X&X&X&X\\
I&X&Y&Z&I&X&Y&Z&I&X&Y&Z&I&X&Y&Z\\
I&Z&X&Y&I&Z&X&Y&I&Z&X&Y&I&Z&X&Y
\end{array}
\end{gathered} \\[.3cm]
\begin{gathered}
\underline{\text{$\llbracket 16, 9, 3 \rrbracket$ code}} \\
\begin{array}{cccccccccccccccc}
Z&Z&Z&Z&Z&Z&Z&Z&I&I&I&I&I&I&I&I\\
X&X&X&X&X&X&X&X&I&I&I&I&I&I&I&I\\
I&I&I&I&I&I&I&I&Z&Z&Z&Z&Z&Z&Z&Z\\
I&I&I&I&I&I&I&I&X&X&X&X&X&X&X&X\\
I&I&Z&Y&X&Z&Y&X&I&I&Z&Y&X&Z&Y&X\\
I&Z&X&I&X&Y&Z&Y&I&Z&X&I&X&Y&Z&Y\\
I&X&I&Z&Z&X&Y&Y&I&X&I&Z&Z&X&Y&Y\\
\ 
\end{array}
\end{gathered}
\end{align*}
By adding more blocks, these procedures yield families of $\llbracket 8 \codeparameter, 4 \codeparameter - 2, 3 \rrbracket$ and $\llbracket 8 \codeparameter, 6 \codeparameter - 3, 3 \rrbracket$ codes, respectively.  While the second code family has higher rate, its stabilizers also have higher weight.  This tradeoff can be continued by applying the extension procedure to the other $\llbracket 2^\codeparameter, 2^\codeparameter - \codeparameter - 2, 3 \rrbracket$ codes defined by Gottesman~\cite{Gottesman96codes}.   

These codes do not allow for single-shot fault-tolerant error correction.  However, they require only one round of measuring redundant syndrome bits, in parallel.  After measuring the $Z$ and $X$ parity checks on each block of four or eight qubits, one can measure the $Z$ parity checks a second time.  Then finally measure the block-crossing stabilizers.  This procedure is fault tolerant for essentially the same reason that it works for the $\llbracket 8,3,3 \rrbracket$ code, described above.  (Faults occurring after the block stabilizer measurements will not trigger any of them, and therefore cannot be confused with an input error.)  We conclude: 

\begin{theorem}
For these $\llbracket 8 \codeparameter, 4 \codeparameter - 2, 3 \rrbracket$ and $\llbracket 8 \codeparameter, 6 \codeparameter - 3, 3 \rrbracket$ codes, fault-tolerant error correction is possible with five or six rounds, respectively, of stabilizer measurements.  
\end{theorem}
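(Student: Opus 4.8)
The plan is to write down the explicit five- and six-round sequences and then verify \defref{t:faulttolerance} by a case analysis that directly generalizes the arguments already given above for the base $\llbracket 8,2,3 \rrbracket$ and $\llbracket 8,3,3 \rrbracket$ codes.  For the $\llbracket 8\codeparameter, 4\codeparameter - 2, 3 \rrbracket$ family, view the qubits as $2\codeparameter$ disjoint blocks of four and measure, in order: (round~1) the all-$Z$ stabilizer of every block, (round~2) the all-$X$ stabilizer of every block, (round~3) the all-$Z$ stabilizers again, then (round~4) the block-crossing generator $(IXYZ)^{\otimes 2\codeparameter}$, and (round~5) the block-crossing generator $(IZXY)^{\otimes 2\codeparameter}$.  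Since the first three rounds act on disjoint supports they parallelize, so this is five rounds; the $\llbracket 8\codeparameter, 6\codeparameter - 3, 3 \rrbracket$ family is treated the same way with blocks of eight and with the three block-crossing generators in rounds~4--6, for six rounds.  The decoder is the natural one: if the full observed syndrome (including the duplicated round-3 bits) equals the honest syndrome of a weight-one Pauli error, apply that correction; otherwise apply nothing.

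Two structural facts drive the verification, and both are inherited from the base codes.  First, because the base code has distance three, within any block all single-qubit errors already have distinct, nontrivial syndromes under that block's all-$Z$ and all-$X$ stabilizers together with the restrictions of the block-crossing generators; since a single-qubit error on a block has trivial syndrome on every other block, it follows that the whole code has distance three and all of its weight-one error syndromes are distinct, so the decoder corrects any single input error to the identity (output weight zero).  Second, every weight-one Pauli error on a qubit of block~$b$ anticommutes with the all-$Z$ or the all-$X$ stabilizer of block~$b$; equivalently, if all the round-1 and round-2 block syndromes are trivial then there was no input error.  Given these, a zero-fault run produces the trivial syndrome and hence no correction, and an internal fault occurring during a block-crossing round ($r \ge 4$) leaves rounds~1--3 trivial, so the decoder applies nothing and the output is just the weight-at-most-one fault-induced error.

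The one case needing care is an internal fault during a block round $r \in \{1,2,3\}$, say while measuring block~$b$.  Such a fault creates a data error $F$ of weight at most one supported on block~$b$ and may also flip block~$b$'s round-$r$ syndrome bit; all later measurements, including the block-crossing generators, are then performed honestly with $F$ present.  I would split on whether $F$ is $Z$-type or $X$/$Y$-type.  If $F$ is $X$- or $Y$-type, then enumerating the possibilities for $r$ and the (possibly flipped) round-$r$ bit, the round-1/round-2/round-3 triple of block-$b$ bits is either the honest pattern $(1,0,1)$ or $(1,1,1)$ of an input $X$- or $Y$-error -- and then the block-crossing bits, measured honestly with $F$ present, make the full observed syndrome equal to the honest syndrome of $F$, whose unique weight-one preimage (by distance three) is $F$, so $F$ is corrected away -- or the triple matches no honest weight-one syndrome, so the decoder does nothing and the residual error is $F$.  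If $F$ is $Z$-type, the triple is either $(0,1,0)$, the honest pattern of an input $Z$-error, so again $F$ is corrected away, or it is not the pattern of any input error, so the decoder does nothing.  In every subcase the decoder either exactly cancels $F$ or leaves it untouched, never stacking a spurious weight-one correction onto $F$; this is precisely why it suffices to repeat only the all-$Z$ block measurements and not the all-$X$ ones, since a $Z$-type fault during the single all-$X$ round is either self-consistent with an input $Z$-error or invisible to the block stabilizers.

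The main obstacle is this last paragraph: making the subcase enumeration genuinely exhaustive over the fault round, the Pauli type of $F$, and whether the fault also flipped a syndrome bit, and checking in every leaf that the decoder responds consistently with $F$ rather than manufacturing a competing weight-one correction.  Everything else -- the round count ($3+2 = 5$ and $3+3 = 6$) and the distance-three claim -- follows from the base codes, so the whole argument is a bounded computation.
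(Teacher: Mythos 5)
Your proposal is correct and takes essentially the same route as the paper: the identical five-/six-round sequence (block $Z$, block $X$, block $Z$ again, then the block-crossing generators), with fault tolerance argued exactly as for the base codes---faults after the block rounds trigger no block stabilizer and so cannot be confused with input errors, while the repeated $Z$ block round handles $X$/$Y$ faults occurring during the block rounds. Your explicit decoder and subcase enumeration merely spell out in more detail what the paper dispatches as ``fault tolerant for essentially the same reason that it works for the $\llbracket 8,3,3 \rrbracket$ code.''
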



\section{Error correction for higher-distance codes}

So far, our examples and general constructions have been for distance-three and -four codes.  In this section, we will consider some distance-five and -seven CSS codes, and present for them nonadaptive stabilizer measurement sequences for fault-tolerant error correction.

\subsection{\texorpdfstring{$\llbracket 17, 1, 5 \rrbracket$}{[[17,1,5]]} color code}

Begin by considering the following $\llbracket 17, 1, 5 \rrbracket$ color code: 
\begin{equation*} 
\raisebox{-2.5cm}{\includegraphics[scale=.3]{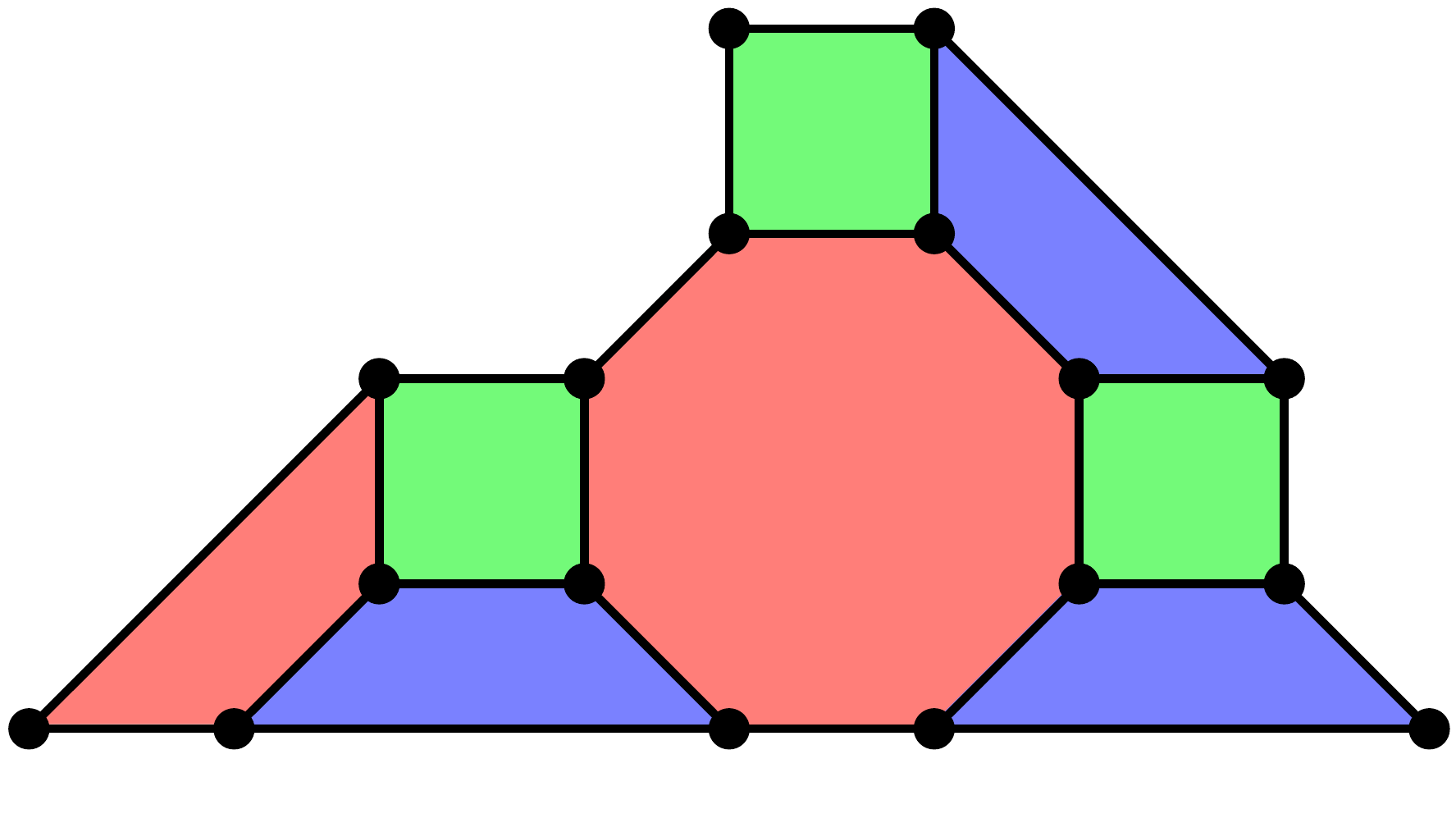}}
\end{equation*}
Fault-tolerant $Z$ error correction can be accomplished with nine rounds of fault-tolerantly measuring $X$ \mbox{plaquette} stabilizers, $20$ stabilizer measurements total, in the following order: 
\begin{equation*}
\begin{split}
\includegraphics[scale=.08]{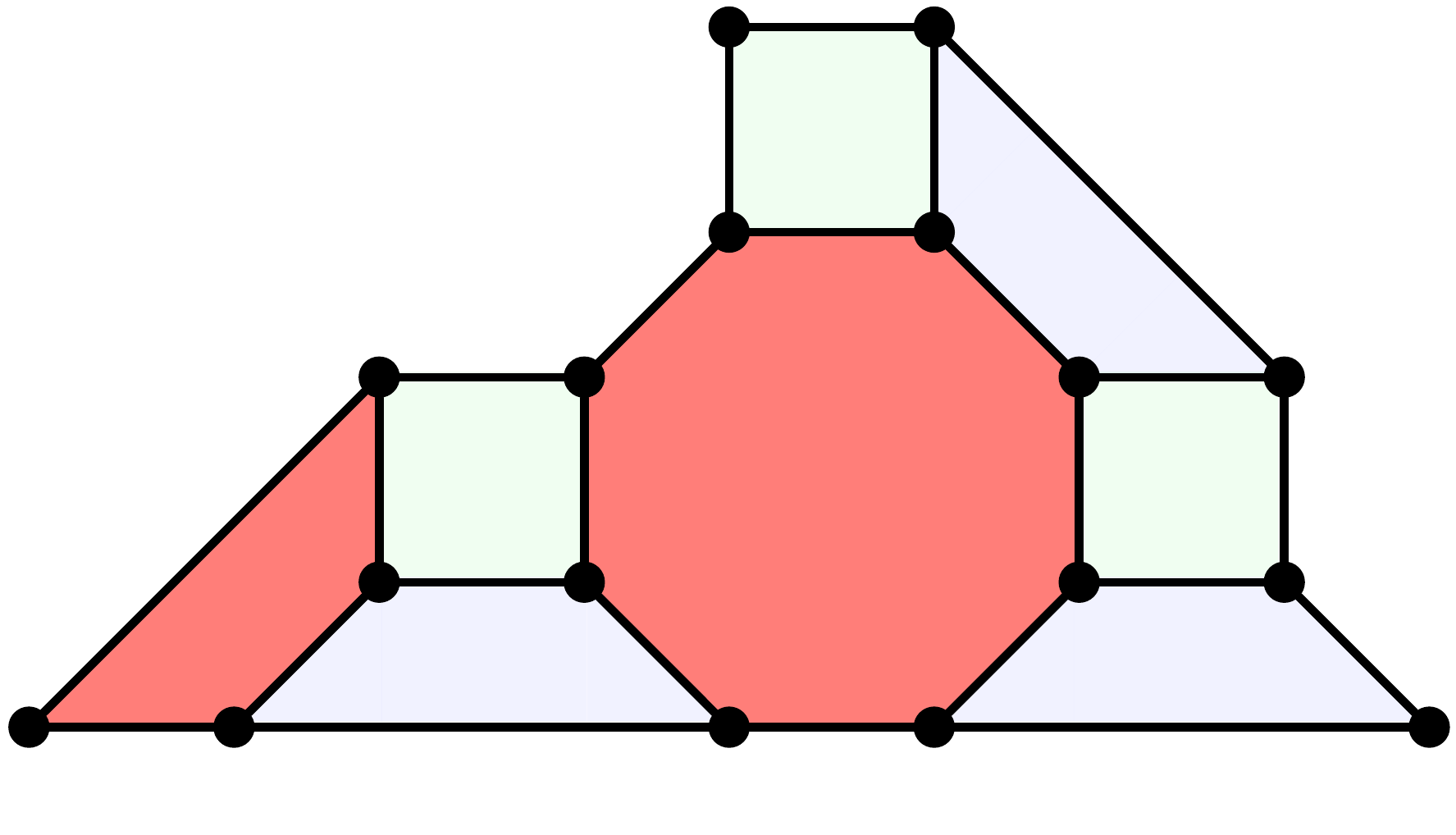}\place{\footnotesize 1}{-68mu}{18pt}\;\;
\includegraphics[scale=.08]{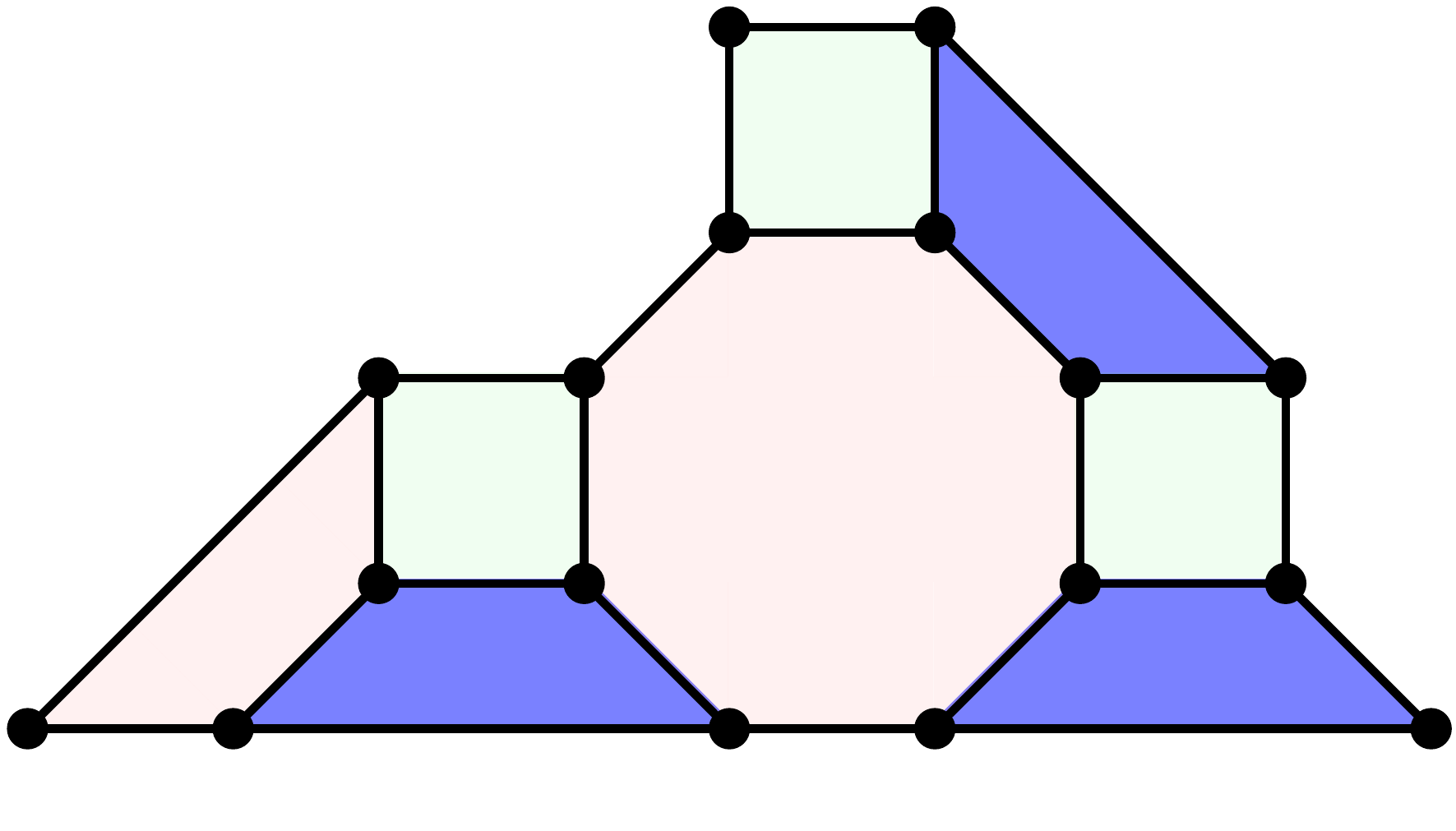}\place{\footnotesize 2}{-68mu}{18pt}\;\;
\includegraphics[scale=.08]{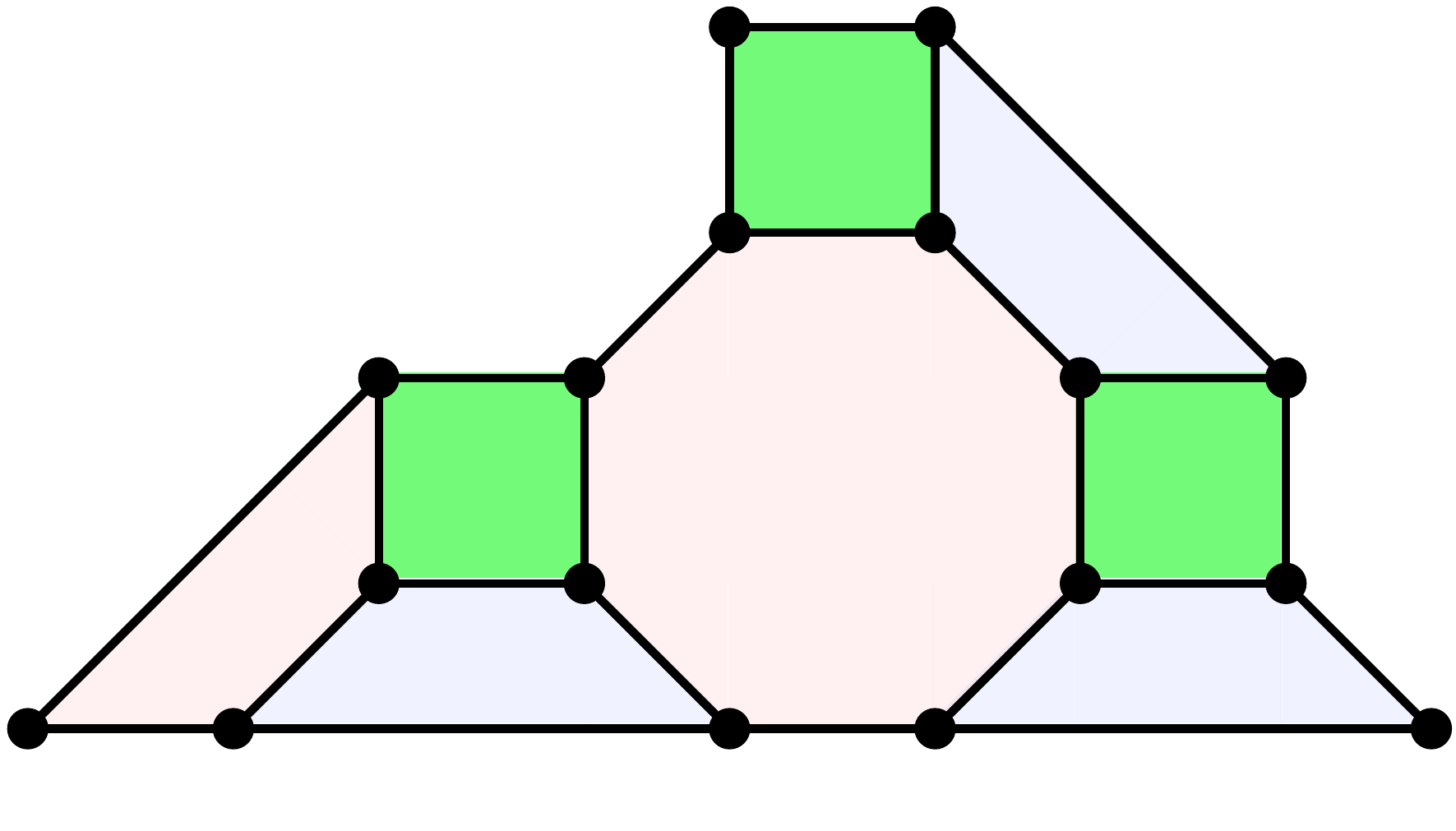}\place{\footnotesize 3}{-68mu}{18pt}\;\;
\includegraphics[scale=.08]{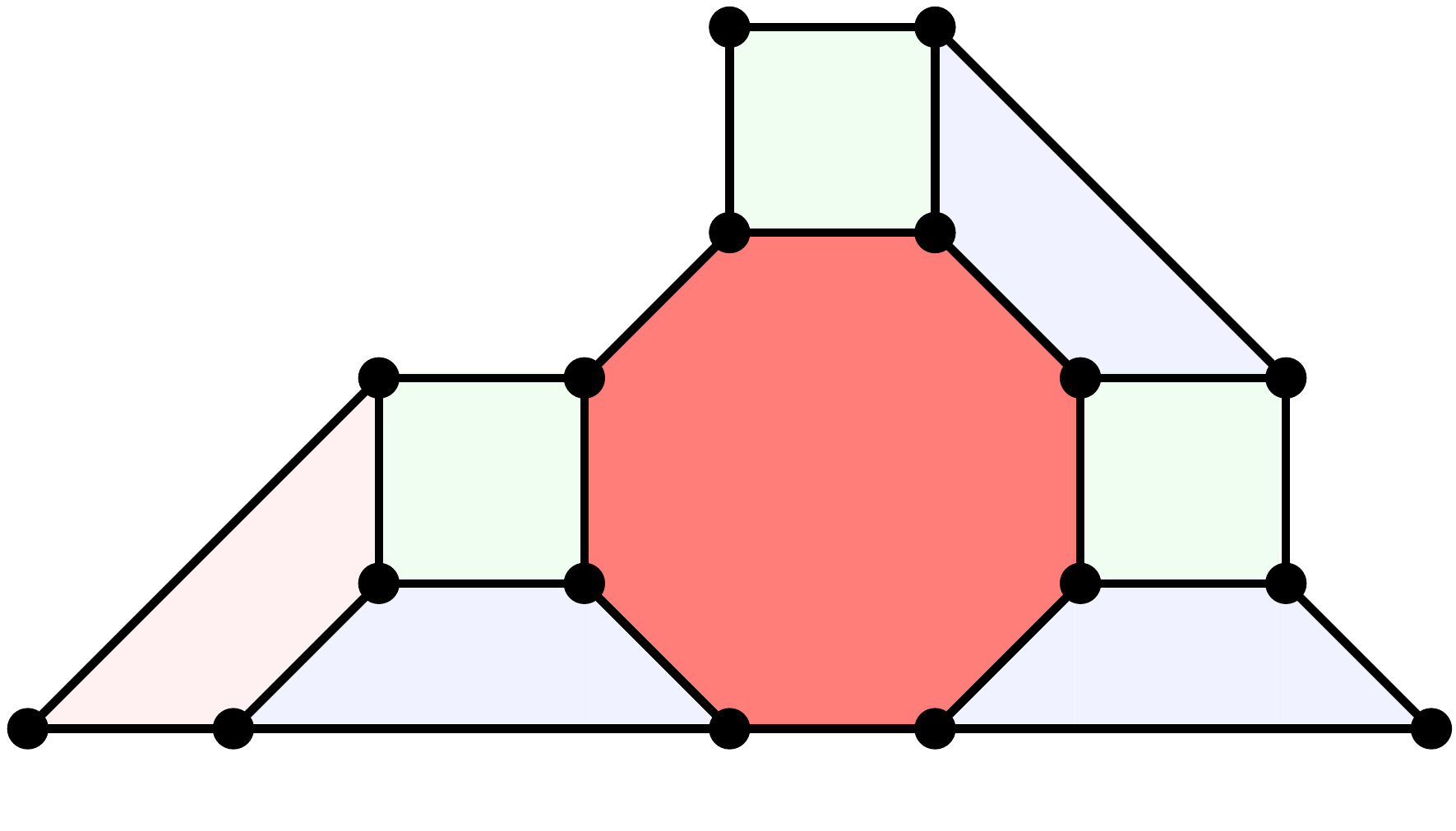}\place{\footnotesize 4}{-68mu}{18pt}\;\;
\includegraphics[scale=.08]{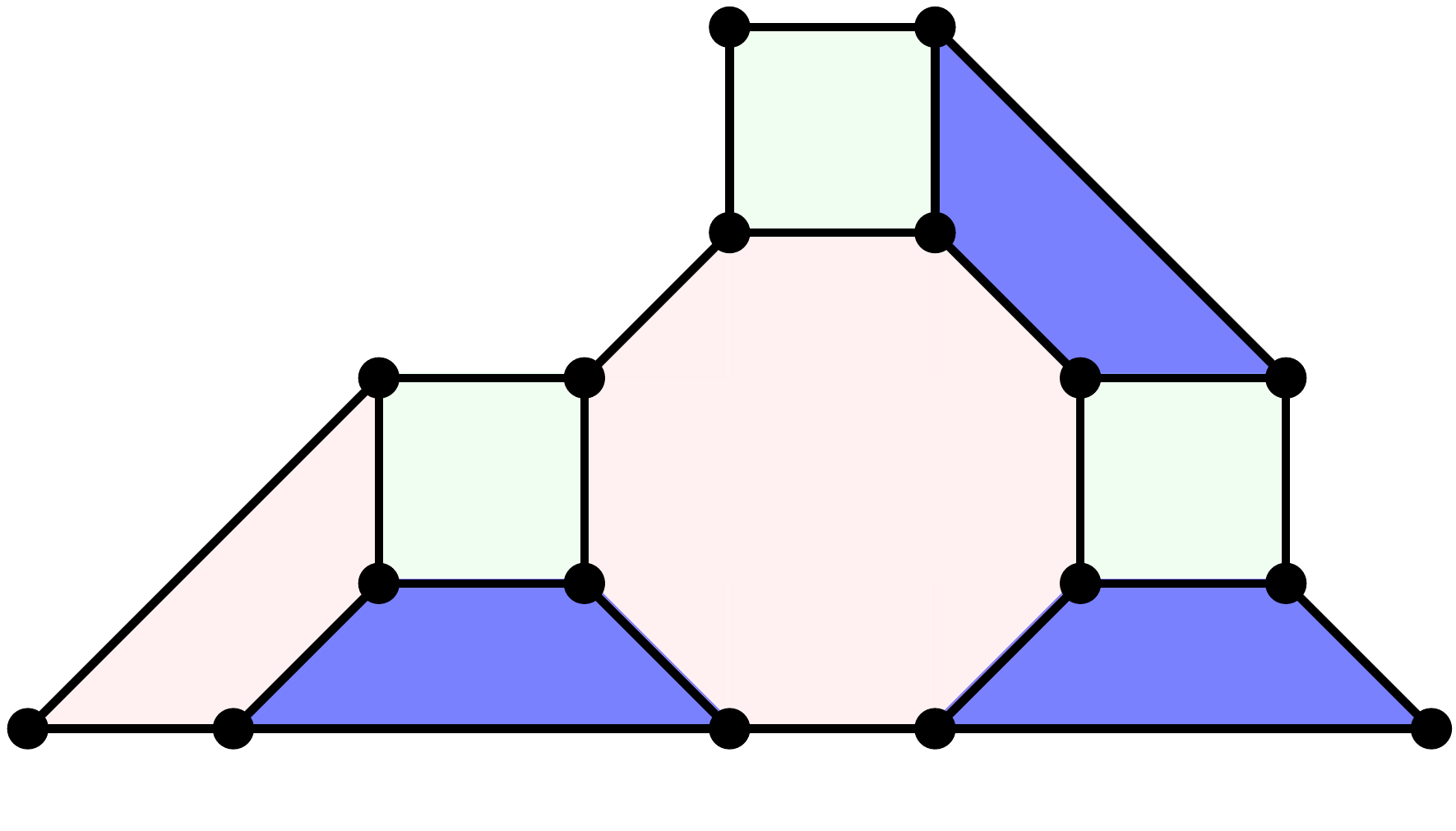}\place{\footnotesize 5}{-68mu}{18pt} \\
\includegraphics[scale=.08]{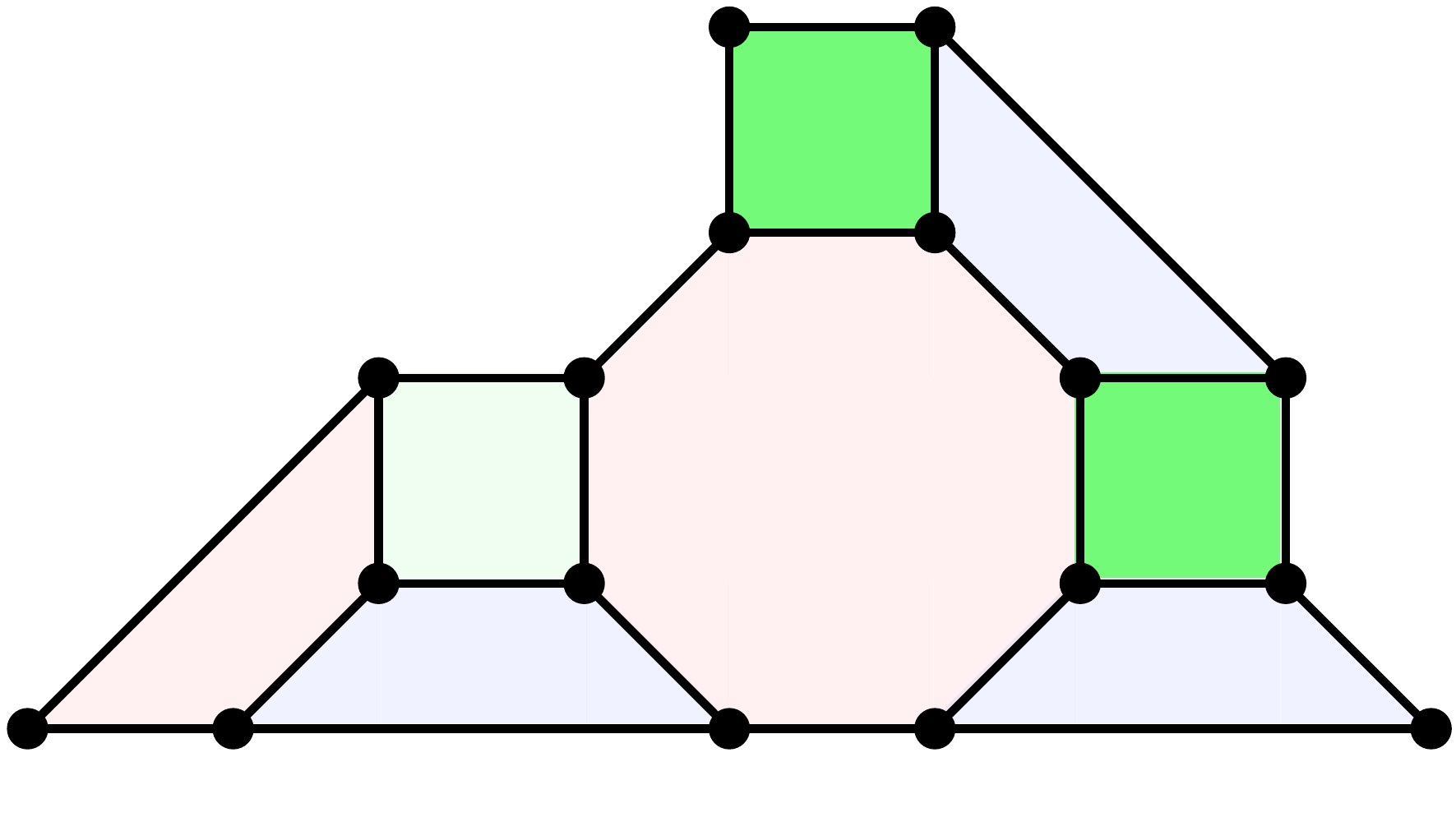}\place{\footnotesize 6}{-68mu}{18pt}\;\;
\includegraphics[scale=.08]{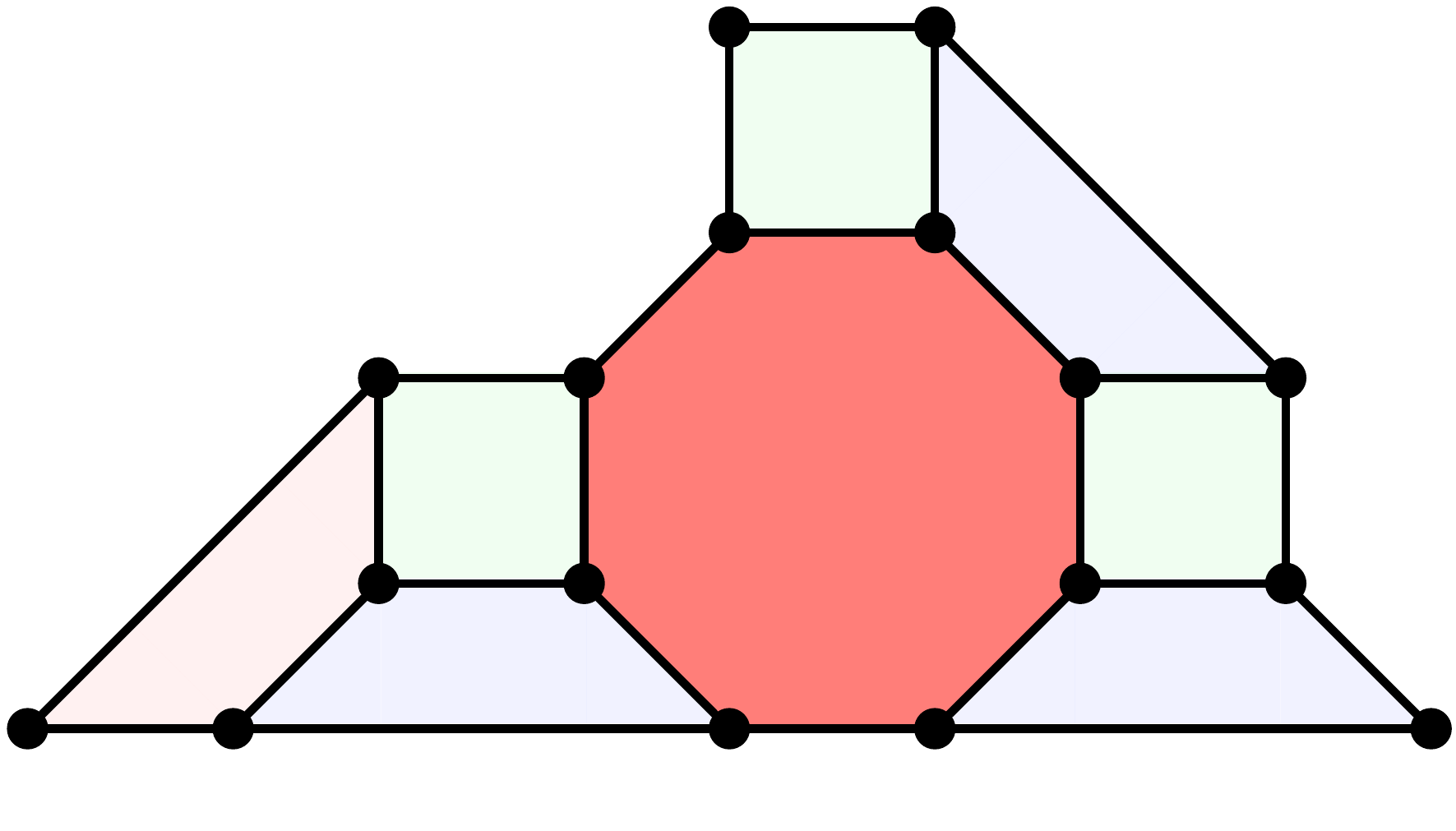}\place{\footnotesize 7}{-68mu}{18pt}\;\;
\includegraphics[scale=.08]{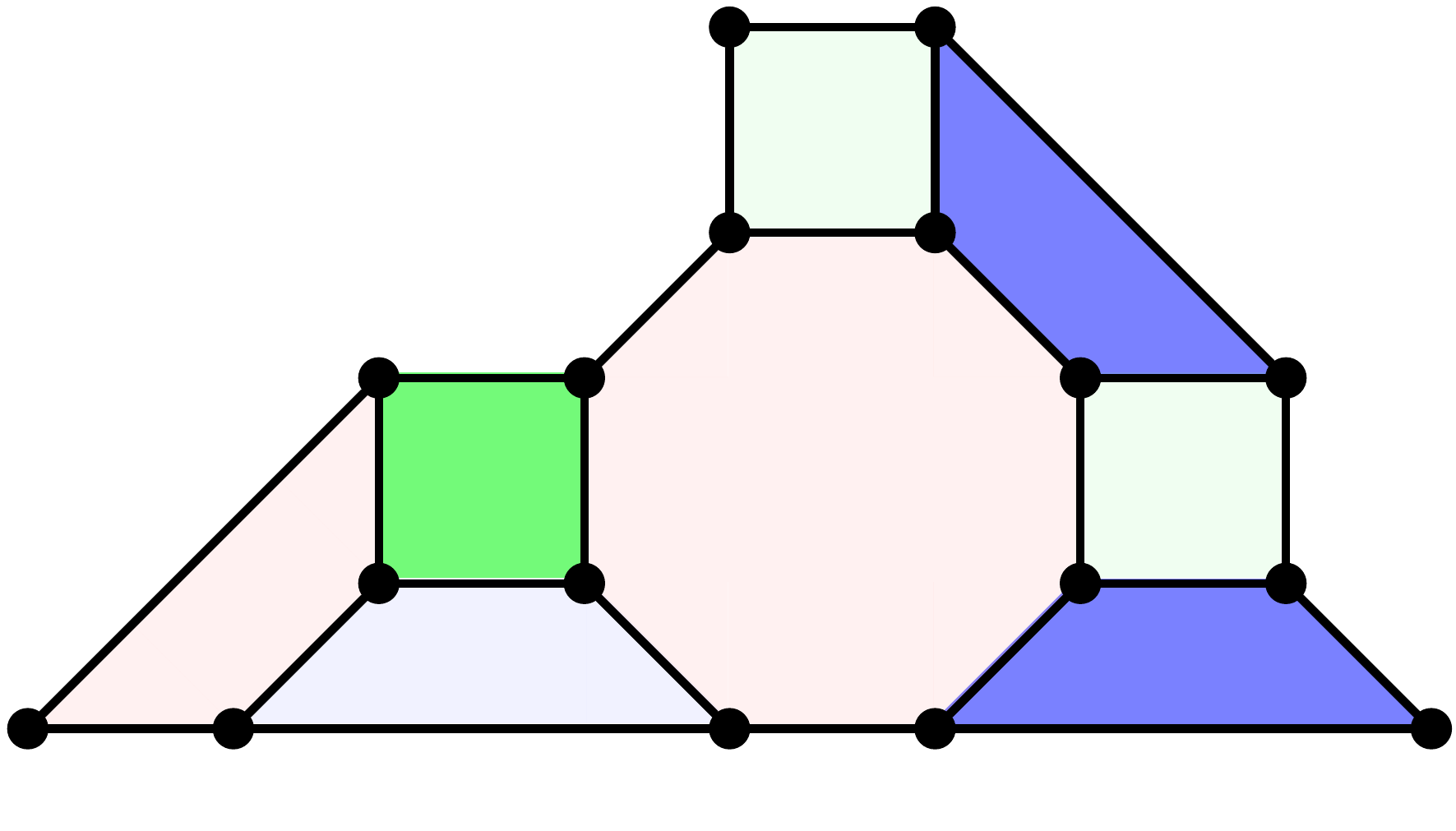}\place{\footnotesize 8}{-68mu}{18pt}\;\;
\includegraphics[scale=.08]{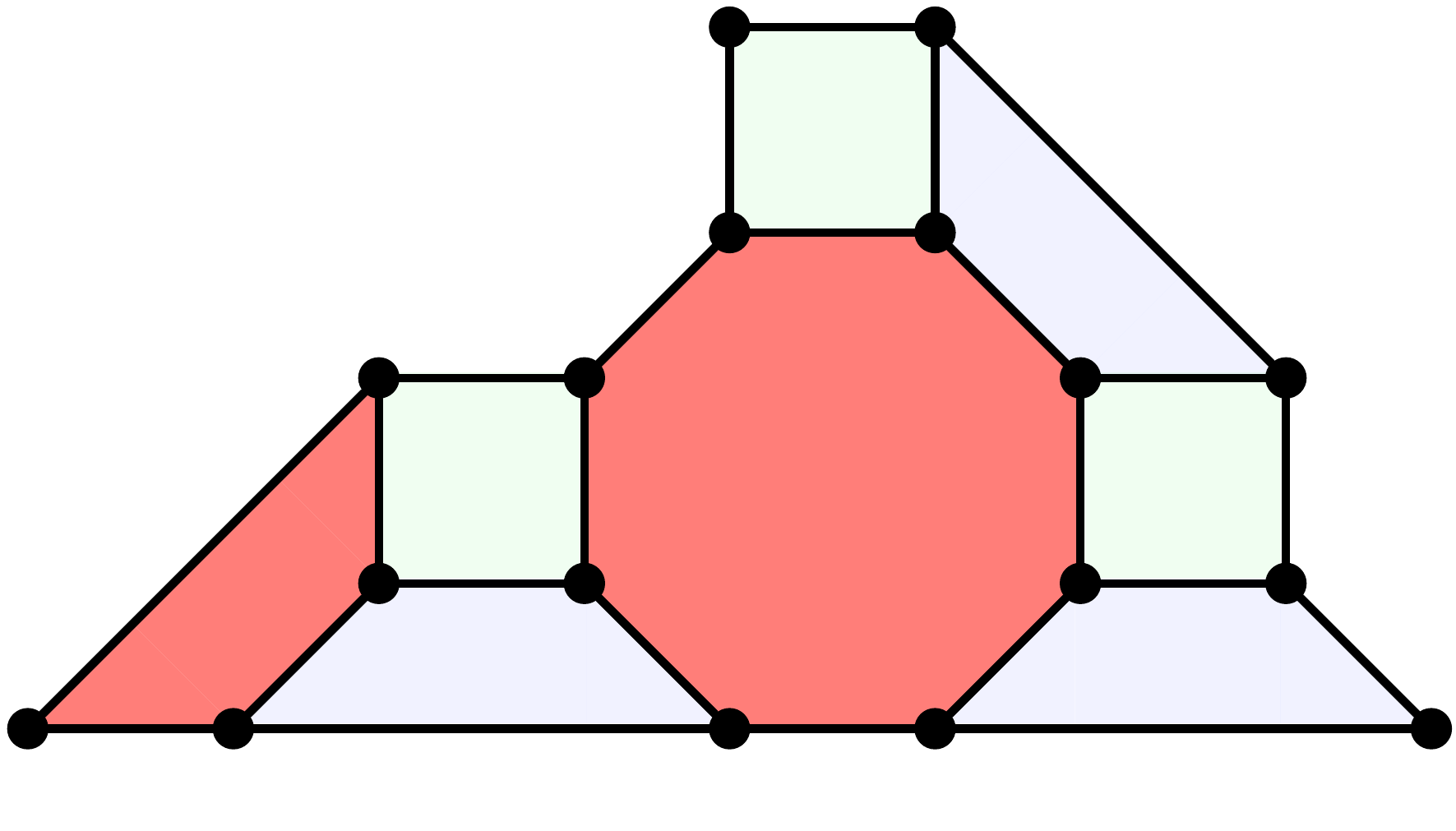}\place{\footnotesize 9}{-68mu}{18pt} \hspace{.8cm}
\end{split}
\end{equation*}
Here, we have highlighted the plaquette stabilizers that should be measured in each round.  

There could well be more-efficient syndrome measurement sequences.  We have verified the fault tolerance of this one by a computer enumeration over all possible combinations of up to two input errors or internal faults. 

Topological codes like the color and surface codes have the advantage that their natural stabilizer generators are geometrically local for qubits embedded in a two-dimensional surface~\cite{DennisKitaevLandahlPreskill01topological, BombinMartindelgado06colorcode, FowlerMariantoniMartinisCleland12surfacecodes}. For error correction, it may therefore be preferable to measure a sequence of only these stabilizer generators, and not measure any nontrivial linear combination of generators.  The above measurement sequence satisfies this property, while our measurement sequence for the $\llbracket 16,4,4 \rrbracket$ color code, in \secref{s:1643colorcode}, does not.  

Note that our notion of fault tolerance is different from the circuit-level noise model of~\cite{DennisKitaevLandahlPreskill01topological}.  However, the idea of measuring just a subset of the plaquettes at each round is relevant to both settings.  It could improve the performance of topological codes, since syndrome extraction is often the dominant noise source.


\subsection{\texorpdfstring{$\llbracket d^2, 1, d \rrbracket$}{[[17,1,5]]} surface codes}

For odd $d \geq 3$, there are $\llbracket d^2, 1, d \rrbracket$ surface codes~\cite{BravyiKitaev98surfacecode, TomitaSvore14surfacecodes}, illustrated below for $d = 3$ and $d = 5$: 
\begin{equation*}
\raisebox{-2.5cm}{\includegraphics[scale=.3]{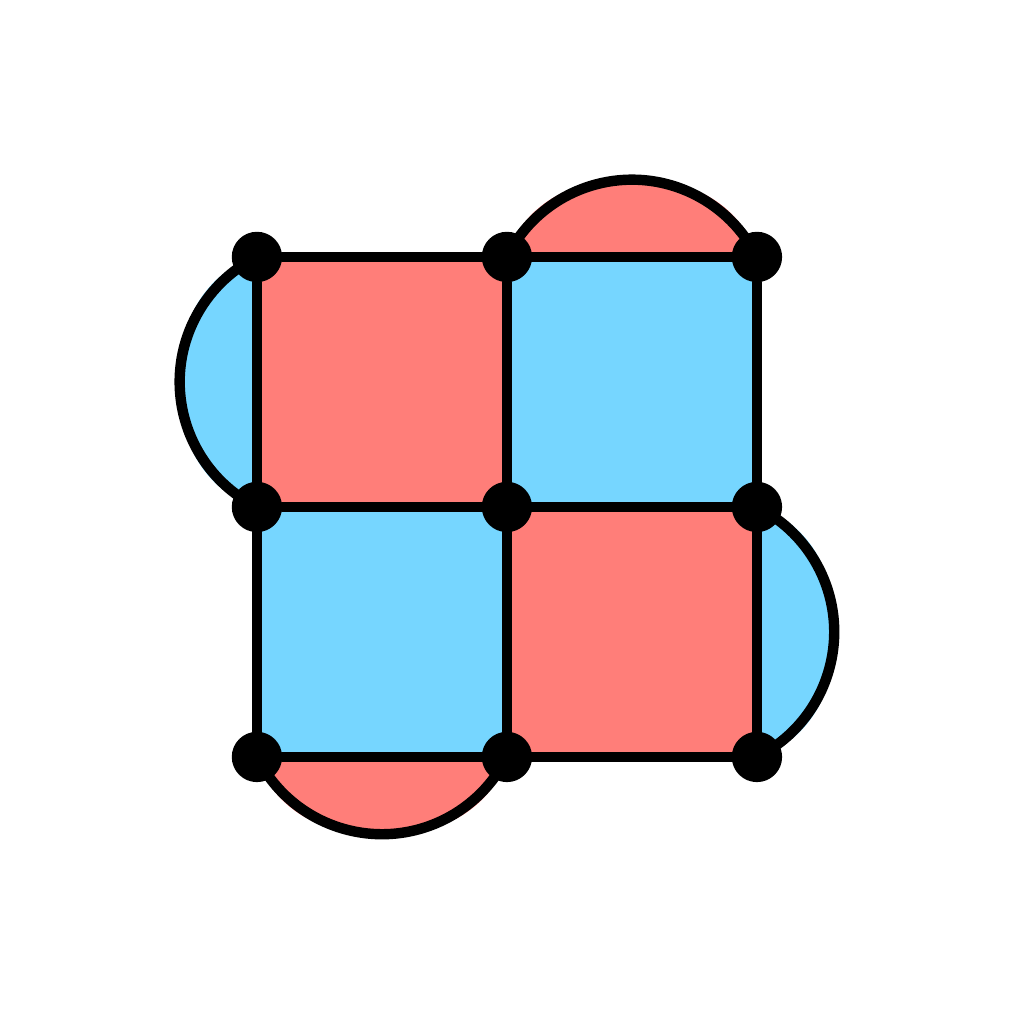}}
\raisebox{-2.5cm}{\includegraphics[scale=.3]{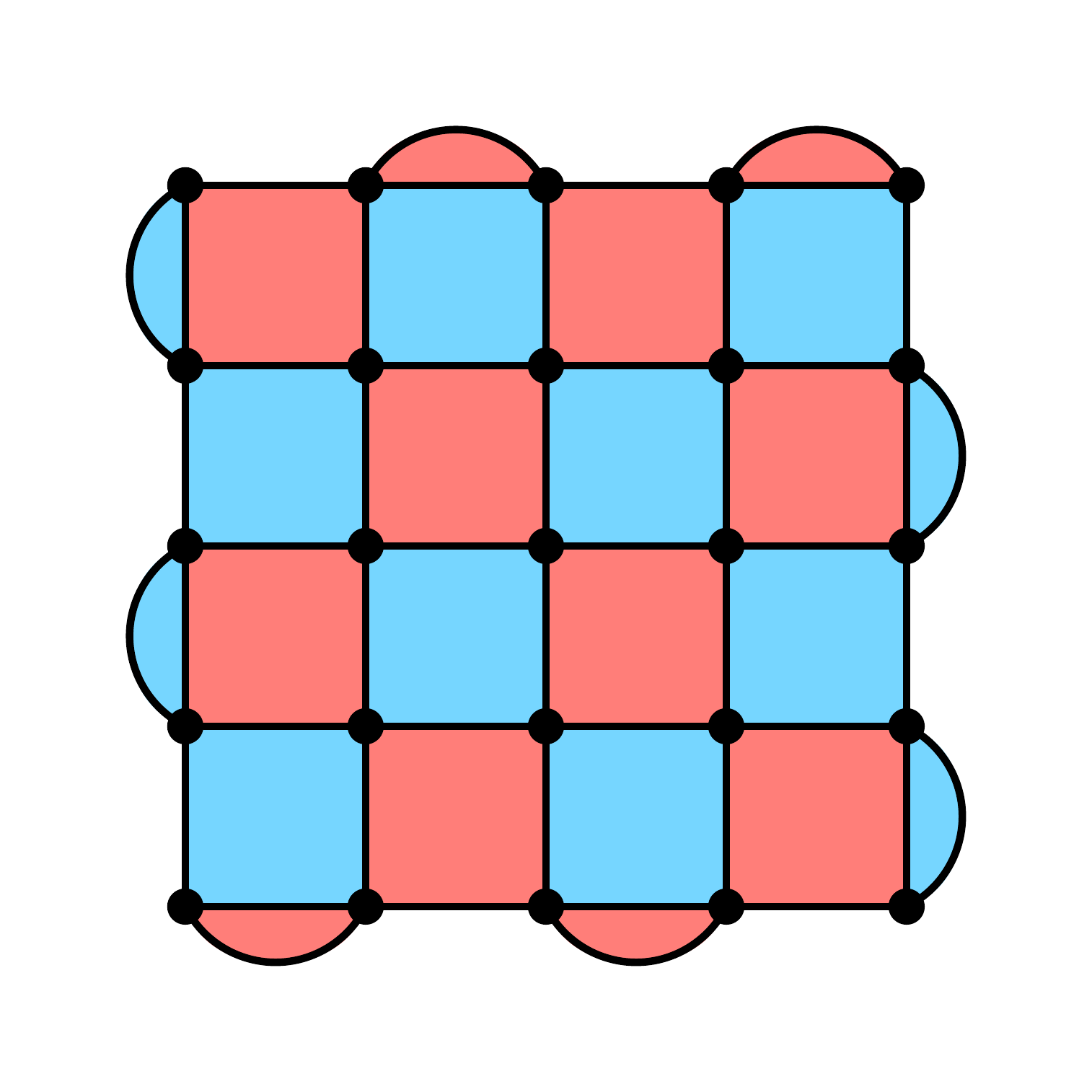}}
\end{equation*}
Here the qubits are placed at the vertices.  Red plaquettes correspond to $Z$ stabilizers on the involved qubits, and blue plaquettes to $X$ stabilizers.  (The codes are CSS, but not self dual.)  

For the $\llbracket 9,1,3 \rrbracket$ code, six stabilizer measurements, applied in three rounds, suffice for fault-tolerant $X$ error correction: 
\begin{equation*}
\begin{split}
\includegraphics[scale=.2]{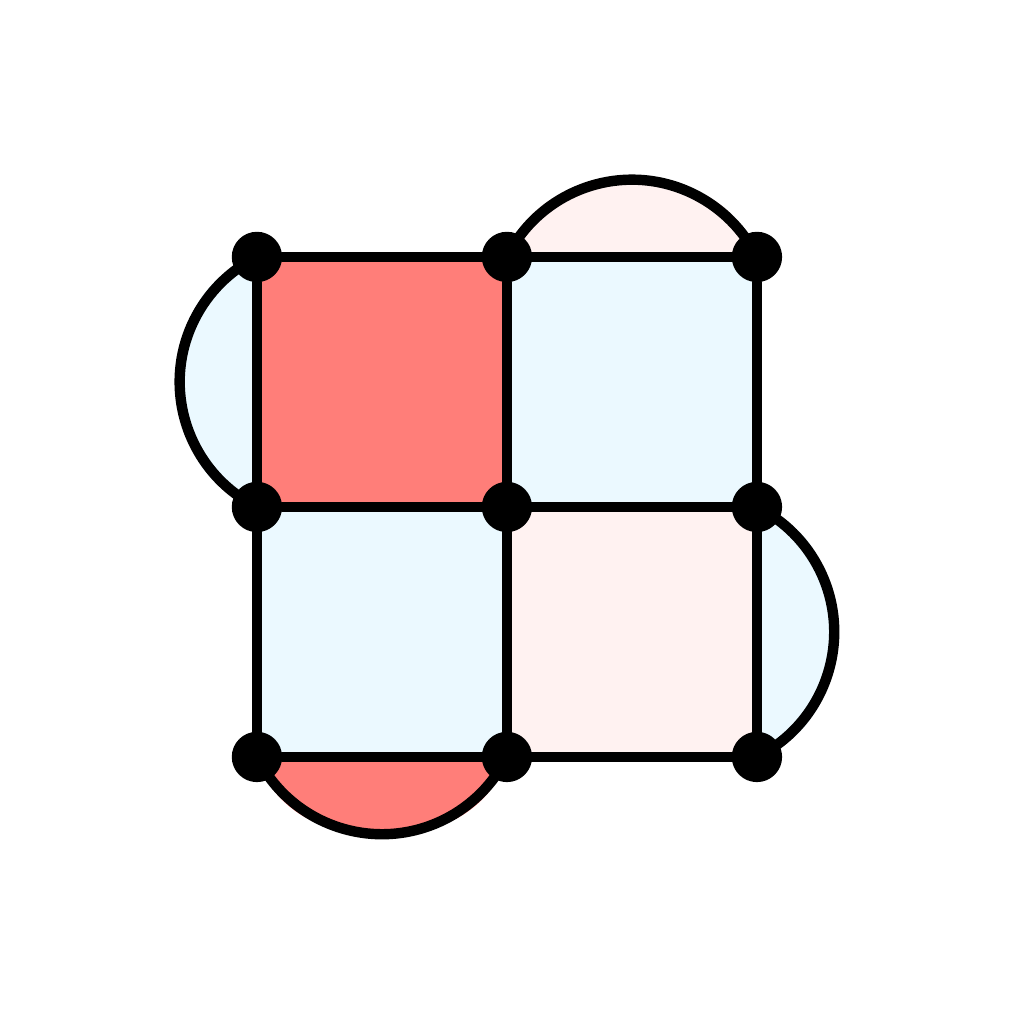}
\;\;
\includegraphics[scale=.2]{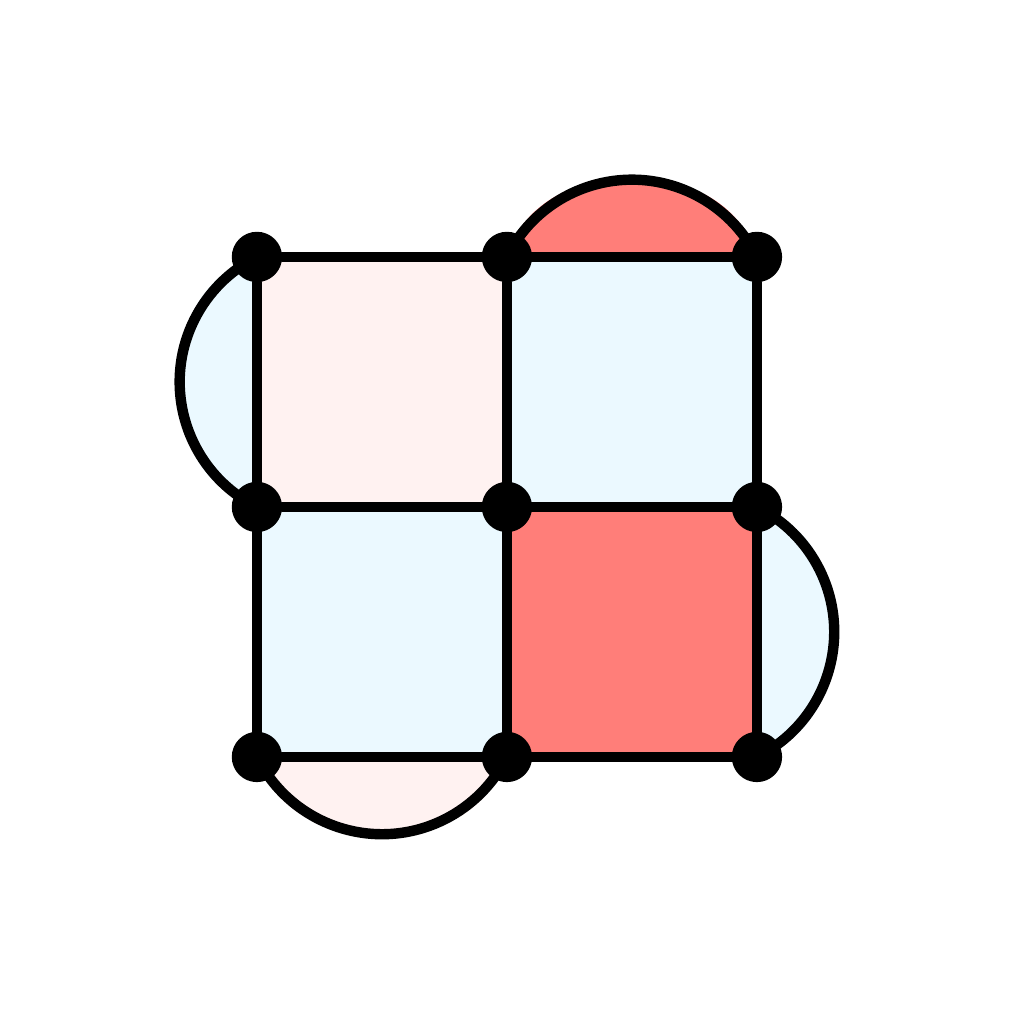}
\;\;
\includegraphics[scale=.2]{images/913surfacecode_round1}
\end{split}
\end{equation*}
A symmetrical sequence works for $Z$ error correction.  

For the $\llbracket 25,1,5 \rrbracket$ code, $30$ $Z$ measurements, applied in five rounds, suffice for distance-five fault-tolerant $X$ error correction: 
\begin{equation*}
\begin{split}
\includegraphics[scale=.1]{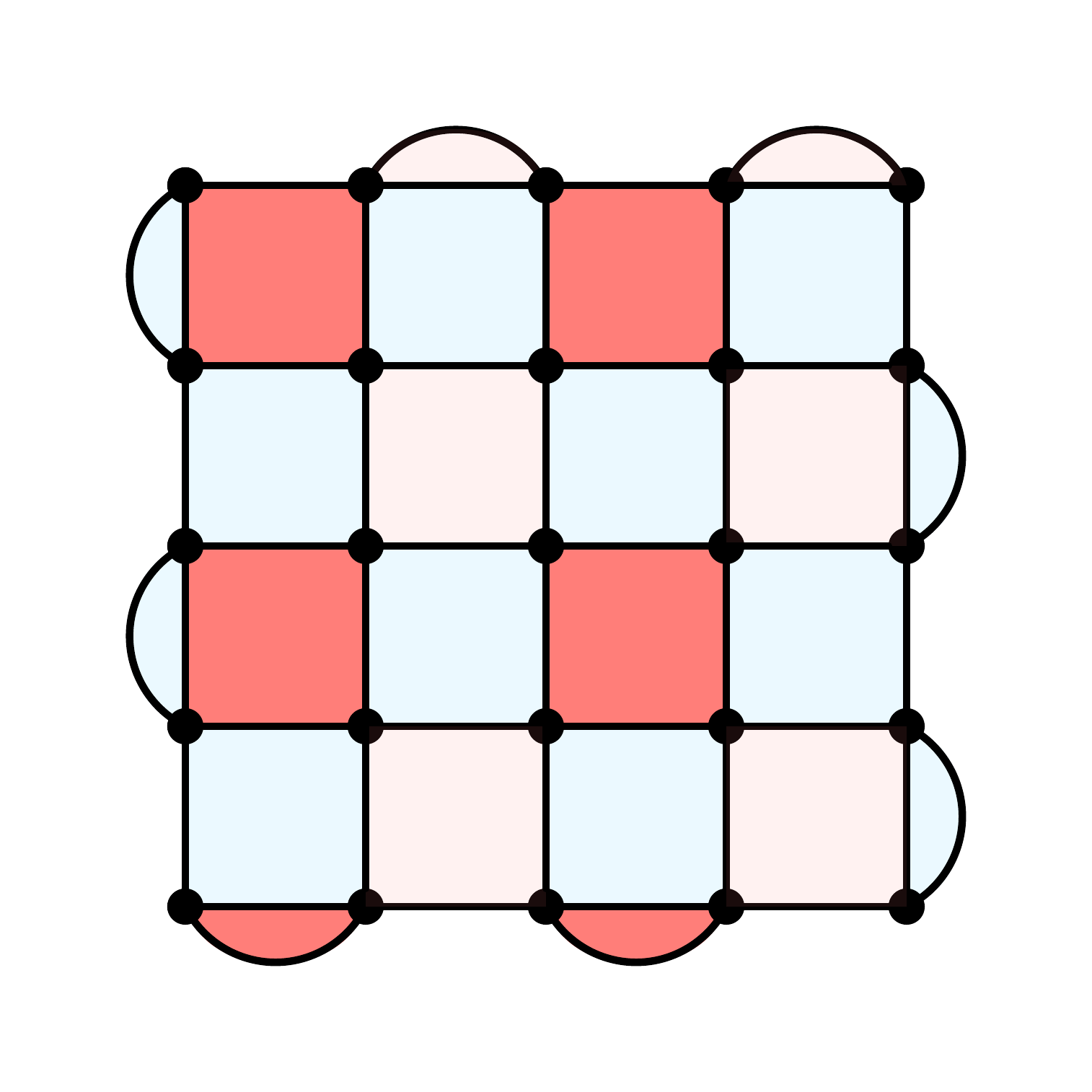}
\;\;
\includegraphics[scale=.1]{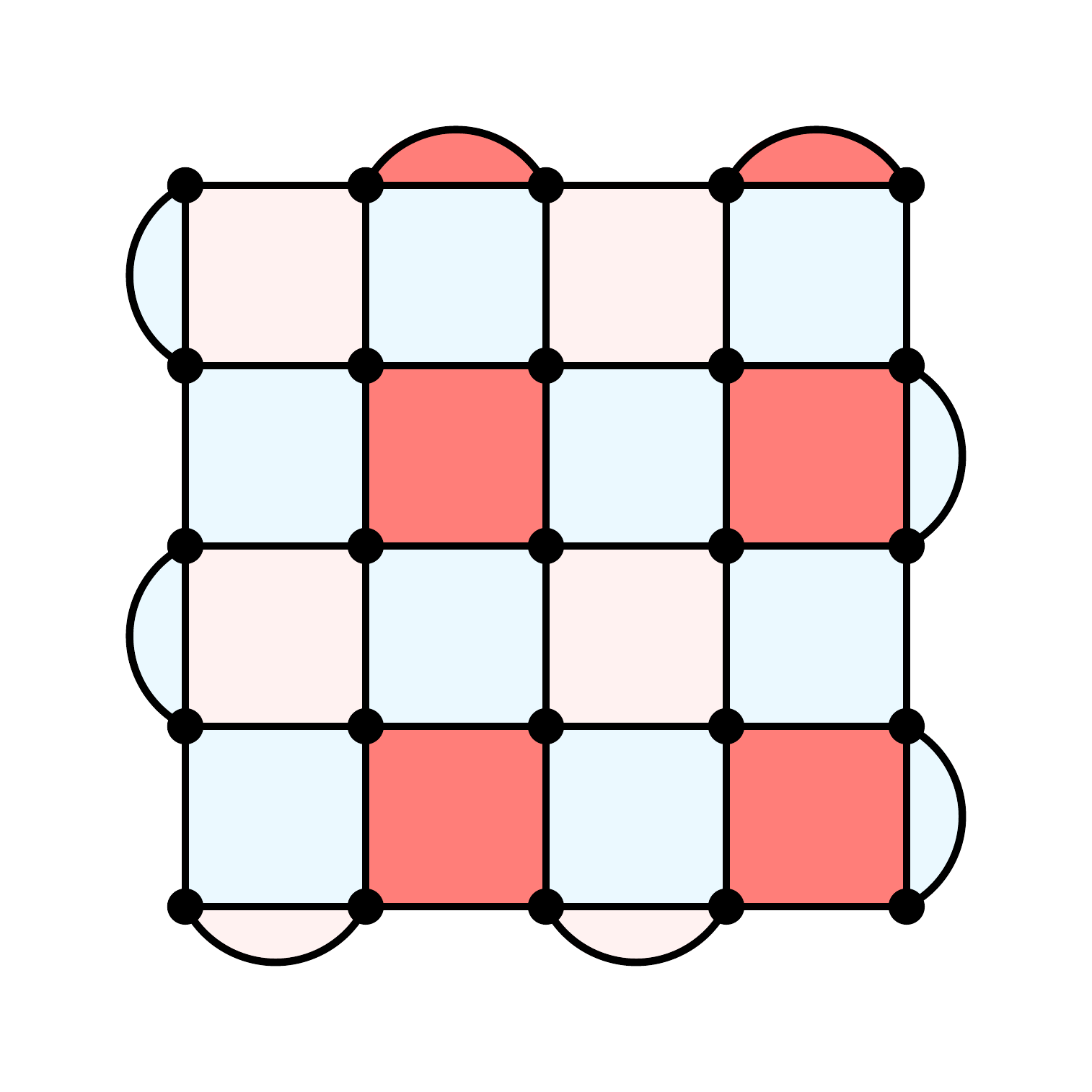}
\;\;
\includegraphics[scale=.1]{images/2515surfacecode_round1}
\;\;
\includegraphics[scale=.1]{images/2515surfacecode_round2}
\;\;
\includegraphics[scale=.1]{images/2515surfacecode_round1}
\end{split}
\end{equation*}

\medskip 

We leave the generalization of this scheme to the whole family of surface codes as an open question.  

\begin{open question}
For the $\llbracket d^2, 1, d \rrbracket$ surface code, prove that the natural $d$-round generalization of the above measurement sequences is fault tolerant to distance~$d$.  
\end{open question}

The standard method of surface code error correction uses $d$ rounds of $(d-1)^2$ measurements each.
By generalizing our sequence, one could potentially achieve fault-tolerant error correction with half as many measurements.  
However, each measurement in our model must be fully fault tolerant, which requires more ancilla qubits than the standard method.  
It might be interesting to compare the two approaches numerically.  

At least asymptotically, fewer measurements are required if products of plaquette stabilizers can be measured.  
Indeed, there exist sequences of $O(d \log d)$ stabilizer measurements for distance-$d$ fault-tolerant error correction for surface codes~\cite{DelfosseReichardtSvore20singleshot}.  

\begin{open question} [Sub-single-shot topological codes]
Construct \emph{explicit} sequences of $O(d \log d)$ stabilizer measurements for distance-$d$ fault-tolerant error correction for surface codes and color codes.  
\end{open question}


\subsection{Cyclic codes: \texorpdfstring{$\llbracket 31, 11, 5 \rrbracket$}{[[31,11,5]]} BCH and \\ \texorpdfstring{$\llbracket 23, 1, 7 \rrbracket$}{[[23,1,7]]}~Golay codes}

The $\llbracket 31, 11, 5 \rrbracket$ BCH code is a self-dual CSS code whose $Z$ and $X$ stabilizer groups are both generated by 
\begin{align*}
\begin{smallmatrix}
1&0&0&0&0&0&0&0&0&0&1&1&0&1&0&1&0&1&1&1&1&0&0&1&0&0&1&0&1&0&0
\end{smallmatrix}
\end{align*}
and its cyclic permutations, $10$ generators for each group.\footnote{This presentation can be recovered in 
Magma~\cite{Magma97} with 
commands ``C := \mbox{BCHCode}(GF(2),~31,~5); ParityCheckMatrix(C)".}    

Distance-five fault-tolerant $X$ error correction can be done by measuring this stabilizer generator and its next $26$ right cyclic permutations---$27$ measurements total.  (This is not optimal, however, as we have also found working sequences of $26$ measurements.)  

\medskip

The $\llbracket 23, 1, 7 \rrbracket$ Golay code is a self-dual CSS code whose $Z$ and $X$ stabilizer groups are both generated by 
\begin{align*}
\begin{smallmatrix}
1&0&0&0&0&0&0&0&0&0&0&1&1&1&1&1&0&0&1&0&0&1&0
\end{smallmatrix}
\end{align*}
and its cyclic permutations, $11$ generators for each group.  

Measuring the syndrome bit of this stabilizer generator and its next $29$ right cyclic permutations---$30$ measurements total---is sufficient for distance-seven fault-tolerant $X$ error correction.  
(For a $\llbracket 21, 3, 5 \rrbracket$ punctured Golay code, we have also verified that $22$ stabilizer measurements suffice for distance-five fault-tolerant $X$ error correction.)  

\medskip 

Conceivably, the additional structure of cyclic codes could allow for a general analysis that is more efficient than \thmref{t:distancethreenonadaptive}.  We leave this to further study.  

\begin{open question}
Specializing to cyclic codes, find minimum-length measurement sequences for distance-$d$ fault-tolerant error correction.  
\end{open question}

\section{Logical measurement} \label{s:logicalmeasurement}

We have considered adaptive and nonadaptive fault-tolerant syndrome measurement sequences that allow for fault-tolerant error correction.  In a fault-tolerant quantum computer, however, one also needs fault-tolerant implementations of logical operations, the simplest being logical measurement.  

Fault-tolerant logical measurement is not so simple as measuring a logical operator, or even doing so repeatedly.  
For example, with the $\llbracket 7,1,3 \rrbracket$ Steane code, $Z_1 Z_2 Z_3$ is a logical $Z$ operator, but if you use it to measure a codeword with an $X_1$ error, you will get the wrong answer every time.  Instead, different logical operators need to be measured to implement a fault-tolerant logical $Z$ measurement.  Two good measurement sequences are given in \figref{f:713logicalmeasurementsequences}.  

\begin{figure}
\centering
\begin{tabular}{c@{$\quad\qquad$}c}
\subfigure[\label{f:713logicalmeasurementsequences5logicalops}]{
$\begin{array}{r c c c c c c c}
&.&.&1&1&.&.&1\\
&1&1&1&.&.&.&.\\
&1&.&.&1&1&.&.\\
&1&.&.&.&.&1&1\\
&.&1&.&1&.&1&.
\end{array}$}
&
\subfigure[]{
$\begin{array}{r c c c c c c c}
&.&1&.&.&1&.&1\\
&1&1&1&.&.&.&.\\
&1&.&.&1&1&.&.\\
&1&.&.&.&.&1&1\\
&1&.&1&.&1&.&1
\end{array}$}
\end{tabular}
\caption{
A fault-tolerant logical measurement for the $\llbracket 7,1,3 \rrbracket$ code can be implemented with either of these measurement sequences.  The first sequence measures five equivalent logical operators, while the second measures four logical operators and a code stabilizer.  Both sequences also work for fault-tolerant error correction.  
} \label{f:713logicalmeasurementsequences}
\end{figure}

\begin{definition}
Consider a sequence $\cP = (P_1, \ldots, P_k)$, where each $P_j$ is either a $Z$ stabilizer or a $Z$ operator equivalent to logical~$Z$.  Let $b_j$ be $0$ or $1$ in the respective two cases.  Then measuring $\cP$ allows for distance-three fault-tolerant measurement of logical~$Z$ provided that no up to two faults can lead to measurement outcomes $(b_1, \ldots, b_k)$.  
\end{definition}

With both sequences in \figref{f:713logicalmeasurementsequences}, for example, measuring only the first four operators is not sufficient for a fault-tolerant logical~$Z$ measurement, because an input fault on the first qubit and a syndrome fault on the first measurement outcome would lead to outcomes $1111$.  Thus logical $0$ with one fault could not be distinguished from logical~$1$ with one fault.  

In this section, we will study measurement sequences that allow for fault-tolerant logical measurements.  We will focus on the $\llbracket 15,7,3 \rrbracket$ and $\llbracket 16,6,4 \rrbracket$ codes introduced earlier, because of their practical interest.  The codes also have a rich group of qubit permutation automorphisms that simplifies a case-by-case consideration of the many different logical operators.  For example, with six encoded qubits, the $\llbracket 16,6,4 \rrbracket$ code has $2^6 - 1$ nontrivial logical $Z$ operators that we might wish to measure---but we will see below that up to code-preserving qubit permutations (which can be implemented by relabeling qubits) there are only two equivalence classes of logical $Z$ operators.  

We will also study measurement sequences that allow for fault-tolerant logical measurements combined with fault-tolerant error correction.  It turns out that one can do both together faster than running logical measurement and error correction in sequence.  We will consider logical measurements across multiple code blocks, e.g., measuring $\overline{Z}_1 \otimes \overline{Z}_1$ on two $\llbracket 15,7,3 \rrbracket$ code blocks.  Finally, we will consider combining multiple logical measurements, e.g., measuring $\overline{Z}_1$ and $\overline{Z}_2$ together faster than measuring them separately in sequence.

\subsection{Logical operators and permutation automorphisms}

\begin{figure*}
\centering
\begin{tabular}{c@{$\quad\qquad$}c}
\subfigure[$\llbracket 15,7,3 \rrbracket$ code \label{f:hadamardcodelogicalbases1573}]{
$\begin{array}{r c c c c c c c c c c c c c c c}
\widebar Z_1, \widebar X_1:&1&1&.&1&.&.&.&1&.&.&.&.&.&.&1\\
\widebar Z_2, \widebar X_2:&1&1&.&.&1&.&.&.&.&1&.&1&.&.&.\\
\widebar Z_3, \widebar X_3:&1&1&.&.&.&1&.&.&.&.&1&.&.&1&.\\
\widebar Z_4, \widebar X_4:&1&1&.&.&.&.&1&.&1&.&.&.&1&.&.\\
\widebar Z_5, \widebar X_5:&1&.&.&1&.&1&.&.&1&1&.&.&.&.&.\\
\widebar Z_6, \widebar X_6:&1&.&.&1&.&.&1&.&.&.&.&1&.&1&.\\
\widebar Z_7, \widebar X_7:&1&.&.&.&.&.&.&1&.&1&.&.&1&1&.
\end{array}$}
&
\subfigure[$\llbracket 16,6,4 \rrbracket$ code \label{f:hadamardcodelogicalbases1664}]{\raisebox{-.2cm}{
$\begin{array}{r c c c c c c c c c c c c c c c c}
\widebar Z_1, \widebar X_2: &1&1&1&1&.&.&.&.&.&.&.&.&.&.&.&.\\
\widebar Z_2, \widebar X_1: &1&.&.&.&1&.&.&.&1&.&.&.&1&.&.&.\\
\widebar Z_3, \widebar X_4: &1&1&.&.&1&1&.&.&.&.&.&.&.&.&.&.\\
\widebar Z_4, \widebar X_3: &1&.&1&.&.&.&.&.&1&.&1&.&.&.&.&.\\
\widebar Z_5, \widebar X_6: &1&1&.&.&.&.&.&.&1&1&.&.&.&.&.&.\\
\widebar Z_6, \widebar X_5: &1&.&1&.&1&.&1&.&.&.&.&.&.&.&.&.
\end{array}$}}
\end{tabular}
\caption{
Possible bases for the logical qubits for the (a) $\llbracket 15,7,3 \rrbracket$ and (b) $\llbracket 16,6,4 \rrbracket$ codes.  (To explain the notation, for example in (b), $\protect\widebar Z_1 = Z^{\otimes 4} \otimes I^{\otimes 12}$ and $\protect\widebar X_2 = X^{\otimes 4} \otimes I^{\otimes 12}$.)  There are other bases that might be useful, for example for the $\llbracket 15,7,3 \rrbracket$ code one can choose a basis with six weight-four operators that also work for the unpunctured $16$-qubit code, and one weight-seven operator.  In the basis shown here, the operators have weight five and are self-dual, so transversal Hadamard implements logical transversal Hadamard.  
} \label{f:hadamardcodelogicalbases}
\end{figure*}

Bases for the encoded qubits for the $\llbracket 15,7,3 \rrbracket$ and $\llbracket 16,6,4 \rrbracket$ codes are given in \figref{f:hadamardcodelogicalbases}.  These bases are only for reference, as the details are not important here.  

The weight of a logical operator $\widebar P$ is the least Hamming weight of any stabilizer-equivalent operator.  The weight distributions of the two codes' $Z$ or $X$ logical operators are given in \figref{f:15731664weightdistributions}.   

\begin{figure}
\centering
\begin{tabular}{c@{$\qquad\qquad$}c}
\subfigure[$\llbracket 15,7,3 \rrbracket$ code]{
\begin{tabular}{c @{$\;\;$} c}
\hline \hline
Weight & \# Operators  \\
\hline
0 & 1 \\
3 & 35 \\
4 & 35 \\
5 & 28 \\
6 & 28 \\
7 & 1 \\
\hline \hline \\
\end{tabular}
}
&
\raisebox{-.66cm}{
\subfigure[$\llbracket 16,6,4 \rrbracket$ code]{
\begin{tabular}{c @{$\;\;$} c}
\hline \hline
Weight & \# Operators  \\
\hline
0 & 1 \\
4 & 35 \\
6 & 28 \\
\hline \hline \\
\end{tabular}
}}
\end{tabular}
\caption{Distributions of weights of the $Z$ or $X$ logical operators for the $\llbracket 15,7,3 \rrbracket$ and $\llbracket 16,6,4 \rrbracket$ codes.} \label{f:15731664weightdistributions}
\end{figure}

The permutation automorphism group of a code is the set of qubit permutations that preserve the codespace.
The permutation automorphism group of the $\llbracket 15,7,3 \rrbracket$ code has order $20,\!160$, and is isomorphic to $A_8$ and $\mathrm{GL}_4(\Z_2)$.  It is generated by the three permutations\footnote{Magma commands to find the automorphism group are {``C := LinearCode$\langle$GF(2),15$\vert$
[0,0,0,0,0,0,0,1,1,1,1,1,1,1,1],
[0,0,0,1,1,1,1,0,0,0,0,1,1,1,1],
[0,1,1,0,0,1,1,0,0,1,1,0,0,1,1],
[1,0,1,0,1,0,1,0,1,0,1,0,1,0,1]$\rangle$;
AutomorphismGroup(C);"}.} 
\begin{gather*}
(1,2,3)(4,14,10)(5,12,9)(6,13,11)(7,15,8) \\
(1,10,5,2,12)(3,6,4,8,9)(7,14,13,11,15) \\
(1,10,15,3,8,13)(4,6)(5,12,11)(7,14,9)
\end{gather*}

The permutation automorphism group of the $\llbracket 16,6,4 \rrbracket$ has order $322,\!560$, and is generated by the permutations 
\begin{gather*}
(1,2)(3,4)(5,6)(7,8)(9,10)(11,12)(13,14)(15,16) \\
(1,2,4,8,16,15,13,9)(3,6,12,7,14,11,5,10) \\
(9,10)(11,12)(13,14)(15,16)
\end{gather*}
A subgroup of order $16$ acts trivially, with no logical effect (the first permutation above, e.g., has no logical effect). \appref{s:automorphism_group} provides further details on these automorphism groups and their relationship.  

A large permutation automorphism group allows for a rich set of logical operations to be applied by simply permuting the physical qubits, or perhaps just by relabeling them~\cite{HarringtonReichardt11permutations, Grassl13automorphisms}.  That is not our concern here.  Instead, observe: 

\begin{claim}
For both codes, any two logical operators with the same weight are related by a qubit permutation in the automorphism group.  
\end{claim}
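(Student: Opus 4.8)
The plan is to verify the claim computationally, but in a way that is organized around the algebraic structure of the two codes so that the enumeration is short. The key observation is that both codes are (equivalent to) punctured or extended Reed–Muller-type codes whose automorphism groups are large classical groups: $\Aut$ of the $\llbracket 15,7,3 \rrbracket$ code is $\GL_4(\Z_2) \cong A_8$ acting on the $15$ nonzero points of $\mathbb{F}_2^4$, and $\Aut$ of the $\llbracket 16,6,4 \rrbracket$ code is $\AGL_4(\Z_2)$ (order $322{,}560$) acting on the $16$ points of $\mathbb{F}_2^4$, modulo the order-$16$ translation subgroup acting trivially on logical operators. A logical $Z$ operator is a coset of the code inside its dual, and its weight is the minimum Hamming weight in that coset; since $\Aut$ permutes cosets and preserves weight, it suffices to show $\Aut$ acts transitively on the set of logical-$Z$ cosets of each fixed minimum weight.

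First I would fix the logical bases in \figref{f:hadamardcodelogicalbases} and, for each code, enumerate all $2^k-1$ nontrivial logical $Z$ operators, computing for each its coset-minimum weight (a small linear-algebra computation, since the codes have dimension $\le 7$ over $\Z_2$). This reproduces the weight distributions in \figref{f:15731664weightdistributions}: for the $16$-qubit code, $35$ operators of weight $4$ and $28$ of weight $6$; for the $15$-qubit code, $35$ of weight $3$, $35$ of weight $4$, $28$ of weight $5$, $28$ of weight $6$. Next, using the generators of $\Aut$ listed above, I would compute the orbit of one representative logical operator of each weight under the induced action on cosets, and check that each orbit has size equal to the corresponding count in \figref{f:15731664weightdistributions}. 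Transitivity on each weight class follows.

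A cleaner, essentially computation-free argument is available for most of the classes and I would include it where possible. For the $\llbracket 16,6,4 \rrbracket$ code viewed over $\mathbb{F}_2^4$: the code is $\mathrm{RM}(1,4)$ and its dual is $\mathrm{RM}(2,4)$, so logical $Z$ operators correspond to cosets $\mathrm{RM}(2,4)/\mathrm{RM}(1,4)$, i.e.\ to nonzero quadratic forms on $\mathbb{F}_2^4$ modulo affine functions — equivalently, to alternating bilinear forms (symplectic forms) on $\mathbb{F}_2^4$, of which the nonzero ones split by rank: rank $2$ (these are the $35$ weight-$4$ operators) and rank $4$ (the $28$ weight-$6$ operators). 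Since $\GL_4(\Z_2)$ acts transitively on symplectic forms of each fixed rank, and the $\Aut$ action on cosets realizes this $\GL_4$ action, transitivity on each weight class is immediate; the count $35 = $ number of rank-$2$ alternating forms and $28 = $ number of nondegenerate symplectic forms on $\mathbb{F}_2^4$ both check out. The $\llbracket 15,7,3 \rrbracket$ case is the puncture of this at the zero coordinate: logical operators correspond to the same quadratic forms restricted to $\mathbb{F}_2^4 \setminus \{0\}$, and the weight of a restricted form depends only on its type and on whether the chosen affine representative vanishes at $0$, which refines the two symplectic classes into the four weight classes $\{3,4\}$ (from rank $2$) and $\{5,6\}$ (from rank $4$); $A_8 \cong \GL_4(\Z_2)$ still acts transitively within each refined class.

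The main obstacle is bookkeeping in the punctured $\llbracket 15,7,3 \rrbracket$ case: one must correctly match the abstract quadratic-form picture to the concrete generators and logical basis, and confirm that puncturing splits each symplectic orbit into exactly two weight classes of the stated sizes rather than merging or further splitting them. I expect this to be routine once the dictionary between the $15$ coordinates and the nonzero vectors of $\mathbb{F}_2^4$ is fixed, and in any case it is fully settled by the orbit-size computation described above, which serves as an independent check.
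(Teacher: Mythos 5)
Your proposal is correct, and it does more than the paper, which states the claim without an explicit proof; in keeping with the paper's other computer-verified assertions, the intended justification is essentially the orbit enumeration you describe (act with the listed generators of the permutation automorphism groups on a representative logical coset of each weight and match orbit sizes against \figref{f:15731664weightdistributions}). Your structural argument is a genuine, computation-free alternative and it checks out. For the $\llbracket 16,6,4 \rrbracket$ code, identifying logical $Z$ classes with $\mathrm{RM}(2,4)/\mathrm{RM}(1,4)$, i.e.\ with alternating bilinear forms on $\mathbb{F}_2^4$, is right: translations act trivially on this quotient, the induced $\GL_4(\Z_2)$ action is by congruence, rank is a complete congruence invariant, and the standard coset-weight facts for first-order Reed--Muller codes give minimum weight $4$ for the $35$ rank-two forms and $6$ for the $28$ rank-four forms, so weight classes coincide with orbits. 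For the $\llbracket 15,7,3 \rrbracket$ code one precision is needed: the quotient is by the simplex code (homogeneous linear functions), not by all affine functions, so a logical class is the invariant pair $(c,B)$ with $c$ the constant term and $B$ the alternating form---$c$ is an invariant of the coset, not a property of ``the chosen affine representative''---and $\GL_4(\Z_2)$ fixes $c$ (it fixes the point $0$) while acting on $B$ by congruence; the coset minimum weights come out to $4-c$ for rank two and $6-c$ for rank four, yielding the four classes of weights $3,4,5,6$ with sizes $35,35,28,28$, each a single orbit, while the remaining weight-$7$ class (the $B=0$, $c=1$ coset of the all-ones vector), which your sketch omits, is a singleton and trivially an orbit. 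With that bookkeeping pinned down your argument is complete, and it has the added benefit over the purely computational check of explaining the counts $35$ and $28$ as the numbers of rank-two and nondegenerate alternating forms on $\mathbb{F}_2^4$.
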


\noindent
(Logical operators with different weights of course cannot be related by a permutation automorphism.)  

Therefore, up to permutation automorphisms, there are five equivalence classes of nontrivial logical operators for the $\llbracket 15,7,3 \rrbracket$ code, and just two equivalence classes for the $\llbracket 16,6,4 \rrbracket$ code.  This greatly simplifies our problem of specifying sequences for measuring logical operators fault tolerantly.  It is sufficient to find a sequence that works for one logical operator in each weight equivalence class; then for any logical operator of the same weight, a working measurement sequence can be obtained by applying the appropriate qubit permutation.

\subsection{\texorpdfstring{$\llbracket 15,7,3 \rrbracket$}{[[15,7,3]]} code: Measurement and error~correction}

We have found that every logical operator can be fault-tolerantly measured using at most six fault-tolerant measurements.  For weight-four and weight-five logical operators, three and five measurements suffice, respectively; see \figref{f:1573measurementsforlogicalmeasurement}.  

Certainly, three measurements are needed for a fault-tolerant logical measurement.  With two or fewer measurements, a single measurement fault would not be correctable.  For measuring weight-four logical operators, three measurements suffice, because every such operator has three representatives with disjoint supports.  For example, these three logical operators are equivalent up to stabilizers: 
\begin{equation} \label{e:1573hammingweightfourlogicaloperatormeasurement}
\begin{array}{r c c c c c c c c c c c c c c c}
&.&.&.&.&.&.&.&.&.&.&.&1&1&1&1\\
&.&.&.&.&.&.&.&1&1&1&1&.&.&.&.\\
&.&.&.&1&1&1&1&.&.&.&.&.&.&.&.
\end{array}
\end{equation}
A single error on the input, or a single fault during the measurements, can flip at most one of the three outcomes, so the majority will still be correct.  

The following sequence of measurements works for a weight-three logical operator.  Here the first three measurements are of equivalent logical operators, and the last three are of stabilizers.  (It is also possible to use six logical operator measurements, and in fact that can give a lower total weight, $38$ instead of $41$.)  
\begin{equation*}
\place{\footnotesize \text{logical}}{20mu}{26pt}
\place{\footnotesize \text{operators}}{20mu}{16pt}
\place{\footnotesize \text{stabilizers}}{20mu}{-19pt}
\place{$\left\{\begin{array}{c} \\ \\ \end{array}\right.$}{75mu}{20pt}
\place{$\left\{\begin{array}{c} \\ \\ \end{array}\right.$}{75mu}{-21pt}
\qquad\qquad\begin{array}{r c c c c c c c c c c c c c c c}
&1&1&1&.&.&.&.&.&.&.&.&.&.&.&.\\
&1&.&.&1&1&.&.&1&1&.&.&.&.&1&1\\
&.&1&.&1&.&1&.&1&.&1&.&.&1&.&1\\
&1&.&1&.&1&.&1&.&1&.&1&.&1&.&1\\
&.&1&1&1&1&.&.&1&1&.&.&.&.&1&1\\
&.&.&.&.&.&.&.&1&1&1&1&1&1&1&1
\end{array}
\end{equation*}
Why are the last three measurements necessary?  If we only made the first three measurements, of equivalent logical operators, then without any errors logical $0$ would result in measurement outcomes $000$ and logical $1$ in outcomes $111$.  However, with a input error on the last qubit, logical~$0$ would result in measurement outcomes $011$, which cannot be distinguished from logical~$1$ with an erroneous first measurement.  With the last three stabilizer measurements, ideally the measurement outcomes will be either $000000$, for logical~$0$, or $111000$, for logical~$1$.  One can check that no one or two faults, either on the input or during the measurements can flip $000000$ to $111000$, and hence logical~$0$ and logical~$1$ will be distinguishable even if there is up to one fault.  

With the aid of a computer to verify fault tolerance, measurement sequences for logical operators of weights five, six or seven can be similarly found (\figref{f:1573measurementsforlogicalmeasurement}).  

Given that one has to make multiple measurements in order to measure a logical operator fault tolerantly, it makes sense to use the extracted information not just for determining the logical outcome, but also for correcting errors.  Can one combine measurement of a logical $Z$ operator with $X$ error correction, faster than running them sequentially?  Yes.  

As listed in \figref{f:1573measurementsforlogicalmeasurement}, in fact for any logical operator seven $Z$ measurements suffice for logical measurement and $X$ error correction together.  For a weight-five logical operator, just six $Z$ measurements suffice: 
\begin{equation*}
\begin{array}{r c c c c c c c c c c c c c c c}
&.&.&.&.&1&.&.&.&1&.&.&.&1&1&1\\
&.&.&.&.&1&.&.&1&.&1&1&1&.&.&.\\
&.&.&.&1&.&1&1&.&1&.&.&1&.&.&.\\
&.&1&1&1&.&.&.&1&.&.&.&.&1&.&.\\
&1&.&1&.&.&.&1&.&.&.&1&.&.&1&.\\
&1&1&.&.&.&1&.&.&.&1&.&.&.&.&1
\end{array}
\end{equation*}
This measurement sequence, of six equivalent weight-five operators, satisfies that no up to two input or internal faults can flip the ideal syndrome for logical $0$, $0^6$, to the ideal syndrome for logical~$1$, $1^6$.  Therefore with at most one fault, logical~$0$ can be distinguished from logical~$1$.  Then, the differences from the ideal syndromes can be used to diagnose and safely correct input errors.  

\begin{figure}[!th]
\vspace{0cm} 
\centering
\begin{tabular}{c}
\subfigure[$\llbracket 15,7,3 \rrbracket$ code \label{f:1573measurementsforlogicalmeasurement}]{
\raisebox{.3cm}{\includegraphics[scale=.9]{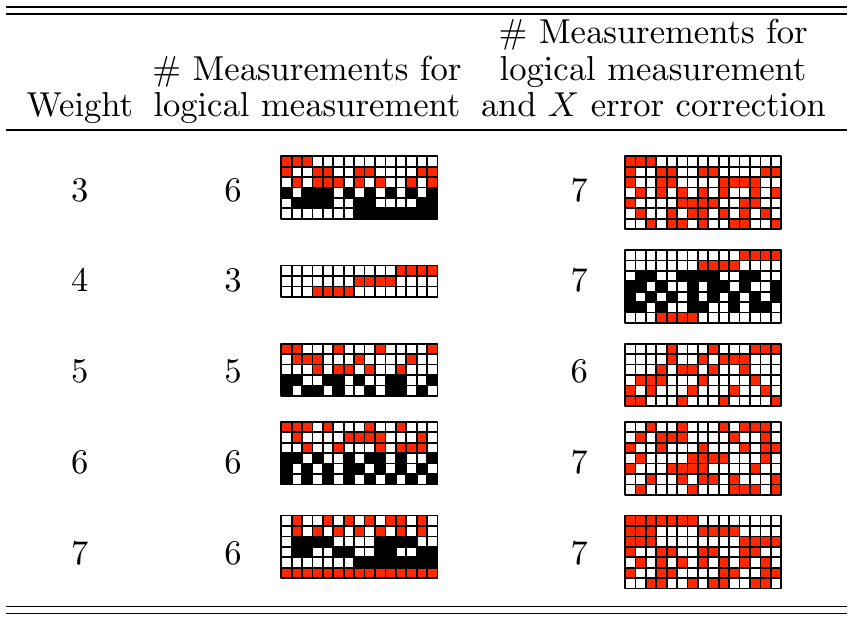}}
}
\\
\subfigure[$\llbracket 16,6,4 \rrbracket$ code, distance-three fault tolerance \label{f:1664measurementsforlogicalmeasurement}]{
\raisebox{.3cm}{\includegraphics[scale=.9]{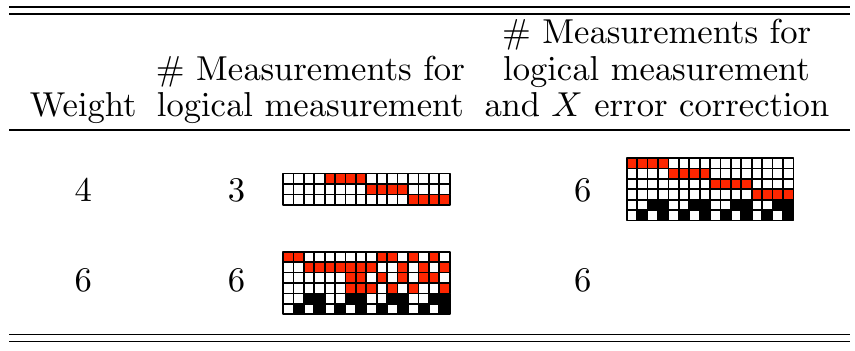}}
}
\\
\subfigure[$\llbracket 16,6,4 \rrbracket$ code, distance-four fault tolerance \label{f:1664measurementsforlogicalmeasurementdistance4}]{
\raisebox{.3cm}{\includegraphics[scale=.9]{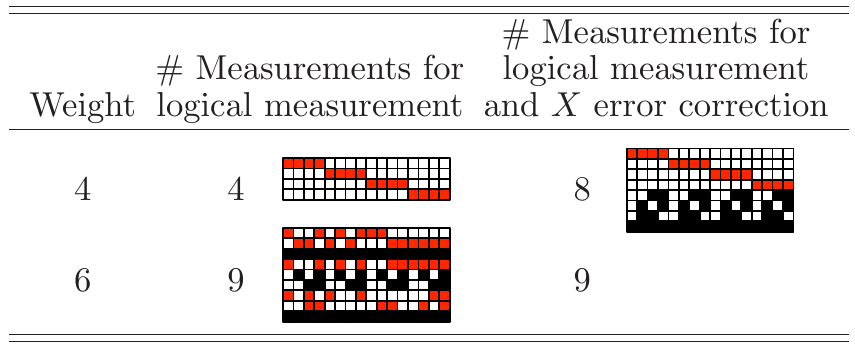}}
}
\end{tabular}
\caption{
For each weight, the tables give a number of measurements sufficient for fault-tolerantly measuring a logical operator of that weight.  For example, Eq.~\eqnref{e:1573hammingweightfourlogicaloperatormeasurement} gives three measurements that suffice to measure a weight-four logical operator.  The tables also give a length sufficient both for measuring a $Z$ logical operator and carrying out $X$ error correction.  As distance-three $X$ error correction on its own needs seven $Z$ measurements for the $\llbracket 15,7,3 \rrbracket$ code, and five measurements for the $\llbracket 16,6,4 \rrbracket$ code---or seven measurements for distance-four $X$ error correction---there are substantial savings from combining logical measurement with error correction.  
In the illustrated measurement sequences, all filled squares represent $Z$ operators.  Red rows correspond to logical operators and black rows to stabilizers.  
}
\end{figure}

Recall from \propref{t:hammingcodesnonadaptiveerrorcorrection} that $X$ error correction on its own uses seven nonadaptive $Z$ stabilizer measurements.  Thus by combining the $Z$ measurement steps, $6 + 7 = 13$ measurements suffice for a weight-five logical measurement and full error correction, versus $5 + 7 + 7 = 19$ steps for running error correction separately.  

This result would sound more impressive if we used a weaker baseline.  
The most naive procedure would simply fix a logical $Z$ operator~$P$, and repeat $d = 3$ times: fault-tolerant quantum error correction and measure~$P$.  With Shor's $16$-measurement sequence for $X$ error correction, this makes $3 (16 + 1) = 51$ measurements for logical $Z$ measurement and full error correction.  Even using the seven-measurement sequence from \propref{t:hammingcodesnonadaptiveerrorcorrection}, we get $3 (7 + 1) = 24$.  A little effort spent in optimizing logical measurement is well worth it.

\subsection{\texorpdfstring{$\llbracket 16,6,4 \rrbracket$}{[[16,6,4]]} code: 
Measurement and error~correction} \label{s:1664measurementerrorcorrection}

From \figref{f:15731664weightdistributions}, there are two weight equivalence classes of nontrivial logical operators, weight-four and weight-six operators.  
Although the $\llbracket 16,6,4 \rrbracket$ code has distance four, we will consider fault tolerance only to distance three, i.e., tolerating up to one input error or internal fault.  

Any weight-four operator can be measured fault tolerantly in three steps, just as in Eq.~\eqnref{e:1573hammingweightfourlogicaloperatormeasurement}.  (Adding an initial qubit makes the operators in \eqnref{e:1573hammingweightfourlogicaloperatormeasurement} valid logical operators for the $\llbracket 16,6,4 \rrbracket$ code.)  

For logical operators of weight four, six measurement steps suffice for combined logical $Z$ measurement and $X$ error correction: 
\begin{equation*}
\place{\footnotesize \text{logical}}{20mu}{20pt}
\place{\footnotesize \text{operators}}{20mu}{10pt}
\place{\footnotesize \text{stabilizers}}{20mu}{-27pt}
\place{$\left\{\begin{array}{c} \\ \\ \\ \end{array}\right.$}{75mu}{13pt}
\place{$\left\{\begin{array}{c} \\ \end{array}\right.$}{75mu}{-29pt}
\qquad\qquad\begin{array}{r c c c c c c c c c c c c c c c c}
&1&1&1&1&.&.&.&.&.&.&.&.&.&.&.&.\\
&.&.&.&.&1&1&1&1&.&.&.&.&.&.&.&.\\
&.&.&.&.&.&.&.&.&1&1&1&1&.&.&.&.\\
&.&.&.&.&.&.&.&.&.&.&.&.&1&1&1&1\\
&.&.&1&1&.&.&1&1&.&.&1&1&.&.&1&1\\
&.&1&.&1&.&1&.&1&.&1&.&1&.&1&.&1
\end{array}
\end{equation*}
(Measuring the four disjoint, equivalent logical operators suffices for fault-tolerant logical measurement.  For error correction, a logical operator measurement different from the others identifies which block of four qubits an input error occurred on, and the two stabilizer measurements then fully localize the error.)  

For logical operators of weight six, too, six steps suffice: 
\begin{equation*}
\place{\footnotesize \text{logical}}{20mu}{20pt}
\place{\footnotesize \text{operators}}{20mu}{10pt}
\place{\footnotesize \text{stabilizers}}{20mu}{-27pt}
\place{$\left\{\begin{array}{c} \\ \\ \\ \end{array}\right.$}{75mu}{13pt}
\place{$\left\{\begin{array}{c} \\ \end{array}\right.$}{75mu}{-29pt}
\qquad\qquad\begin{array}{r c c c c c c c c c c c c c c c c}
&1&1&.&.&.&.&.&.&.&1&1&.&1&.&1&.\\
&.&.&1&1&1&1&1&1&1&.&.&1&.&1&.&1\\
&.&.&.&.&.&.&1&1&.&1&.&1&.&1&1&.\\
&.&.&.&.&.&.&1&1&1&.&1&.&1&.&.&1\\
&.&.&1&1&.&.&1&1&.&.&1&1&.&.&1&1\\
&.&1&.&1&.&1&.&1&.&1&.&1&.&1&.&1
\end{array}
\end{equation*}

Recall from \secref{s:generalizing844and16114codes} that five $Z$ stabilizer measurements suffice for distance-three fault-tolerant $X$ error correction.  Combining a logical $Z$ measurement with $X$ error correction thus costs only one more measurement.  See \figref{f:1664measurementsforlogicalmeasurement}.  

\medskip

The above measurement sequences only give distance-three protection.  This allows for a fair comparison of the $\llbracket 16,6,4 \rrbracket$ code with the $\llbracket 15,7,3 \rrbracket$ code and with the other codes in \figref{f:codecomparison}.  To take full advantage of the $\llbracket 16,6,4 \rrbracket$ code's greater distance, though, more measurements are needed.  The sequences given in \figref{f:1664measurementsforlogicalmeasurementdistance4} allow for fault tolerance to distance four.  Recall 
that seven measurements suffice for distance-four fault-tolerant $X$ error correction alone.  
Thus, once again there are substantial savings from combining a logical measurement with error correction.  

\medskip

It would be interesting to study fault-tolerant logical measurement for other codes, beyond the $\llbracket 15,7,3 \rrbracket$ and $\llbracket 16,6,4 \rrbracket$ codes, as in Theorems~\ref{t:distancethreenonadaptive} and~\ref{t:distancethreenoncss}.  
As in \secref{s:hammingcodes}, one could also explore the impact of restricted measurements.  

\begin{open question}
For general distance-three stabilizer codes, CSS or not, find measurement sequences for fault-tolerant logical measurements.  
\end{open question}

\subsection{Measuring logical operators across multiple code blocks} \label{s:logicalmeasurementacrosscodeblocks}

From the above analyses, we can  
implement 
fault-tolerant $X$ error correction combined with measurement of any logical $Z$ operator, for the $\llbracket 7,1,3 \rrbracket$, $\llbracket 15,7,3 \rrbracket$ and $\llbracket 16,6,4 \rrbracket$ codes.  However, what if we want to measure a logical $Z$ operator across multiple code blocks, for example, $\widebar Z_1 \otimes \widebar Z_2$ on two code blocks, or perhaps $\widebar Z_1 \otimes \widebar Z_2 \otimes (\widebar Z_3 \widebar Z_4 \widebar Z_5 \widebar Z_6 \widebar Z_7)$ on three code blocks?  

Distance-three fault tolerance for a multi-block logical $Z$ operator requires the same condition needed for a single-block logical $Z$ operator: no two faults should be able to flip the all-zeros syndrome to the all-ones syndrome.  With this condition, logical~$0$ and~$1$ can be distinguished even with up to one input error or internal fault, across all the involved code blocks.  In general, one has to search to find working measurement sequences.  

Fortunately, in many cases we can use the measurement sequences found already.  For measuring $\widebar Z_a \otimes \widebar Z_b \otimes \cdots$, if the individual logical operators $\widebar Z_a, \widebar Z_b, \ldots$ are related by a permutation automorphism, and therefore working measurement sequences for $\widebar Z_a, \widebar Z_b, \ldots$ differ only by a qubit permutation, then these sequences can be combined into a sequence for measuring $\widebar Z_a \otimes \widebar Z_b \otimes \cdots$.  For example, for the $\llbracket 7,1,3 \rrbracket$ code, place two copies of the measurement sequence from \figref{f:713logicalmeasurementsequences5logicalops} side-by-side, in order to obtain a sequence of five equivalent $\widebar Z \otimes \widebar Z$ operators: 
\begin{equation*}
\place{\footnotesize \text{block 1}}{80mu}{20pt}
\place{\footnotesize \text{block 2}}{226mu}{20pt}
\place{\rotatebox{-90}{$\left\{\begin{array}{c} \\[2.2cm] \end{array}\right.$}}{80mu}{16pt}
\place{\rotatebox{-90}{$\left\{\begin{array}{c} \\[2.2cm] \end{array}\right.$}}{226mu}{16pt}
\raisebox{-1cm}{$\begin{array}{r c c c c c c c @{$\quad$} c c c c c c c}
&.&.&1&1&.&.&1 &.&.&1&1&.&.&1\\ 
&1&1&1&.&.&.&. &1&1&1&.&.&.&.\\
&1&.&.&1&1&.&. &1&.&.&1&1&.&.\\
&1&.&.&.&.&1&1 &1&.&.&.&.&1&1\\
&.&1&.&1&.&1&. &.&1&.&1&.&1&.
\end{array}$}
\end{equation*}
This is a fault-tolerant $\widebar Z \otimes \widebar Z$ measurement sequence.  Indeed, the syndrome errors that can be caused by a single fault in block~$1$ (e.g., $01110$ from an $X_1$ input error) are the same as those that a single fault in block~$2$ can cause.  The fault-tolerance condition for one block implies that no two faults can flip syndrome $0^5$ to~$1^5$.  

We therefore obtain fault-tolerant sequences for measuring $\widebar Z_a \otimes \widebar Z_b \otimes \cdots$, provided that each operator lies in the same permutation equivalence class; for the $\llbracket 15,7,3 \rrbracket$ and $\llbracket 16,6,4 \rrbracket$ codes, this means that they have the same weight (\figref{f:15731664weightdistributions}).  

If $\widebar Z_a$ and $\widebar Z_b$ have different weights, then more work is required to find a fault-tolerant measurement sequence for $\widebar Z_a \otimes \widebar Z_b$, because single faults in a $\widebar Z_a$ measurement sequence have different syndrome effects than single faults in a $\widebar Z_b$ measurement sequence.  We leave this search as an exercise.  

\begin{open question} 
Find measurement sequences for fault-tolerantly measuring $\widebar Z_a \otimes \widebar Z_b$ over two code blocks of a CSS code.  
\end{open question}

\subsection{Further problems for logical measurement} \label{s:furtherlogicalmeasurementproblems}

There are further logical measurement problems, with practical utility depending on the application.  

For example, one problem is to measure multiple logical operators in parallel, possibly combined with error correction.  With the $\llbracket 16,6,4 \rrbracket$ code, e.g., say we want to measure $\widebar Z_1$ and $\widebar Z_3$, from the basis of \figref{f:hadamardcodelogicalbases1664}, fault tolerant to distance three.  As both operators have weight four, we can measure them both separately in $3 + 3 = 6$ steps, or we can measure them both separately, with one logical measurement combined with error correction, in $3 + 6 = 9$ steps.  However, we can fault-tolerantly measure them together, with error correction, in seven steps, as follows: 
\begin{equation*}
\place{\footnotesize \text{$\widebar Z_1$}}{35mu}{34pt}
\place{$\left\{\begin{array}{c} \\[.1cm] \end{array}\right.$}{75mu}{34pt}
\place{\footnotesize \text{$\widebar Z_3$}}{35mu}{4pt}
\place{$\left\{\begin{array}{c} \\[.1cm] \end{array}\right.$}{75mu}{4pt}
\place{\footnotesize \text{$\widebar Z_1 \widebar Z_3$}}{35mu}{-15pt}
\place{$\left\{\begin{array}{c} \\[-.05cm] \end{array}\right.$}{75mu}{-15pt}
\place{\footnotesize \text{stabilizers}}{20mu}{-36pt}
\place{$\left\{\begin{array}{c} \\ \end{array}\right.$}{75mu}{-36pt}
\qquad\qquad\begin{array}{r c c c c c c c c c c c c c c c c}
&1&1&1&1&.&.&.&.&.&.&.&.&.&.&.&.\\
&.&.&.&.&1&1&1&1&.&.&.&.&.&.&.&.\\
&.&.&.&.&.&.&.&.&1&1&.&.&1&1&.&.\\
&.&.&.&.&.&.&.&.&.&.&1&1&.&.&1&1\\
&.&.&1&1&1&1&.&.&.&.&.&.&.&.&.&.\\
&.&.&1&1&1&1&.&.&.&.&1&1&1&1&.&.\\
&.&1&.&1&.&1&.&1&.&1&.&1&.&1&.&1
\end{array}
\end{equation*}
Essentially, instead of using separate $[3,1,3]$ classical repetition codes, in the first five steps we are using the $[5,2,3]$ classical code that encodes syndrome $(z_1, z_3)$ as $(z_1, z_1, z_3, z_3, z_1 \oplus z_3)$.  

In \secref{s:logicalmeasurementacrosscodeblocks} above we gave sequences for measuring $\widebar Z_a \otimes \widebar Z_b$ across two code blocks, in certain cases. 
What about combining the logical measurement with error correction, on two code blocks?  To consider this problem, one has to choose a suitable definition for fault tolerance.  Should a two-block error-correction procedure tolerate up to one input error or one internal fault total, across both blocks?  Or should it tolerate up to one input error or one internal fault on each block, so up to two faults total?  Or should it tolerate up to one input error on each block, and one internal fault total?  All these choices are possible, but tolerating more faults will generally require longer measurement sequences.  

We have given fault-tolerant implementations for all $X$-type and $Z$-type logical measurements.  
The codes considered allow transversal Hadamard.  
However, this is not enough to implement the full Clifford group on the encoded qubits.  
One could complete the Clifford group by injecting single-qubit $H$ gates and $S$ gates, or by designing fault-tolerant sequences for arbitrary logical Pauli measurements.  

\medskip
One can extend the notion of single-shot measurement sequences 
to logical measurements.
A measurement sequence for an $\llbracket n, k \rrbracket$ stabilizer code that 
implements a logical Pauli measurement combined with fault-tolerant quantum error correction 
is said to be {\em single-shot} if it contains at most $n - k + 1$ measurements.

\begin{open question} [Single-shot code]
For $d \geq 3$, find a stabilizer code equipped 
with a single-shot measurement sequence 
for distance-$d$ fault-tolerant quantum error correction, 
and
with single-shot measurement sequences
for distance-$d$ fault-tolerant logical Pauli measurement 
of each of the $4^k$ logical Pauli operators.
\end{open question}

In the case of the $\llbracket 16, 6, 4 \rrbracket$ code, we have constructed single-shot, distance-three fault-tolerant sequences for error correction and for logical measurement for all $X$ and $Z$ logical operators. We do not know if this can be extended to the full logical Pauli group.

\medskip
One could obtain even faster logical operations by optimizing measurement 
of sets of commuting logical operators.
A measurement sequence for an $\llbracket n, k \rrbracket$ stabilizer code that fault-tolerantly implements the simultaneous measurement of $m$ independent logical Pauli operators
combined with error correction is said to be {\em single-shot} if it contains at most 
$n - k + m$ measurements.
A code equipped with single-shot fault-tolerant measurement sequences
for any set of up to $m$ commuting logical Pauli operators 
is called an {\em $m$-fold single-shot code}.

\begin{open question} [$k$-fold single-shot code]
For $d \geq 3$ and $k \geq 2$, find an $\llbracket n, k \rrbracket$ stabilizer code that is 
$k$-fold single-shot code to distance $d$.
\end{open question}

\section{Conclusion}

We have obtained substantial speed-ups of fault-tolerant quantum error correction, using carefully designed syndrome-measurement sequences and codes optimized for single-shot error correction.  

We have constructed short Shor-style measurement sequences for fault-tolerant quantum error correction with small-distance codes adapted to different hardware capabilities, including available measurement types and adaptive or nonadaptive measurements.  
Our results contribute to make small block codes more competitive with topological codes.  
Faster error correction means fewer potential fault locations, which generally results in better performance.  
It also reduces the quantum computer's logical cycle time.  
We have focused on small-distance codes, which may be the most practical, but there is much potential for innovation in fault-tolerant error correction for higher-distance codes.  

We have designed families of single-shot quantum error correction schemes, and given very fast implementations of fault-tolerant logical $X$ and $Z$ measurements.  
These logical measurements can sometimes be performed integrated with error correction, using fewer measurements than an error correction cycle alone.  
It is likely that this approach can generalize to arbitrary logical Pauli measurements, which by frame tracking would allow implementing the entire logical Clifford group.  
Universality could then be achieved using magic state distillation and injection~\cite{BravyiKitaev04magic, BravyiHaah2012magic}. In \appref{s:1573measure6evenweightopsandec}, we design optimized measurement sequences that could be relevant for universality.

\medskip

{\bf Acknowledgements.}
The authors would like to thank Prithviraj Prabhu, Michael Beverland, Jeongwan Haah, Adam Paetznick, Vadym Kliuchnikov, Marcus Silva and Krysta Svore for insightful discussions.

\clearpage

\appendix

\section{Adaptive error correction measurement sequence for any self-dual, distance-three CSS code} 
\label{s:selfdualcssadaptive}

In \secref{s:distance3csscodesadaptive} we gave an adaptive error-correction measurement sequence that, for a distance-three CSS code with $R$ stabilizer generators, uses between $\numgenerators$ and $2 \numgenerators - 2$ measurements.  
(See also \figref{f:nonadaptiveadaptive}.)  
Here we present an adaptive error-correction sequence for \emph{self-dual}, distance-three CSS codes, that uses fewer measurements in the worst case.  

\begin{claim} \label{t:selfdualcssadaptive}
Consider an $\llbracket n, n - \numgenerators, 3 \rrbracket$ self-dual CSS code.  Then fault-tolerant error correction can be realized with an adaptive stabilizer-measurement sequence that makes between $\numgenerators$ and $\tfrac32 \numgenerators$ all-$X$ or all-$Z$ measurements.  
\end{claim}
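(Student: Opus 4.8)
The plan is to lean on two special features of a self-dual CSS code. First, a single parity-check matrix $H$ with $r = \numgenerators/2$ rows simultaneously generates the $Z$-stabilizer group and the $X$-stabilizer group, so the $q$-th column $h_q$ of $H$ is the $Z$-syndrome of $X_q$, the $X$-syndrome of $Z_q$, and (in both sectors) the syndrome of $Y_q$. Second, in the ``one fault total'' model a single fault produces a data error of weight at most one together with, at most, a flip of the one syndrome bit being measured at that time, so the $X$- and $Z$-sectors cannot both be corrupted.

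Concretely, I would measure the $r$ generators in the $Z$ basis to get a syndrome $s^Z$, then the same $r$ generators in the $X$ basis to get $s^X$ --- that is $\numgenerators$ measurements. If $s^Z = s^X = 0$, apply no correction and stop (any residual error comes from a fault strictly after all $\numgenerators$ measurements, hence has weight at most one, which is allowed); this is the best case. Otherwise I would use the temporal ordering: a fault during the $X$ round leaves $s^Z = 0$, so $s^Z \neq 0$ forces the offending fault (or input error) to lie at or before the $Z$ round, which makes $s^X$ completely reliable; dually, $s^Z = 0$ means any $X$-type data error present is caused by a post-$Z$-round fault, hence has weight at most one and may be ignored. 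In each case exactly one sector needs re-examination, so I would re-measure $g_1,\ldots,g_{r-1}$ in the appropriate basis --- $r-1$ extra measurements, a total of at most $\tfrac32 \numgenerators$ --- and apply the correction dictated by the ``sandwich'' analysis of \thmref{t:distancethreenonadaptive}: the length-$r$ pass followed by the length-$(r-1)$ pass of the same generators is exactly that theorem's sequence, so if the two passes are inconsistent we are certainly in the presence of an internal fault and may leave a residual weight-one error, while if they agree we correct the indicated qubit; the reliable opposite-sector syndrome supplies the correction in the other sector.

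The delicate case --- and where I expect to spend the most care --- is $s^Z \neq 0$ and $s^X \neq 0$, i.e.\ the syndrome looks like a $Y$ error. Here $s^X$ is reliable and equals $h_q$ for a unique qubit $q$ (columns of $H$ are distinct and nonzero, since $d = 3$), so the data error has a $Z_q$ or $Y_q$ component; re-measuring $g_1,\ldots,g_{r-1}$ in the $Z$ basis and running the \thmref{t:distancethreenonadaptive} analysis decides whether an $X_q$ component is also present. If that analysis is consistent and points to $X_q$, the original fault must have been an input $Y_q$ error --- the budget is spent by any genuine nontrivial $Z$-syndrome bit --- and we correct $Y_q$ exactly; otherwise an internal fault occurred and a residual weight-one error is acceptable. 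What remains is the somewhat tedious but routine verification, by the same column-suffix bookkeeping as for \thmref{t:distancethreenonadaptive} but now tracking both sectors, that across every branch --- pure $X$, $Z$, or $Y$ input error; internal fault in the first or second round, with or without a concurrent measurement flip; spurious measurement flip on a data-perfect state --- the applied correction leaves an output error of weight, modulo stabilizers, at most the number of internal faults, while the measurement count stays between $\numgenerators$ and $\tfrac32 \numgenerators$.
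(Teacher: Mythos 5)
Your procedure is correct, but it is organized differently from the paper's proof, so a comparison is worthwhile. The paper's procedure stops at the \emph{first} nontrivial outcome in the initial $Z$-then-$X$ pass and from then on trusts only freshly measured bits: if the trigger was a $Z$ bit it simply re-measures all $\numgenerators$ generators (the single-fault budget is already spent, so this full syndrome is clean and directly names the one-qubit correction); if the trigger was an $X$ bit it repeats that one measurement to separate a syndrome-bit flip from a genuine $Z$-type data error, then completes the $X$ syndrome. Because no possibly-corrupted bit is ever reused, the fault-tolerance verification is immediate and never invokes \thmref{t:distancethreenonadaptive}. You instead always complete both length-$\numgenerators/2$ passes, use the temporal-ordering observation (which is also the engine of the paper's argument) to certify the later sector's syndrome as reliable, and then repair the earlier sector with only $\numgenerators/2 - 1$ extra measurements by reusing the corrupted first pass through the sandwich analysis of \thmref{t:distancethreenonadaptive}. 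What your route buys is a slightly better worst case ($\tfrac32\numgenerators - 1$ versus $\tfrac32\numgenerators$) and a clean reduction to an already-proved theorem; what it costs is exactly the ``tedious but routine'' verification you defer, since the suffix bookkeeping must now also cover faults that corrupt a data qubit and flip the concurrently measured syndrome bit, in both sectors---cases the paper's structure avoids by construction. (Also, the paper's early stopping gives fewer measurements when an error shows up early, e.g.\ $\numgenerators + 1$ if the very first $Z$ bit is nontrivial, whereas your scheme always spends at least $\numgenerators$ before branching; both stay within the claimed range.) One small imprecision: when $s^Z = 0$ the residual $X$-type error need not come from a \emph{post}-$Z$-round fault---it can arise from a fault inside the $Z$ round that occurs after the last measured generator touching that qubit, or that leaves the observed bits unflipped---but your conclusion is unaffected, since any such error is internal (an input $X$ or $Y$ error would force $s^Z \neq 0$ by distance three) and of weight one, so leaving it is allowed by \defref{t:faulttolerance}.
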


\begin{proof}
Measure each of the the $\numgenerators/2$ $Z$ stabilizer generators and then each of the $\numgenerators/2$ $X$ stabilizer generators, stopping at the first nontrivial measurement outcome.  
\begin{description}[leftmargin=*]
\item[Case 1] If all syndrome bits are trivial, then stop, having made $\numgenerators$ measurements total.  
\item[Case 2] If the first nontrivial syndrome bit is among the $Z$ measurements, then measure all $\numgenerators$ generators and apply the corresponding one-qubit correction, if any.  The total number of measurements is between $\numgenerators + 1$ and~$\tfrac32 \numgenerators$.  This is fault tolerant because with at most one input or internal fault, the last $\numgenerators$ syndrome bits will be correct.  
\item[Case 3] If the first nontrivial syndrome bit is among the $X$ stabilizer measurements, then repeat that measurement.  If the result is trivial, then there must have been an internal fault, and no correction is required.  If the result is nontrivial again, then either there is an input $Z$ error or an internal $Z$ or $Y$ fault within the support of the stabilizer.  It is enough to measure the other $\numgenerators/2 - 1$ $X$ generators and apply the corresponding one-qubit $Z$ correction, if any.  This will correct for either an input $Z$ error or an internal $Z$ fault, and will convert an internal $Y$ fault to an $X$ error on the same qubit.  The total number of measurements is between $\numgenerators + 1$ and~$\tfrac32 \numgenerators$.  \qedhere
\end{description}
\end{proof}

\section{Automorphism group of \\ Hamming codes}
\label{s:automorphism_group}

Denote $Q_{16}$ the $\llbracket 16, 6, 4 \rrbracket$ code and let $Q_{15}$ be the $\llbracket 15, 7, 3 \rrbracket$ Hamming code.
In this section, we establish a relation between the automorphism groups of these codes, and we prove
that $\abs{ \Aut(Q_{16}) } = 16 \cdot \abs{ \Aut(Q_{15}) }$.  

\medskip
For $\ell = 15, 16$, denote by $C_{\ell}$ the classical self-orthogonal code whose codewords
correspond to the stabilizers of the code $Q_{\ell}$.  
The automorphism group of $Q_\ell$ is the set of permutations of the $\ell$ qubits that preserve
the code $C_{\ell}$.  

\medskip
The automorphism group of $Q_{16}$ acts on the set $\{0, \dots, 15\}$ of qubits and this action is transitive.  
As a consequence, Lagrange's theorem implies 
$$
\abs{ \Aut(Q_{16})_0 }  = \abs{ \Aut(Q_{16}) } / 16
$$
where ${\Aut(Q_{16})}_0$ denotes the set of automorphisms $\sigma$ of $Q_{16}$ that satisfy $\sigma(0) = 0$.

\medskip
Let us now establish an isomorphism between ${\Aut(Q_{16})}_0$ and ${\Aut(Q_{15})}$.  
Consider the transformation $\pi$ that maps $\sigma \in {\Aut(Q_{16})}_0$ onto its restriction $\bar \sigma$ to the set $\{1, \dots, 15\}$.  
Clearly, $\bar \sigma$ is a well defined bijection of the set $\{1, \dots, 15\}$.  
Moreover, since the code $C_{15}$ can be obtained from $C_{16}$ by puncturing the coordinate 0, $\bar \sigma$ preserves the code $C_{15}$.  
This proves that $\bar \sigma$ belongs to ${\Aut(Q_{15})}$.  
In other words, we defined a map 
\begin{align*}
\pi: {\Aut(Q_{16})}_0 & \longrightarrow \Aut(Q_{15}) \\
\sigma & \longmapsto \bar \sigma
\end{align*}
It is clearly an injective map.

\medskip
To prove that $\pi$ is surjective, consider $\bar \sigma \in \Aut(Q_{15})$.  
Define its extension $\sigma$ of $\bar \sigma$ which acts on $\{0, 1, \dots, 15\}$ with fixed-point~$0$.  
Let us prove that $\sigma$ is an automorphism of $Q_{16}$.  
The code space $C_{16}$ can be partionned into three sets 
$$
C_{16} = (1, \dots, 1) \ \sqcup \ 0 \oplus C_{15}^{\even} \ \sqcup \ 1 \oplus C_{15}^{\odd}
$$
and we the transformation $\sigma$ preserves each of these three sets.  
Indeed, $\sigma$ maps the vector $(1, \dots, 1)$ onto itself.  
Moreover, by definition the component $\bar \sigma$ preserves $C_{15}$ and like any permutation it also preserves the weight.  
Therefore, it leaves $C_{15}^{\even}$ and its complement $C_{15}^{\odd}$ invariant.  
This proves that $\sigma$ is an automorphism of $C_{16}$ whose image under $\pi$ is $\bar \sigma$.  
The map $\pi: {\Aut(Q_{16})}_0 \rightarrow \Aut(Q_{15})$ is a group isomorphism. 

\medskip
Overall, we have proved the isomorphism 
$$
\Aut(Q_{15}) \simeq \Aut(Q_{16})_0 
$$ 
and as a result the cardinality of this group is $\abs{ \Aut(Q_{15}) } = \abs{ \Aut(Q_{16})} / 16$.  

\medskip
The same argument applies for any pair of codes obtained by puncturing the extended Hamming code.  
Using standard properties of these codes, the proof takes only a few lines.  
Consider an extended Hamming code with parameters $[2^m, 2^m-m, 4]$.  
Puncturing a coordinates leads to the Hamming code $[2^m-1, 2^m-m-1, 3]$.  
The automorphism group of theses two codes are known to be respectively the affine group $\AGL_m(\Z_2)$ and the general linear group $\GL_m(\Z_2)$ acting on $\Z_2^m$.  
The automorphism group of Hamming code is therefore a normal subgroup of the automorphism group of the extended Hamming and its index is $2^m$.
The previous case corresponds to $m = 4$.

\section{\texorpdfstring{$\llbracket 15,7,3 \rrbracket$}{[[15,7,3]]} code: Measuring all even-weight logical $Z$ operators, for universality} \label{s:1573measure6evenweightopsandec}

In order to achieve fault-tolerant universal computation with the $\llbracket 15,7,3 \rrbracket$ code, it can be useful to measure all the even-weight logical $Z$ operators, which make up six of the seven encoded qubits.  
They have weight-four generators.  
With these six qubits initialized to encoded $\ket 0$, the gate $\big(\begin{smallmatrix}1&0\\0&e^{i \pi / 4}\end{smallmatrix}\big)$ can be applied tranversally to act on the last encoded qubit~\cite{PaetznickReichardt13universal, CampbellTerhalVuillot17universalityreview}.  

Steane's method to measure six encoded qubits is to apply transversal CNOT gates into, then measure, an encoded $\ket{0^6 {+}}$ state.  Instead, measuring one operator at a time, {\`a} la Shor, from \figref{f:1573measurementsforlogicalmeasurement} $6 \cdot 3 = 18$ measurements suffice,
or $22$ measurements with $X$ error correction.  

By measuring the logical operators together instead of sequentially, one can do better.  We have verified that the following sequence of $14$ measurements suffices to measure all even-weight logical $Z$ operators, with error correction: 
\begin{equation*}
\includegraphics[scale=.6]{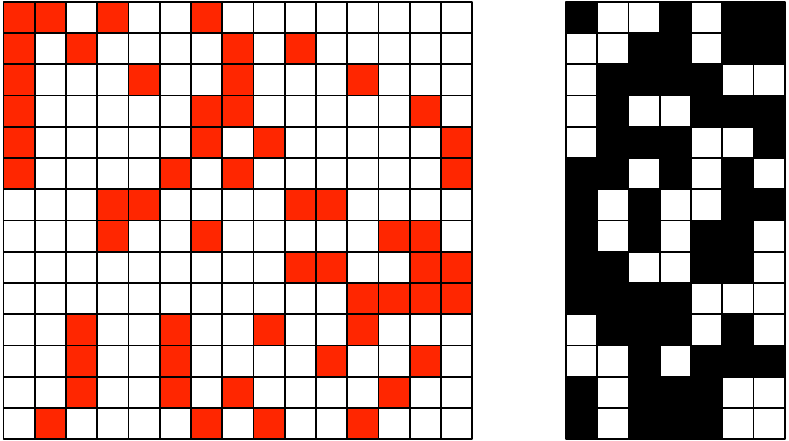}
\end{equation*}
Each $Z$ operator on the left is a combination, specified on the right, of the seven logical operators of \figref{f:hadamardcodelogicalbases1573}.  
This is similar to, if slightly more complicated than, the sequence for measuring two logical $Z$ operators in \secref{s:furtherlogicalmeasurementproblems}.

\section{Other measurement models} \label{s:othermeasurementmodels}

In this paper, we have studied in detail fault-tolerant error correction by sequentially, and either nonadaptively or adaptively, measuring stabilizers fault tolerantly using cat states, Shor-style.  Of course, this is not the only technique for fault-tolerant error correction.  For example, Knill-style error correction works essentially by teleporting an encoded state through an encoded Bell state~\cite{Knill03erasure}.  
Steane-style error correction, for CSS codes, uses transversal CNOT gates to/from encoded $\ket{+}$/$\ket{0}$ states~\cite{Steane97}.  The advantage of these methods is that they extract multiple syndrome bits in parallel; but the disadvantage is that the required encoded ancilla states are more difficult to prepare fault tolerantly than cat states, and need more qubits.  

However, there is room for variation even staying closer to the Shor-style error-correction framework, using cat states to measure single syndrome bits.  We consider four variants: syndrome bit extraction with flags to catch internal faults, partial parallel and parallel syndrome extraction, and nonadaptive flagged fault-tolerant syndrome bit extraction.  We demonstrate each technique on the $\llbracket 7,1,3 \rrbracket$ code.

\subsection{Syndrome bit extraction with flagged qubits}

Recall from Eq.~\eqnref{e:713codefivemeasurementerrorcorrectionsequence} that for the $\llbracket 7,1,3 \rrbracket$ Steane code, fault-tolerant $X$ error correction can be accomplished by measuring a fixed sequence of five $Z$ stabilizers.  
Consider instead measuring the following sequence of four stabilizers: 
\begin{equation} \label{e:713codefourmeasurementerrorcorrectionsequencewithbadinternalfault}
\definecolor{orange}{rgb}{1,0.5,0}
\place{\color{orange} \footnotesize X}{24mu}{28pt}\
\place{\color{red} \footnotesize X}{166mu}{0pt}\
\begin{array}{c c c c c c c}
I&I&I&Z&Z&Z&Z\\
I&Z&Z&I&I&Z&Z\\
Z&I&Z&I&Z&I&Z\\
I&Z&Z&Z&Z&I&I
\end{array}
\end{equation}
This is not enough for fault-tolerant error correction.  As indicated in red, an internal error on qubit~$7$ after the second stabilizer measurement generates the syndrome $0010$, which is confused with an input error on qubit~$1$ (indicated in orange).  This is the only bad internal error, however.  

One way to fix this problem is to place a ``flag" on qubit~$7$, as shown in \figref{f:713syndromeextractionwithflags}.  By temporarily coupling qubit~$7$ to another ancilla qubit, we ensure that if between the second and third stabilizer measurements an $X$ fault occurs on qubit~$7$, it will be detected.  Therefore this internal fault can be distinguished from an input error on qubit~$1$.  

\begin{figure}
\centering
\includegraphics[scale=.769]{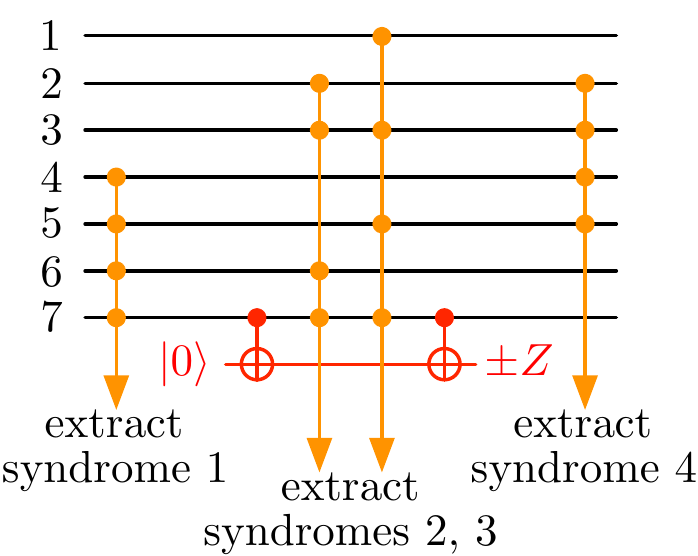}
\caption{By placing a temporary ``flag" on qubit~$7$ to catch the bad internal $X$ fault, fault-tolerant $X$ error correction for the $\llbracket 7,1,3 \rrbracket$ code can be implemented with four nonadaptive stabilizer measurements, compared to five measurements in Eq.~\eqnref{e:713codefivemeasurementerrorcorrectionsequence}.  Since this method requires one more ancilla qubit, it trades space for time.} \label{f:713syndromeextractionwithflags}
\end{figure}

This technique of adding flags to catch internal faults requires more qubits available for error correction; it trades space for time.  It easily extends to other codes.  First find all of the bad internal faults, then put flags around them.  (It is simplest to use separate flags for all code qubits that need them.  Using the same flag on multiple code qubits is not directly fault tolerant, because then a $Z$ fault on the flag could spread back to more than one code qubit.)

\subsection{Partial parallel syndrome extraction}

The bad internal fault in Eq.~\eqnref{e:713codefourmeasurementerrorcorrectionsequencewithbadinternalfault} can also be fixed by switching qubit $7$'s interactions with the cat states measuring the second and third stabilizers, as shown in \figref{f:713partialreordering}.  Then the possible syndromes from an internal fault on qubit~$7$ are $0100$ and $0110$, which are both okay.  
Again this technique trades space for time.  

\begin{figure}
\centering
\includegraphics[scale=.27]{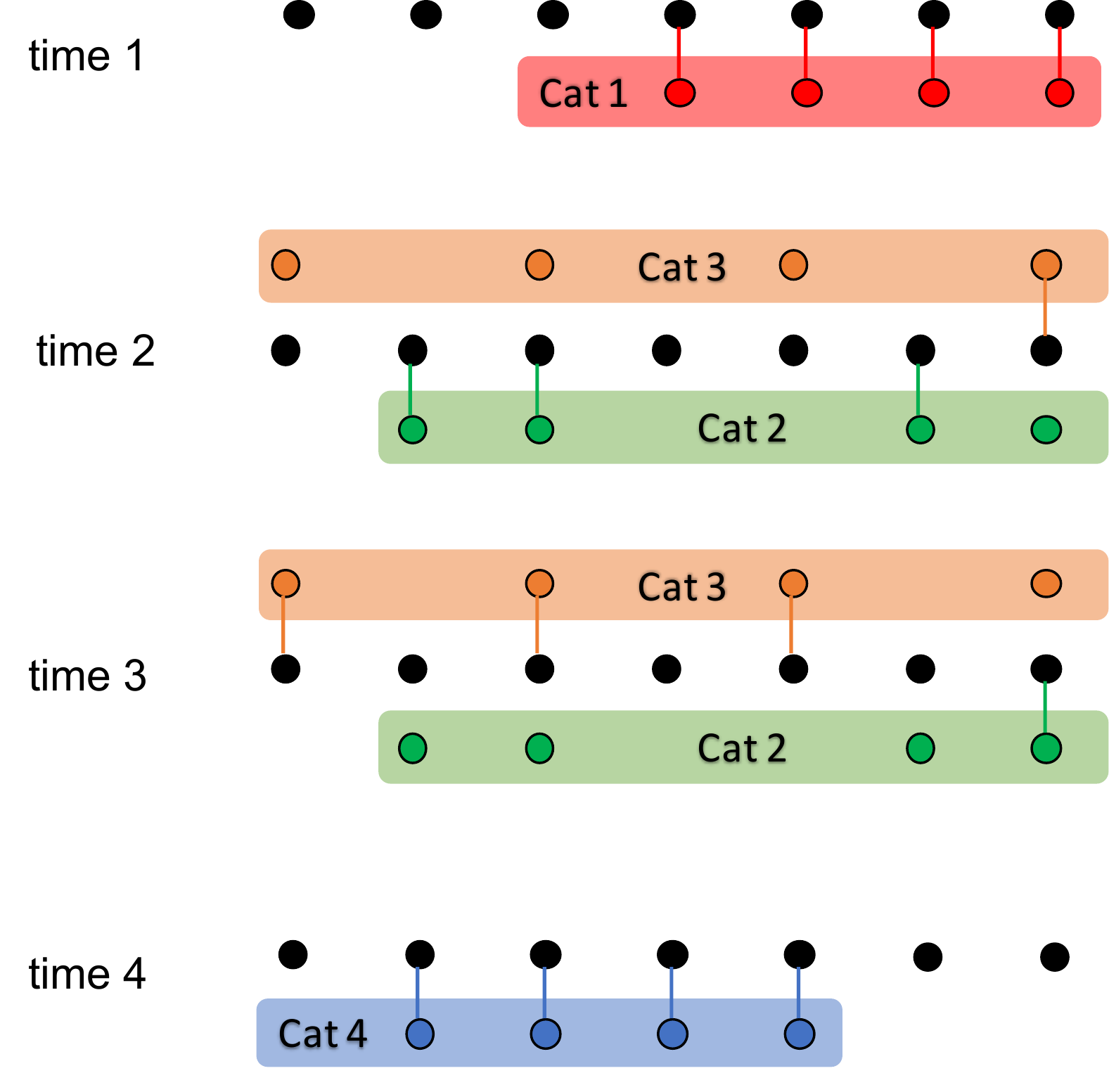}
\caption{
Another way of implementing $\llbracket 7,1,3 \rrbracket$ code fault-tolerant $X$ error correction with four nonadaptive stabilizer measurements is to extract some syndrome bits in parallel, partially reordered.  
} \label{f:713partialreordering}
\end{figure}

\subsection{Parallel syndrome extraction}

Alternatively, the bad internal fault in Eq.~\eqnref{e:713codefourmeasurementerrorcorrectionsequencewithbadinternalfault} can be avoided entirely by measuring the second and third stabilizers simultaneously, using a fault-tolerantly prepared six-qubit ancilla state, stabilized by 
\begin{equation*}
\begin{array}{c c c c c c}
I&I&Z&Z&Z&Z\\
Z&Z&Z&Z&I&I\\[.1cm]
X&X&I&I&I&I\\
I&I&X&X&I&I\\
I&I&I&I&X&X\\
X&I&X&I&X&I
\end{array}
\end{equation*}
Since both syndrome bits are extracted simultaneously, an $X$ fault on the data can flip both or neither, but unlike in~\eqnref{e:713codefourmeasurementerrorcorrectionsequencewithbadinternalfault} cannot go between them.  

With the $\llbracket 7,1,3 \rrbracket$ code, all three $Z$ stabilizers can be simultaneously measured, Steane style, using a seven-qubit encoded $\ket +$ state.  Measuring two at a time might be more useful for larger codes.

\subsection{Nonadaptive flagged fault-tolerant \\ syndrome bit extraction}

Naively, measuring a weight-$w$ stabilizer fault tolerantly requires a $w$-qubit cat state that has been prepared fault tolerantly.  However, this is not necessarily the case.  Methods of using cat states more efficiently have been developed by DiVincenzo and Aliferis~\cite{DiVincenzoAliferis06slow}, and by Stephens~\cite{Stephens14colorcodeft} and Yoder and Kim~\cite{YoderKim16trianglecodes}---techniques generalized in~\cite{ChaoReichardt18flags}.  

Flag fault tolerance~\cite{ChaoReichardt18flags, Reichardt18steane, ChamberlandBeverland17flags, TansuwannontChamberlandLeung18flag} is a technique that for certain codes uses just two ancilla qubits to measure a weight-$w$ stabilizer.  
In the simplest form of flag fault tolerance, a syndrome bit is extracted all onto a single qubit, while an extra ``flag" qubit is used to detect faults that could spread backwards into correlated data errors.  For example, \figref{f:fourqubitsyndromeflagged} shows a flagged circuit for measuring the syndrome bit of a weight-four $Z$ stabilizer.  A single $Z$ fault can spread to a weight-two data error, but then will also be detected by the $X$ basis measurement of the flag qubit, initialized as~$\ket +$.  
For a distance-three CSS code, when the flag is triggered the possible $Z$ errors spread back to the data are $IIII$, $ZIII$, $IIZZ$ and $IIIZ$.  
The error-correction schemes given in~\cite{ChaoReichardt18flags} are adaptive; given that the flag was triggered, additional $X$ stabilizer measurements are made to distinguish these four possibilities.  

\begin{figure}
\centering
\includegraphics[scale=.769]{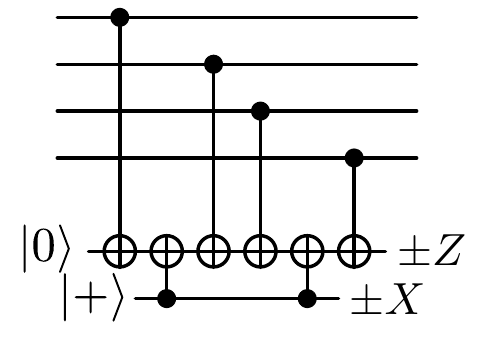}
\caption{A flagged circuit for measuring a weight-four $Z$ stabilizer.  A $Z$ fault on the syndrome qubit, initialized to~$\ket 0$, can spread to a weight-two data error, but then will also be detected by the $X$ basis measurement of the flag qubit.} \label{f:fourqubitsyndromeflagged}
\end{figure}

However, flag fault-tolerant error correction can also be nonadaptive.  For example, for the $\llbracket 7,1,3 \rrbracket$ code, consider the following sequence of ten stabilizer measurements: 
\begin{equation} \label{e:713tenstabilizersflagged}
\place{\footnotesize \text{measure}}{270mu}{22pt}
\place{\footnotesize \text{with flags}}{270mu}{12pt}
\place{\footnotesize \text{measure with}}{270mu}{-50pt}
\place{\footnotesize \text{cat states}}{270mu}{-60pt}
\place{$\left.\begin{array}{c} \\[3.25cm] \end{array}\right\}$}{210mu}{15pt}
\place{$\left.\begin{array}{c} \\[.3cm] \end{array}\right\}$}{210mu}{-57pt}
\place{\footnotesize 1}{116mu}{24pt}\place{\footnotesize 3}{144mu}{24pt}\place{\footnotesize 2}{171mu}{24pt}\place{\footnotesize 4}{198mu}{24pt}
\place{\footnotesize 1}{61mu}{10pt}\place{\footnotesize 3}{89mu}{10pt}\place{\footnotesize 2}{171mu}{10pt}\place{\footnotesize 4}{198mu}{10pt}
\place{\footnotesize 1}{34mu}{-4pt}\place{\footnotesize 2}{89mu}{-4pt}\place{\footnotesize 3}{144mu}{-4pt}\place{\footnotesize 4}{198mu}{-4pt}
\place{\footnotesize 1}{61mu}{10pt}\place{\footnotesize 3}{89mu}{10pt}\place{\footnotesize 2}{171mu}{10pt}\place{\footnotesize 4}{198mu}{10pt}
\place{\footnotesize 1}{116mu}{-17pt}\place{\footnotesize 3}{144mu}{-17pt}\place{\footnotesize 2}{171mu}{-17pt}\place{\footnotesize 4}{198mu}{-17pt}
\place{\footnotesize 1}{61mu}{-31pt}\place{\footnotesize 3}{89mu}{-31pt}\place{\footnotesize 2}{171mu}{-31pt}\place{\footnotesize 4}{198mu}{-31pt}
\begin{array}{r c c c c c c c}
&I&I&I&Z&Z&Z&Z\\
&I&Z&Z&I&I&Z&Z\\
&Z&I&Z&I&Z&I&Z\\[.1cm]
&I&I&I&X&X&X&X\\
&I&X&X&I&I&X&X\\
&X&I&X&I&X&I&X\\
&I&I&I&X&X&X&X\\
&I&X&X&I&I&X&X\\[.1cm]
&I&I&I&Z&Z&Z&Z\\
&I&Z&Z&I&I&Z&Z
\end{array}
\qquad\qquad
\end{equation}
The first three $Z$ stabilizer measurements can all be made using flags, because they are followed by a full round of $X$ error correction.  In fact, though, the five $X$ stabilizer measurements can also be made using flags, provided that the interactions are made in the specified order, because the final two $Z$ measurements are enough to diagnose the data error when a flag is triggered.  (For example, should either of the $IIIXXXX$ measurements be flagged, the possible errors $X_4, X_7$ and $X_5 X_7$ are correctable using the final two $Z$ measurements.  Should the $XIXIXIX$ measurement be flagged, the possible $X_1$ error is not detected, but this is okay for fault tolerance.)  However, the last two $Z$ measurements cannot be made using flags, because if a flag were triggered there would be no subsequent $X$ measurements to diagnose the error.    

It is important to develop error-correction schemes, nonadaptive or adaptive, that are both fast---requiring few rounds of interaction with the data---and efficient in the sense of using simple cat states or other efficiently prepared ancilla states.  Combining flag fault tolerance with standard Shor-style syndrome extraction, as in Eq.~\eqnref{e:713tenstabilizersflagged}, is a step in this direction, although its effectiveness will depend on implementation details such as geometric locality constraints.


\newcommand{\etalchar}[1]{$^{#1}$}

\end{document}
\fi